\let\proof\relax
\def\BibTeX{{\rm B\kern-.05em{\sc i\kern-.025em b}\kern-.08em
    T\kern-.1667em\lower.7ex\hbox{E}\kern-.125emX}}
\begin{document}
\newtheorem{remark}{Remark}
\newtheorem{theorem}{Theorem}
\newtheorem{corollary}{Corollary}
\newtheorem{lemma}{Lemma}
\newtheorem{definition}{Definition}
\newtheorem{assumption}{Assumption}
\renewcommand{\proof}{\textbf{Proof}.}

\newcommand{\KP}[1]{{\color{blue} #1}}
\newcommand{\TY}[1]{{\color{red} #1}}
\definecolor{gray}{RGB}{128,128,128}

\title{\LARGE \bf One-Point Sampling for Distributed Bandit Convex Optimization with Time-Varying Constraints}

\author{Kunpeng Zhang$^{1}$, Lei Xu$^{2}$,~\IEEEmembership{Member,~IEEE}, Xinlei Yi$^{3}$, Guanghui Wen$^{4}$,~\IEEEmembership{Senior Member,~IEEE}, \\Lihua Xie$^{5}$,~\IEEEmembership{Fellow,~IEEE}, Tianyou Chai$^{1}$,~\IEEEmembership{Life Fellow,~IEEE}, and Tao Yang$^{1}$,~\IEEEmembership{Senior Member,~IEEE}
\thanks{This work was supported by the National Key Research and Development Program of China under Grant 2022YFB3305904 and the National Natural Science Foundation of China under Grant 62133003. (Corresponding author: Tao Yang.)}
\thanks{$^{1}$K. Zhang, T. Chai and T. Yang are with the State Key Laboratory of Synthetical Automation for Process Industries, Northeastern University, Shenyang 110819, China {\tt\small 2110343@stu.neu.edu.cn; \{tychai; yangtao\}@mail.neu.edu.cn}}
\thanks{$^{2}$L. Xu is with the Department of Mechanical Engineering, University of Victoria, Victoria, BC V8W 2Y2, Canada {\tt\small leix@uvic.ca}}
\thanks{$^{3}$X. Yi is with Shanghai Institute of Intelligent Science and Technology, National Key Laboratory of Autonomous Intelligent Unmanned Systems, and Frontiers Science Center for Intelligent Autonomous Systems, Ministry of Education, Shanghai 201210, China {\tt\small xinleiyi@tongji.edu.cn}}
\thanks{$^{4}$Guanghui Wen is with the Department of Mathematics, Southeast University, Nanjing 210096, China {\tt\small  ghwen@seu.edu.cn}}
\thanks{$^{5}$L. Xie is with the School of Electrical and Electronic Engineering, Nanyang Technological University, 50 Nanyang Avenue,  Singapore 639798 {\tt\small elhxie@ntu.edu.sg}}
}

\maketitle

\begin{abstract}
This paper considers the distributed bandit convex optimization problem with time-varying constraints. In this problem, the global loss function is the average of all the local convex loss functions, which are unknown beforehand.
Each agent iteratively makes its own decision subject to time-varying inequality constraints which can be violated but are fulfilled in the long run.
For a uniformly jointly strongly connected time-varying directed graph, a distributed bandit online primal--dual projection algorithm with one-point sampling is proposed.
We show that sublinear dynamic network regret and network cumulative constraint violation are achieved if the path--length of the benchmark also increases in a sublinear manner. In addition, an $\mathcal{O}({T^{3/4 + g}})$ static network regret bound and an $\mathcal{O}( {{T^{1 - {g}/2}}} )$ network cumulative constraint violation bound are established, where $T$ is the total number of iterations and $g  \in ( {0,1/4} )$ is a trade-off parameter.
Moreover, a reduced static network regret bound $\mathcal{O}( {{T^{2/3 + 4g /3}}} )$ is established for strongly convex local loss functions.
Finally, a numerical example is presented to validate the theoretical results.
\end{abstract}

\begin{IEEEkeywords}
Bandit convex optimization, distributed primal--dual algorithm, one-point sampling, time-varying constraints.
\end{IEEEkeywords}

\section{Introduction}
\label{sec:introduction}
\IEEEPARstart{O}{ver} the past decades, the bandit convex optimization (BCO) framework has garnered substantial interest owing to its wide applications including real-time routing and advertisement placement \cite{Li2023}.
The framework is a sequential decision making process spanning $T$ iterations in dynamic environments \cite{Hazan2016a}. Specifically, at iteration $t$, a decision maker selects ${x_t}$ from a convex set $\mathbb{X}$ in Euclidean space. Following that, the value of a convex loss function ${l_t} :\mathbb{X} \to \mathbb{R}$ at ${x_t}$ is revealed, where $\mathbb{R}$ denotes the set of all real numbers. Accordingly, the decision maker suffers a loss ${l_t}( {{x_t}} )$.
The objective is to minimize the cumulative loss across $T$ iterations, denoted by $\sum\nolimits_{t = 1}^T {{l_t}( {{x_t}} )}$.
In general, static regret is leveraged to measure an algorithm given rise by the framework, denoted by\par\nobreak\vspace{-10pt}
\begin{small}
\begin{flalign}
\nonumber
\sum\limits_{t = 1}^T {{l_t}( {{x_t}} )}  - \mathop {\min }\limits_{_{x \in \mathbb{X}}} \sum\limits_{t = 1}^T {{l_t}( x )},
\end{flalign}
\end{small}%
which measures the difference of the cumulative loss between the decision sequence induced by the algorithm and the optimal static decision with the benefit of hindsight.

Various sampling-based bandit online projection algorithms are proposed to solve the BCO problem, see, e.g., \cite{Flaxman2005, Dani2008, Abernethy2009, Saha2011, Hazan2014, Duchi2015, Yang2016a, Shamir2017, Agarwal2010, Tatarenko2018, Shames2019}.
For example,
by modifying the gradient-based online projection algorithm in \cite{Zinkevich2003}, the authors of \cite{Flaxman2005} develop a bandit online projection algorithm with one-point sampling, i.e., the decision maker samples the loss function at only one point
at each iteration, and an ${\cal O}({T^{3/4}})$ static regret bound is established for convex loss functions.
For strongly convex functions, the authors of \cite{Saha2011} establish an $\mathcal{O}( {{T^{2/3}}} )$ static regret bound.
By using the two-point sampling strategy, i.e., the decision maker samples the loss function at two points
at each iteration, the authors of \cite{Shamir2017} establish an $\mathcal{O}( {{T^{1/2}}} )$ static regret bound for convex loss functions.
By using the multi-point sampling strategy, the authors of \cite{Agarwal2010} establish an $\mathcal{O}\big( {\log ( T )} \big)$  static regret bound for strongly convex loss functions.

Note that in the aforementioned studies, the convex set~$\mathbb{X}$ is a simple closed set, e.g., a box or ball.
However, inequality constraints commonly exist in various applications, see, e.g., \cite{Mahdavi2012, Yuan2018}.
To cope with this scenario, the authors of \cite{Mahdavi2012} characterize the convex set $\mathbb{X}$ as the intersection of a simple closed set and an static inequality constraint set. The challenge is that the projection operation on the intersection at each iteration imposes severe computation burden.
To reduce the burden, long term constraints are considered, where the inequality constraints are permitted to be violated but are fulfilled in the long run.
To measure cumulative violation of the inequality constraints, constraint violation, is given by\par\nobreak\vspace{-10pt}
\begin{small}
\begin{flalign}
\Big\| {{{\Big[ {\sum\limits_{t = 1}^T {c( {{x_t}} )} } \Big]_ +} }} \Big\|, \label{introduction-CV-eq1}
\end{flalign}
\end{small}%
where $\|  \cdot  \|$ denotes Euclidean norm for vectors, $[  \cdot  ]_ + $ is the projection onto the nonnegative space, $c:{\mathbb{X}} \to {\mathbb{R}^m}$ is the static constraint function, and $m$ is a positive integer.
Based on the idea of long term constraints, the authors of \cite{Chen2019,Cao2019} study the BCO problem with time-varying inequality constraints.
The constraint violation metric \eqref{introduction-CV-eq1} allows the compensation of violated constraints at one iteration by strictly satisfied constraints at other iterations. That is inapplicable to some applications where constraints have no cumulative nature, such as safety-critical applications \cite{Guo2022}. To tackle this challenge, the authors of \cite{Yi2021a} propose a stricter form of constraint violation metric, namely cumulative constraint violation (CCV),\par\nobreak\vspace{-10pt}
\begin{small}
\begin{flalign}
\sum\limits_{t = 1}^T {\| {{{[ {c_t( {{x_t}} )} ]}_ + }} \|}, \label{introduction-CCV-eq2}
\end{flalign}
\end{small}%
where $c_t:{\mathbb{X}} \to {\mathbb{R}^m}$ is the constraint function at iteration~$t$. The constraint violation metric \eqref{introduction-CCV-eq2} implies that all the violations of inequality constraints are taken into consideration.

The aforementioned studies focus on centralized BCO, which encounter performance limitations, e.g., single point of the failure, heavy communication and computation overhead \cite{Nedic2018, Yang2019}.
To deal with the limitations, distributed bandit convex optimization (DBCO) problem is studied, see \cite{Wang2020, Cao2021, Li2021a, Yuan2021a, Yuan2021b, Tu2022, Patel2022, Xiong2023}.
In this problem, the loss function at each iteration is decomposed across a network of agents, and local decisions of the agents are the copies of global decisions.
Recently, the authors of \cite{Yuan2022} extend this problem by incorporating long term constraints, where inequality constraints are static. Additionally, they propose a distributed bandit online primal--dual projection algorithm by using the one-point sampling strategy. They establish an $\mathcal{O}\big( {{T^{\max \{ {g, 1 - g/3} \}}}} \big)$ network regret bound and an $\mathcal{O}( {{T^{1 - g/2}}} )$ network CCV bound with $g \in ( {0,1} )$ for convex local loss and constraint functions. Furthermore, for strongly convex local loss functions, an $\mathcal{O}\big( {{T^{2/3}}\log ( T )} \big)$ network regret bound and an $\mathcal{O}\big( {\sqrt {T\log ( T )} } \big)$ network CCV bound are established.
Different from the considered DBCO problem with long term constraints in \cite{Yuan2022}, the authors of \cite{Yi2023} consider more challenging time-varying constraints where inequality constraints are time-varying rather than static. Moreover, they propose a distributed bandit online primal--dual projection algorithm by using the two-point sampling strategy, and establish a reduced $\mathcal{O}\big( {{T^{\max \{ {g, 1 - g} \}}}} \big)$ network regret bound and an $\mathcal{O}( {{T^{1 - g/2}}} )$ network CCV bound for convex local loss functions, and a reduced $\mathcal{O}( {{T^g}} )$ network regret bound and an $\mathcal{O}( {{T^{1 - g/2}}} )$ network CCV bound for strong convex local loss functions, where $g \in ( {0,1} )$.
However, the two-point sampling strategy increases the sampling, computational, and memory requirements. Moreover, algorithms that use the two-point sampling strategy are generally easier to analyze in terms of network regret and CCV bounds compared to those utilizing the one-point strategy, as the former algorithms have more information.
It is worth noting that the one-point sampling strategy is more naturally suited to the BCO framework in~\cite{Hazan2016a}, as each agent selects a single decision at each iteration.

In time-varying inequality constraint setting, a distributed bandit online algorithm with one-point sampling remains missing. That motivates this paper. Therefore, we consider the DBCO problem with time-varying constraints where inequality constraints are time-varying and unknown in advance.
The contributions are distilled into the following. \vspace{-1pt}
\begin{itemize}
\item[$\bullet$]
By integrating one-point sampling strategy with the algorithm with two-point sampling in \cite{Yi2023}, we propose a distributed bandit online primal--dual projection algorithm with one-point sampling.
Compared to the algorithms with one-point sampling in \cite{Flaxman2005, Saha2011, Chen2019, Yuan2022}, the proposed algorithm does not require knowledge of the total number $T$ of iterations.
More importantly, we consider time-varying inequality constraints and one-point bandit feedback for local constraint functions, while \cite{Flaxman2005, Saha2011} do not consider inequality constraints, and the algorithms in \cite{Chen2019, Yuan2022} use full-information feedback for time-varying local constraint functions and static local constraint functions, respectively. Note that the proposed algorithm is more challenging to analyze the performance than the algorithm in \cite{Yuan2022} and the algorithm with two-point sampling in \cite{Yi2023} due to one-point sampling for both local loss and constraint functions, which will be explained in detail in Remark~2. Additionally, the update rules of the proposed algorithm are different from the algorithms in~\cite{Yuan2022, Yi2023}.
\item[$\bullet$]
For convex local loss functions, sublinear dynamic network regret and network CCV are achieved if the path--length of the benchmark, i.e., the accumulated dynamic variation of the optimal decision sequence, also increases in a sublinear manner. Moreover, an $\mathcal{O}({T^{3/4 + g}})$ static network regret bound and an $\mathcal{O}( {{T^{1 - {g}/2}}} )$ network CCV bound are established, where $g  \in ( {0,1/4} )$.
In the absence of inequality constraints, an $\mathcal{O}({T^{3/4}})$ static network regret bound is recovered, which is the same as those established by the centralized algorithms with one-point sampling in \cite{Flaxman2005, Chen2019} and the distributed algorithm with one-point sampling in \cite{Yuan2022}. The bound is larger than that established by the distributed algorithm with two-point sampling in \cite{Yi2023}, which is reasonable since the algorithm in \cite{Yi2023} uses more information.
\item[$\bullet$]
For strongly convex local loss functions, reduced dynamic network regret and network CCV are achieved. Moreover, the static network regret bound is reduced to $\mathcal{O}( {{T^{2/3 + 4g /3}}} )$ with ${g} \in ( {0, 1/4} )$.
In the absence of inequality constraints, an $\mathcal{O}({T^{2/3}})$ static network regret bound is recovered, which is the same as the static regret bound established by the centralized algorithm with one-point sampling in \cite{Saha2011}. Again, the bound is larger than that established by the distributed algorithm with two-point sampling in \cite{Yi2023}, which is reasonable since the algorithm in \cite{Yi2023} uses more information.
\end{itemize}

A detailed comparison to related studies is provided in TABLE~I,
where we only present static regret and (cumulative) constraint violation bounds for the sake of clarity.

The remainder of this paper is presented in the following order. Section~II introduces the problem formulation. Section~III proposes the algorithm and evaluates its performance. Section~IV provides a numerical example. Finally, Section~V concludes the paper, and all proofs can be found in Appendix.

\begin{table*}[t]
\centering
\caption{Comparison of this paper to related studies.}
\label{table1}
\begin{tabular*}{\hsize}{@{}@{\extracolsep{\fill}}c|c|c|c|c|c|c|c|c@{}}
\Xcline{1-7}{1pt}
\multicolumn{2}{c|}{\multirow{2}{*}{Reference}} &\multirow{2}{*}{Problem type}& \multirow{2}{*}{Constraint type} & \multirow{2}{*}{Information feedback} & \multicolumn{2}{c}{\multirow{2}{*}{Regret and (cumulative) constraint violation bounds}}\\
\multicolumn{2}{c|}{} & & &  & \multicolumn{2}{c}{}  \\
\cline{1-7}
\multicolumn{2}{c|}{\multirow{3}{*}{\cite{Flaxman2005}}} &{\multirow{3}{*}{\makecell{Centralized \\ and convex}}}	&{\multirow{3}{*}{${c_t}( x ) \equiv {\mathbf{0}_m}$}}	&{\multirow{3}{*}{\makecell{One-point samping \\ for ${l_t}$}}}	&\multicolumn{2}{c}{{\multirow{3}{*}{$\mathcal{O}( {{T^{3/4}}} )$}}} \\
\multicolumn{2}{c|}{} & & &  & \multicolumn{2}{c}{}  \\
\multicolumn{2}{c|}{} & & &  & \multicolumn{2}{c}{}  \\
\cline{1-7}
\multicolumn{2}{c|}{\multirow{3}{*}{\cite{Saha2011}}} &{\multirow{3}{*}{\makecell{Centralized \\ and strongly convex}}}	&{\multirow{3}{*}{${c_t}( x ) \equiv {\mathbf{0}_m}$}}	&{\multirow{3}{*}{\makecell{One-point samping \\ for ${l_t}$}}}	&\multicolumn{2}{c}{{\multirow{3}{*}{$\mathcal{O}( {{T^{2/3}}} )$}}} \\
\multicolumn{2}{c|}{} & & &  & \multicolumn{2}{c}{}  \\
\multicolumn{2}{c|}{} & & &  & \multicolumn{2}{c}{}  \\
\cline{1-7}
\multicolumn{2}{c|}{\multirow{3}{*}{\cite{Chen2019}}} &{\multirow{3}{*}{\makecell{Centralized \\ and convex}}}	&{\multirow{3}{*}{\makecell{${c_t}( x ) \le {\mathbf{0}_m}$ \\ and Slater's condition}}}	&{\multirow{3}{*}{\makecell{One-point samping \\ for ${l_t}$ and $\partial {c_t}$}}}	&\multicolumn{2}{c}{{\multirow{3}{*}{\makecell{$\mathcal{O}({T^{3/4}})$  and $\mathcal{O}( {{T^{3/4}}} )$}}}} \\
\multicolumn{2}{c|}{} & & &  & \multicolumn{2}{c}{}  \\
\multicolumn{2}{c|}{} & & &  & \multicolumn{2}{c}{}  \\
\cline{1-7}
\multicolumn{2}{c|}{\multirow{3}{*}{\cite{Cao2019}}} &{\multirow{3}{*}{\makecell{Centralized \\ and convex}}}	&{\multirow{3}{*}{\makecell{${c_t}( x ) \le {\mathbf{0}_m}$ }}}	&{\multirow{3}{*}{\makecell{Two-point samping \\ for ${l_t}$ and $ {c_t}$}}}	&\multicolumn{2}{c}{{\multirow{3}{*}{\makecell{$\mathcal{O}({\sqrt T })$  and $\mathcal{O}( {{T^{3/4}}} )$}}}} \\
\multicolumn{2}{c|}{} & & &  & \multicolumn{2}{c}{}  \\
\multicolumn{2}{c|}{} & & &  & \multicolumn{2}{c}{}  \\
\cline{1-7}
\multicolumn{2}{c|}{\multirow{6}{*}{\cite{Yuan2022}}} &{\multirow{3}{*}{\makecell{Distributed \\ and convex}}}	&{\multirow{6}{*}{\makecell{$c( x ) \le {\mathbf{0}_m}$}}}	&{\multirow{6}{*}{\makecell{One-point samping \\ for ${l_{i,t}}$ and $\partial c$}}} 	&\multicolumn{2}{c}{{\multirow{3}{*}{\makecell{$\mathcal{O}( {{T^{\max \{ {1 - g/3,g} \}}}})$  and $\mathcal{O}( {{T^{1 - g/2}}} )$, \\ where $g \in ( {0,1} )$}}}} \\
\multicolumn{2}{c|}{} & & &  & \multicolumn{2}{c}{}  \\
\multicolumn{2}{c|}{} & & &  & \multicolumn{2}{c}{}  \\
\cline{3-3}
\cline{6-7}
\multicolumn{2}{c|}{} &{\multirow{3}{*}{\makecell{Distributed \\ and strongly convex}}}	& & &\multicolumn{2}{c}{{\multirow{3}{*}{\makecell{$\mathcal{O}\big( {{T^{2/3}}\log ( T )} \big)$  and $\mathcal{O}\big( {\sqrt {T\log ( T )} } \big)$}}}} \\
\multicolumn{2}{c|}{} & & &  & \multicolumn{2}{c}{}  \\
\multicolumn{2}{c|}{} & & &  & \multicolumn{2}{c}{}  \\
\cline{1-7}
\cline{1-7}
\multicolumn{2}{c|}{\multirow{6}{*}{\cite{Yi2023}}} &{\multirow{3}{*}{\makecell{Distributed \\ and convex}}}	&{\multirow{6}{*}{\makecell{${c_{i,t}}( x ) \le {\mathbf{0}_m}$}}}	&{\multirow{6}{*}{\makecell{Two-point samping \\ for ${l_{i,t}}$ and ${c_{i,t}}$}}} 	&\multicolumn{2}{c}{{\multirow{3}{*}{\makecell{$\mathcal{O}\big( {{T^{\max \{ {g, 1 - g} \}}}} \big)$  and $\mathcal{O}( {{T^{1 - g/2}}} )$, \\ where $g \in ( {0,1} )$}}}} \\
\multicolumn{2}{c|}{} & & &  & \multicolumn{2}{c}{}  \\
\multicolumn{2}{c|}{} & & &  & \multicolumn{2}{c}{}  \\
\cline{3-3}
\cline{6-7}
\multicolumn{2}{c|}{} &{\multirow{3}{*}{\makecell{Distributed \\ and strongly convex}}}	& & &\multicolumn{2}{c}{{\multirow{3}{*}{\makecell{$\mathcal{O}( {{T^g}} )$  and $\mathcal{O}( {{T^{1 - g/2}}} )$, \\ where $g \in ( {0,1} )$}}}} \\
\multicolumn{2}{c|}{} & & &  & \multicolumn{2}{c}{}  \\
\multicolumn{2}{c|}{} & & &  & \multicolumn{2}{c}{}  \\
\cline{1-7}
\multicolumn{2}{c|}{\multirow{8}{*}{This paper}} &{\multirow{4}{*}{\makecell{Distributed \\ and convex}}}	&{\multirow{8}{*}{${c_{i,t}}( x ) \le {\mathbf{0}_m}$}}	&{\multirow{8}{*}{\makecell{One-point samping \\ for ${l_{i,t}}$ and ${c_{i,t}}$}}}		&\multicolumn{2}{c}{{\multirow{4}{*}{\makecell{${\cal O}( {{T^{3/4 + g}}} )$ and $\mathcal{O}( {{T^{1 - {g}/2}}} )$, \\ where ${g} \in ( {0, 1/4} )$}}}} \\
\multicolumn{2}{c|}{} & & &  & \multicolumn{2}{c}{}  \\
\multicolumn{2}{c|}{} & & &  & \multicolumn{2}{c}{}  \\
\multicolumn{2}{c|}{} & & &  & \multicolumn{2}{c}{}  \\
\cline{3-3}
\cline{6-7}
\multicolumn{2}{c|}{} &{\multirow{4}{*}{\makecell{Distributed \\ and strongly convex}}}	& & &\multicolumn{2}{c}{{\multirow{4}{*}{\makecell{${\cal O}( {{T^{2/3 + 4g/3}}} )$ and $\mathcal{O}( {{T^{1 - {g}/2}}} )$, \\ where ${g} \in ( {0, 1/4} )$}}}} \\
\multicolumn{2}{c|}{} & & &  & \multicolumn{2}{c}{}  \\
\multicolumn{2}{c|}{} & & &  & \multicolumn{2}{c}{}  \\
\multicolumn{2}{c|}{} & & &  & \multicolumn{2}{c}{}  \;\;\;\;\;\;\;\;\;\;\;\;\;\;\;\;\;\;\;\;\;\;\;\;\;\;\;\;\;\;\;\;\;\;\;\;\;\;\;\;\;\;\;\;\;\;\;\;\;\;\;\;\;\;\;\;\;\;\;
\;\;\;\;\;\;\;\;\;\;\;\;\;\;\;\;\;\;\;\;\;\;\;\;\;\;\;\;\;\;\;\;\;\;\;\\
\Xcline{1-7}{1pt}
\end{tabular*}
\label{table_MAP}
\end{table*}

\textbf{Notations:} ${\mathbb{R}^p}$ and $\mathbb{R}_ + ^p$ are $p$-dimensional vector set and its nonnegative orthant, respectively. ${\mathbb{N}_ + }$ is the set of all positive integers. ${\|  \cdot  \|_1}$ denotes the $1$-norm for vectors. For a vector $x$, ${x^T}$ is its transpose. $[ m ]$ is the set $\{ {1, \cdot  \cdot  \cdot ,m} \}$ with $m$ being a positive integer. $\mathrm{col}( {q_1}, \cdot  \cdot  \cdot ,{q_n} )$ denotes the concatenated column vector of ${q_i} \in {\mathbb{R}^{{m_i}}}$ for $i \in [ n ]$. ${\mathbb{S}^p}$ denotes the unit sphere. ${\mathbb{B}^p}$ denotes the unit ball. For the vectors $P$ and $Q$, $\langle {P,Q} \rangle $ denotes their inner product. $\otimes$~denotes Kronecker product. ${\mathbf{0}_m} \in {\mathbb{R}^m}$ is the vector whose components are all $0$. ${\mathcal{P}_{\mathbb{K}}}( P ) = \arg {\min _{Q \in {\mathbb{K}}}}{\| {P - Q} \|^2}$ with $\mathbb{K} \subseteq {\mathbb{R}^p}$ and $P \in {\mathbb{R}^p}$. $[  x  ]_+$ denotes ${\mathcal{P}_{\mathbb{R}_ + ^p}}( x )$.
For $l(x):{\mathbb{R}^p} \to \mathbb{R}$, $\partial l( x ) \in {\mathbb{R}^p}$ and $\partial {[ {l( x )} ]_ + } \in {\mathbb{R}^p}$ are the (sub)gradients of $l(x)$ and ${[ l(x) ]_ + }$, respectively.
For $l(x) = {[ {{l_1(x)}, \cdot  \cdot  \cdot ,{l_n(x)}} ]^T}:{\mathbb{R}^p} \to {\mathbb{R}^n}$, $\partial l( x ) = {[ {\partial {l_1}( x ), \cdot  \cdot  \cdot ,\partial {l_n}( x )} ]^T} \in {\mathbb{R}^{p \times n}}$ and $\partial {[ {l( x )} ]_ + } = {\big[ {\partial {{[ {{l_1}( x )} ]}_ + }, \cdot  \cdot  \cdot ,\partial {{[ {{l_n}( x )} ]}_ + }} \big]^T} \in {\mathbb{R}^{p \times n}}$ are the (sub)gradients of $l(x)$ and ${[ l ]_ + }(x)$, respectively.

\section{Problem Formulation}
Consider the DBCO problem with time-varying constraints. In this problem, a network of $n$ agents is modeled by a time-varying directed graph ${\mathcal{G}_t} = ( {\mathcal{V},{\mathcal{E}_t}} )$ with the agent set $\mathcal{V} = [ n ]$ and the edge set ${\mathcal{E}_t} \subseteq \mathcal{V} \times \mathcal{V}$ at iteration~$t$. $( {j,i} ) \in {\mathcal{E}_t}$ indicates that agent $i$ can receive information from agent $j$ at iteration~$t$. In addition, at iteration $t$, an adversary first erratically selects $n$ local convex loss functions $\{ {{l_{i,t}}:\mathbb{X} \to \mathbb{R}} \}$ and $n$ local convex constraint functions $\{ {{c_{i,t}}:\mathbb{X} \to {\mathbb{R}^{{m_i}}}} \}$ for ${\rm{ }}i \in [n]$, where $\mathbb{X} \subseteq {\mathbb{R}^p}$ is known convex set to the agents, and both ${m_i}$ and $p$ are positive integers. After that, all agents exchange information with their neighbors. Then, the agents simultaneously select their local decisions $\{ {{x_{i,t}} \in \mathbb{X}} \}$ without prior access to $\{ {{l_{i,t}}} \}$ and $\{ {{c_{i,t}}} \}$. Moreover, the values of ${l_{i,t}}( {{x_{i,t}}} )$ and ${c_{i,t}}( {{x_{i,t}}} )$ are privately revealed to agent $i$. The objective is to choose the decision sequence $\{ {{x_{i,t}}} \}$ such that both network regret \par\nobreak\vspace{-10pt}
\begin{small}
\begin{flalign}
{\rm{Net}\mbox{-}\rm{Reg}}( {\{ {{x_{i,t}}} \},{y_{[ T ]}}} ) := \frac{1}{n}\sum\limits_{i = 1}^n {\sum\limits_{t = 1}^T {{l_t}( {{x_{i,t}}} )} }  - \sum\limits_{t = 1}^T {{l_t}( {{y_t}} )} \label{regret-eq1}
\end{flalign}
\end{small}%
and network CCV\par\nobreak\vspace{-10pt}
\begin{small}
\begin{flalign}
{\rm{Net}\mbox{-}\rm{CCV}}( {\{ {{x_{i,t}}} \}} ) := \frac{1}{n}\sum\limits_{i = 1}^n {\sum\limits_{t = 1}^T {\| {{{[ {{c_t}( {{x_{i,t}}} )} ]}_ + }} \|} } \label{CCV-eq2}
\end{flalign}
\end{small}%
increase in a sublinear manner, where ${y_{[T]}} = ( {{y_1}, \cdot  \cdot  \cdot ,{y_T}} )$ is a benchmark, ${l_t}( x ) = \frac{1}{n}\sum\nolimits_{j = 1}^n {{l_{j,t}}( x )} $ and ${c_t}( x ) = {\rm{col}}\big( {{c_{1,t}}( x ), \cdot  \cdot  \cdot ,{c_{n,t}}( x )} \big) \in {\mathbb{R}^m}$ are the global loss and constraint functions at iteration $t$, respectively, and $m = \sum\nolimits_{i = 1}^n {{m_i}} $.

In this paper, we consider dynamic and static network regret, i.e., ${\rm{Net}\mbox{-}\rm{Reg}}( {\{ {{x_{i,t}}} \},{\check{x}_{[ T ]}^ *}} )$ and ${\rm{Net}\mbox{-}\rm{Reg}}( {\{ {{x_{i,t}}} \},{\hat x_{[ T ]}^*}} )$. For dynamic network regret, $\check{x}_{[ T ]}^ *  = ( {\check{x}_1^ * , \cdots,\check{x}_T^ * } )$ is the optimal decision sequence, where $\check{x}_t^ * \in \mathbb{X}$ is the minimizer of ${l_t}( x )$ subject to ${c_t}( x ) \le {\mathbf{0}_m}$ for $t \in [T]$.
For static network regret, $\hat x_{[ T ]}^ *  = ( {\hat{x}^ * , \cdots,\hat{x}^ * } )$ is the optimal static decision sequence, where $\hat{x}^ * \in \mathbb{X}$ is the minimizer of $\sum\nolimits_{t = 1}^T {{l_t}( x )} $ subject to ${c_t}( x ) \le {\mathbf{0}_m}$ for $t \in [T]$.

Throughout this paper, the following common assumptions are made for the local loss and constraint functions of the agents, and the graph $\mathcal{G}_t$, see, e.g., \cite{Yuan2022, Yi2023}.
Moreover, the definition of strongly convex functions is given.
\begin{assumption}
\;\;(a) Centered at the origin, the convex set $\mathbb{X}$ contains the ball of radius $r( \mathbb{X} )$ and is contained in the ball of radius $R( \mathbb{X} )$, i.e.,\par\nobreak\vspace{-10pt}
\begin{small}
\begin{flalign}
r( \mathbb{X} ){\mathbb{B}^p} \subseteq \mathbb{X} \subseteq R( \mathbb{X} ){\mathbb{B}^p}. \label{ass-eq1}
\end{flalign}
\end{small}
\;\;(b) For any $x \in \mathbb{X}$, there exists a positive constant ${F_1}$ such that the convex local loss function ${l_{i,t}}$ and constraint function~${c_{i,t}}$ for any $i \in [n]$, $t \in {\mathbb{N}_ + }$ satisfy\par\nobreak\vspace{-10pt}
\begin{small}
\begin{subequations}
\begin{flalign}
&| {{l_{i,t}}( x )} | \le {F_1}, \label{ass-eq2a}\\
&\| {{c_{i,t}}( x )} \| \le {F_1}. \label{ass-eq2b}
\end{flalign}
\end{subequations}
\end{small}
\;\;(c) For any $i \in [n]$, $t \in {\mathbb{N}_ + }$, the subgradients $\partial {f_{i,t}}( x )$ and $\partial {g_{i,t}}( x )$ exist, moreover, for any $x \in \mathbb{X}$, there exists a positive constant ${F_2}$ such that\par\nobreak\vspace{-10pt}
\begin{small}
\begin{subequations}
\begin{flalign}
&\| {\partial {l_{i,t}}( x )} \| \le {F_2}, \label{ass-eq3a}\\
&\| {\partial {c_{i,t}}( x )} \| \le {F_2}. \label{ass-eq3b}
\end{flalign}
\end{subequations}
\end{small}
\end{assumption}
\begin{assumption}
The graph $\mathcal{G}_t$ for any $t \in {\mathbb{N}_ + }$ satisfies that

\noindent
\;\;(a) its mixing matrix ${W_t}$ is doubly stochastic, i.e., for any $i,j \in [ n ]$, ${\sum\nolimits_{i = 1}^n {[ {{W_t}} ]} _{ij}} = {\sum\nolimits_{j = 1}^n {[ {{W_t}} ]} _{ij}} = 1$;

\noindent
\;\;(b) ${[ {{W_t}} ]_{ij}} \ge \omega$ if $( {j,i} ) \in {\mathcal{E}_t}$ or $i = j$, where $\omega  \in ( {0,1} )$;

\noindent
\;\;(c) the time-varying directed graph $( {\mathcal{V}, \cup _{s = 0}^{B - 1}{\mathcal{E} _{t + s}}} )$ is strongly connected, where integer $B > 0$.
\end{assumption}
\begin{definition}
For any $i \in [n]$, $t \in {\mathbb{N}_ + }$, the local loss function $ {l_{i,t}} $ is strongly convex with the parameter $\mu  > 0$, i.e., for any $x,y \in \mathbb{X}$, \par\nobreak\vspace{-10pt}
\begin{small}
\begin{flalign}
{l_{i,t}}( x ) \ge {l_{i,t}}( y ) + \langle {x - y,\partial {l_{i,t}}( y )} \rangle  + \frac{\mu }{2}\| {x - y} \|^2. \label{def-eq1}
\end{flalign}
\end{small}%
\end{definition}

\section{Proposed Distributed Algorithm and Performance Analysis}
This section proposes a distributed bandit online primal--dual projection algorithm with one-point sampling. In addition, we establish network regret and CCV bounds for convex and strongly convex local loss functions, respectively.

\subsection{Proposed Distributed Algorithm}
In the considered DBCO problem, for $i \in [ n ]$, $t \in [ T ]$, agent~$i$ only knows the values of its convex local loss function ${l_{i,t}}$ and constraint function ${c_{i,t}}$ at $x_{i,t}$. Inspired by the one-point stochastic gradient approximations in \cite{Flaxman2005, Yi2021b}, the subgradients $\partial {l_{i,t}}( {{e_{i,t}}} )$ and $\partial {[ {{c_{i,t}}( {{e_{i,t}}} )} ]_ + }$ are estimated by\par\nobreak\vspace{-10pt}
\begin{small}
\begin{flalign}
\hat \partial {l_{i,t}}( {{e_{i,t}}} ) := \frac{p}{{{\delta _t}}}\big( {{l_{i,t}}( {{e_{i,t}} + {\delta _t}{u_{i,t}}} )} \big){u_{i,t}} \in {\mathbb{R}^p} \label{algorithm-eq1}
\end{flalign}
\end{small}%
and\par\nobreak\vspace{-10pt}
\begin{small}
\begin{flalign}
\hspace{-7pt}
\hat \partial {[ {{c_{i,t}}( {{e_{i,t}}} )} ]_ + } := \frac{p}{{{\delta _t}}}{\big( {{{[ {{c_{i,t}}( {{e_{i,t}} + {\delta _t}{u_{i,t}}} )} ]}_ + }} \big)^T} \otimes {u_{i,t}} \in {{\mathbb{R}}^{p \times {m_i}}}, \label{algorithm-eq2}
\end{flalign}
\end{small}%
respectively, where ${e_{i,t}} = {x_{i,t}} - {\delta _t}{u_{i,t}} \in {\mathbb{R}^p}$, ${\delta _t} \in \big( {0,r( \mathbb{X} ){\xi _t }} \big]$ is an exploration parameter, $r( \mathbb{X} )$ is a known constant given in Assumption~1, ${\xi _t} \in ( {0,1} ]$ is a shrinkage coefficient, and ${u_{i,t}} \in {\mathbb{S}^p}$ is a uniformly distributed random vector.

Note that the estimators $\hat \partial {l_{i,t}}$ and $\hat \partial {[ {{c_{i,t}}} ]_ + }$ are unbiased subgradient estimators of ${{\hat l}_{i,t}}$ and ${[ {{{\hat c}_{i,t}}} ]_ + }$, which are uniformly smoothed versions of ${l_{i,t}}$ and ${[ {{c_{i,t}}} ]_ + }$, and respectively defined as\par\nobreak\vspace{-10pt}
\begin{small}
\begin{flalign}
{{\hat l}_{i,t}}( x ) := {\mathbf{E}_{v \in {\mathbb{B}^p}}}[ {{l_{i,t}}( {x + \delta_t v} )} ],\forall x \in ( {1 - \xi_t } )\mathbb{X} \label{algorithm-eq3}
\end{flalign}
\end{small}%
and\par\nobreak\vspace{-10pt}
\begin{small}
\begin{flalign}
{[ {{{\hat c}_{i,t}}( x )} ]_ + } := {\mathbf{E}_{v \in {\mathbb{B}^p}}}{[ {{{\hat c}_{i,t}}( {x + \delta_t v} )} ]_ + },\forall x \in ( {1 - \xi_t } )\mathbb{X}, \label{algorithm-eq4}
\end{flalign}
\end{small}%
where $\mathbf{E}$ denotes the expectation, and $v$ is chosen uniformly at random.

By integrating the one-point stochastic gradient approximations \eqref{algorithm-eq1} and \eqref{algorithm-eq2} with the algorithm with two-point sampling in \cite{Yi2023}, our algorithm is proposed in pseudo-code as Algorithm~1. Specifically, the updating rule of the local primal decision variables is presented as \eqref{Algorithm1-eq1}--\eqref{Algorithm1-eq3}. Moreover, the updating direction of the local primal decision variables is calculated based on subgradient estimators \eqref{algorithm-eq1} and \eqref{algorithm-eq2}. In addition, the updating rule of the dual decision variables is given as \eqref{Algorithm1-eq4} and \eqref{Algorithm1-eq5}.


\begin{algorithm}[t]
  \caption{Distributed Bandit Online Primal--Dual Projection Algorithm with One-Point Sampling} 
  \begin{algorithmic}
  \renewcommand{\algorithmicrequire}{\textbf{Input:}}
  \REQUIRE
     non-increasing sequences $\{ {{\alpha _t}} \}$, $\{ {{\beta _t}} \}$, $\{ {{\gamma _t}} \} \subseteq ( {0, + \infty } )$, $\{ {{\xi _t}} \} \subseteq ( {0,1} )$ and $\{ {{\delta _t}} \} \subseteq \big( {0,r( \mathbb{X} ){\xi _t}} \big]$ for $t \in [ T ]$.
  \renewcommand{\algorithmicrequire}{\textbf{Initialize:}}
  \REQUIRE
     ${e_{i,1}} \in ( {1 - {\xi _1}} )\mathbb{X}$, ${u_{i,1}}$, ${x_{i,1}} = {e_{i,1}} + {\delta _1}{u_{i,1}}$ and ${q_{i,1}} = {\mathbf{0}_{{m_i}}}$ for $i \in [ n ]$.
    \FOR {$t = 1, \cdot  \cdot  \cdot, T-1 $}
    \FOR {$i = 1,\cdot  \cdot  \cdot,n$ in parallel}
    \STATE Broadcast ${e_{i,t}}$ and receive ${e_{j,t}}$ via the graph $\mathcal{G}_t$;
    \STATE Select vector ${u_{i,t}}$ independently and uniformly at random;
    \STATE Sample ${l_{i,t}}( {{x_{i,t}}} )$ and ${c_{i,t}}( {{x_{i,t}}} )$;
    \STATE Update
      \begin{flalign}
       {z_{i,t + 1}} &= \sum\limits_{j = 1}^n {{{[ {{W_t}} ]}_{ij}}{e_{j,t}}}, \label{Algorithm1-eq1}\\
       {\hat \omega _{i,t + 1}} &= \hat \partial {l_{i,t}}( {{e_{i,t}}} ) + \hat \partial {[ {{c_{i,t}}( {{e_{i,t}}} )} ]_ + }{q_{i,t}}, \label{Algorithm1-eq2}\\
       {e_{i,t + 1}} &= {\mathcal{P}_{( {1 - {\xi _{t + 1}}} )\mathbb{X}}}( {{z_{i,t + 1}} - {\alpha _{t + 1}}{{\hat \omega }_{i,t + 1}}} ), \label{Algorithm1-eq3}\\
       {x_{i,t + 1}} &= {e_{i,t+1}} + {\delta _{t+1}}{u_{i,t+1}}, \label{Algorithm1-eq4}\\
       {q_{i,t + 1}} &= \big[ ( {1 - {\beta _{t + 1}}{\gamma _{t + 1}}} ){q_{i,t}} + {\gamma _{t + 1}}{{[ {{c_{i,t}}( {{x_{i,t}}} )} ]}_ + } \big]_ + . \label{Algorithm1-eq5}
      \end{flalign}
    \ENDFOR
    \ENDFOR
  \renewcommand{\algorithmicensure}{\textbf{Output:}}
  \ENSURE
      $\{ {{x_{i,t}}} \}$ for $i \in [ n ]$ and $t \in [ T ]$.
  \end{algorithmic}
\end{algorithm}

\subsection{Network Regret and CCV}
For convex local loss functions, we first derive dynamic network regret and network CCV bounds as follows.
\begin{theorem}\label{thm1}
Let Assumptions 1 and 2 hold. For all $i \in [ n ]$, let $\{ {{x_{i,t}}} \}$ be the sequence induced by Algorithm~1 with\par\nobreak\vspace{-10pt}
\begin{small}
\begin{flalign}
\nonumber
&{\alpha _t} = \frac{{r{{( \mathbb{X} )}^2}}}{{20{p^2}F_1^2{{( {t + 1} )}^{{g _1}}}}}, {\beta _t} = \frac{2}{{{t^{{g _2}}}}}, {\gamma _t} = \frac{1}{{{t^{1 - {g _2}}}}}, \\
&{\xi _t} = \frac{1}{{{{( {t + 1} )}^{{g _3}}}}}, {\delta _t} = \frac{{r( \mathbb{X} )}}{{{{( {t + 1} )}^{{g _3}}}}},t \in {\mathbb{N}_ + }, \label{theorem1-eq1}
\end{flalign}
\end{small}%
where the constants ${g _1} \in ( {0, 1} )$, ${g _2} \in ( {0,{g _1}/4} )$ and ${g _3} \in ( {{g _2},({g _1} - 2{g _2})/2} )$. Then, for any benchmark ${y_{[ T ]}} \in {\mathcal{X}_T}$ and the total number of iterations $T \in {\mathbb{N}_ + }$,\par\nobreak\vspace{-10pt}
\begin{small}
\begin{flalign}
\nonumber
&\mathbf{E}[ {{\rm{Net}\mbox{-}\rm{Reg}}( {\{ {{x_{i,t}}} \},{y_{[ T ]}}} )} ] \\
&= {\cal O}( {{T^{\max \{ {{g _1},1 - {g _1} + 2{g _2} + 2{g _3},1 + {g _2} - {g _3}} \}}} + {T^{{g _1}}}{P_T}} ), \label{theorem1-eq2}\\
&{\mathbf{E}}[ {{\rm{Net} \mbox{-} \rm{CCV}}( {\{ {{x_{i,t}}} \}} )} ]
= {\cal O}( {{T^{1 - {g _2}/2}}} ), \label{theorem1-eq3}
\end{flalign}
\end{small}%
where ${P_T} = \sum\nolimits_{t = 1}^{T - 1} {\| {{y_{t + 1}} - {y_t}} \|} $ is the path--length of ${y_{[ T ]}}$.
\end{theorem}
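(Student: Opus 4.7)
The plan is to adapt the standard primal--dual Lyapunov analysis for distributed online constrained optimization to the one-point bandit setting. The analysis is run on the smoothed surrogates $\hat{l}_{i,t}$ and $[\hat{c}_{i,t}]_+$ in \eqref{algorithm-eq3}--\eqref{algorithm-eq4}; the original functions are recovered at the end through the smoothing parameters $\delta_t$ and $\xi_t$.

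As preliminaries, I would first establish from Assumption~1: (i) unbiasedness of the one-point estimators, $\mathbf{E}[\hat{\partial}l_{i,t}(e_{i,t})\mid e_{i,t}]=\partial\hat{l}_{i,t}(e_{i,t})$ and the analogous identity for the constraint estimator; (ii) the hard norm bounds $\|\hat{\partial}l_{i,t}\|\le pF_1/\delta_t$ and $\|\hat{\partial}[c_{i,t}]_+\|\le pF_1/\delta_t$ from \eqref{ass-eq2a}--\eqref{ass-eq2b}; (iii) the smoothing-error bounds $|\hat{l}_{i,t}(x)-l_{i,t}(x)|\le F_2\delta_t$ and $\|[\hat{c}_{i,t}(x)]_+-[c_{i,t}(x)]_+\|\le F_2\delta_t$ on $(1-\xi_t)\mathbb{X}$, from \eqref{ass-eq3a}--\eqref{ass-eq3b}; (iv) feasibility of the shrunk benchmark, $(1-\xi_t)y_t\in(1-\xi_t)\mathbb{X}$ with $\|(1-\xi_t)y_t-y_t\|\le\xi_t R(\mathbb{X})$, via \eqref{ass-eq1}; and (v) a consensus disagreement bound $\sum_i\|e_{i,t}-\bar{e}_t\|\le C\sum_{s=1}^{t-1}\lambda^{t-s}\sum_j\alpha_s\|\hat\omega_{j,s}\|$ for some $\lambda\in(0,1)$, which follows from Assumption~2 by the usual perturbed-consensus argument based on uniform joint strong connectivity of $\mathcal{G}_t$.

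The main step is to take the Lyapunov quantity $V_{i,t}=\|e_{i,t}-(1-\xi_t)y_t\|^2+\|q_{i,t}\|^2$, expand \eqref{Algorithm1-eq3} and \eqref{Algorithm1-eq5} using nonexpansiveness of the two projections, and take conditional expectation. The cross term $\langle\hat\omega_{i,t+1},e_{i,t+1}-(1-\xi_t)y_t\rangle$ then turns, via convexity applied to the smoothed surrogates, into a subgradient inequality for $\hat{l}_{i,t}+\langle q_{i,t},[\hat{c}_{i,t}]_+\rangle$. Summing over agents and iterations and plugging in the step sizes \eqref{theorem1-eq1}, the pieces that appear are: initial-condition term $O(1/\alpha_T)=O(T^{g_1})$; quadratic noise $\sum_t\alpha_t\|\hat\omega_{i,t}\|^2$, whose dual-coupled contribution is absorbed by the contraction in \eqref{Algorithm1-eq5} whenever $\alpha_t/\delta_t^2\lesssim\beta_t$ (the condition $g_1\ge g_2+2g_3$ built into the hypothesis) and whose remainder is of order $O(T^{1-g_1+2g_2+2g_3})$ once combined with the dual-variable growth; smoothing and shrinkage error $O(\sum_t(\delta_t+\xi_t))=O(T^{1-g_3})$; path-length drift $O(P_T/\alpha_T)=O(T^{g_1}P_T)$; and a geometrically decaying consensus residual. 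Balancing these pieces against the dual contribution yields the three exponents $g_1$, $1-g_1+2g_2+2g_3$, and $1+g_2-g_3$ in \eqref{theorem1-eq2}.

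For the CCV bound, the dual recursion \eqref{Algorithm1-eq5} directly gives $\|q_{i,t+1}\|\ge\gamma_{t+1}\|[c_{i,t}(x_{i,t})]_+\|-(1-\beta_{t+1}\gamma_{t+1})\|q_{i,t}\|$; telescoping, combining with the growth bound $\mathbf{E}\|q_{i,t}\|^2=O(t^{g_2})$ extracted from the same Lyapunov inequality, and applying Cauchy--Schwarz to pass from the squared residual to its $\ell_1$-sum yields the stated $O(T^{1-g_2/2})$ rate, with the smoothing and disagreement errors contributing only lower-order terms because of the ordering $g_2<g_3$. \emph{The main obstacle} is the $1/\delta_t$ blow-up of the one-point subgradient estimator entering multiplicatively with $q_{i,t}$ inside $\|\hat\omega_{i,t}\|^2$, a coupling that is absent in the two-point analysis of \cite{Yi2023} where $\|\hat\omega\|=O(p)$ uniformly. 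Controlling this coupling is precisely what forces the narrow admissible window $g_3\in(g_2,(g_1-2g_2)/2)$ in the hypothesis and rules out the cleaner regret rates achievable under two-point feedback.
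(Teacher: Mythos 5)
Your regret analysis follows essentially the same route as the paper's: the same one-point preliminaries (unbiasedness with respect to the smoothed surrogates, the $pF_1/\delta_t$ norm bound, the $F_2\delta_t$ smoothing error, the shrunk benchmark $\hat y_t=(1-\xi_t)y_t$), the same primal--dual drift decomposition, the same consensus lemma, and you correctly identify the balancing condition $\alpha_t/\delta_t^2\lesssim\beta_t$ (the paper's inequality $ap^2F_1^2\alpha_t/\delta_t^2\le\psi_t^2$, which is exactly where $g_3<(g_1-2g_2)/2$ is used) together with the origin of all three exponents. Where you genuinely diverge is the CCV bound. The paper never telescopes the dual recursion: it carries a free comparator $\mu_i\ge\mathbf{0}_{m_i}$ through the entire analysis, arrives at $\mu_i^T\sum_{t}[c_{i,t}(x_{j,t})]_+-\tfrac12\|\mu_i\|^2\Psi\le O(T)$ with $\Psi=O(T^{1-g_2})$, maximizes over $\mu_i$ to get $\|\sum_t[c_{i,t}(x_{j,t})]_+\|^2=O(T^{2-g_2})$, and converts to the cumulative violation via the $\sqrt m$ relation between $\sum_t\|\cdot\|$ and $\|\sum_t\cdot\|$. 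That device needs no pointwise moment bound on $q_{i,t}$ and handles all pairs $(i,j)$ at once.

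Your CCV route has a step that fails as written. The per-step inequality you state, $\gamma_{t+1}\|[c_{i,t}(x_{i,t})]_+\|\le\|q_{i,t+1}\|+(1-\beta_{t+1}\gamma_{t+1})\|q_{i,t}\|$, after dividing by $\gamma_{t+1}\sim t^{-(1-g_2)}$ and summing against $\mathbf{E}\|q_{i,t}\|=O(t^{g_2/2})$, gives only $O(T^{2-g_2/2})$, which is vacuous. To make the route work you must telescope the vector recursion before taking norms: since $\gamma_{t+1}[c_{i,t}(x_{i,t})]_+=q_{i,t+1}-q_{i,t}+\beta_{t+1}\gamma_{t+1}q_{i,t}$, one gets componentwise
\begin{flalign}
\nonumber
\sum_{t=1}^T[c_{i,t}(x_{i,t})]_+\le\frac{q_{i,T+1}}{\gamma_{T+1}}+\sum_{t=1}^T\beta_{t+1}q_{i,t},
\end{flalign}
so the $1/\gamma$ weights collapse into a single boundary term and the rate $O(T^{1-g_2}\cdot T^{g_2/2}+\sum_t t^{-g_2}t^{g_2/2})=O(T^{1-g_2/2})$ follows. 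Even then, the input $\mathbf{E}\|q_{i,t}\|^2=O(t^{g_2})$ is asserted without justification: the deterministic bound $\|q_{i,t}\|\le F_1/\beta_t=O(t^{g_2})$ from the induction in Lemma~4 is too weak by a factor $t^{g_2/2}$ and would only yield the trivial $O(T)$. The sharper moment bound is true, but it has to be extracted by running the aggregated primal--dual inequality at every horizon $t$, lower-bounding $\sum_{s\le t}\Delta_{i,s+1}(\mathbf{0}_{m_i})$ by $\|q_{i,t+1}\|^2/(2\gamma_{t+1})$ (which requires verifying $\psi_s^2\ge0$) and using the crude bound $-{\rm Net\mbox{-}Reg}\le 2F_1t$; this is a nontrivial extra argument you should supply. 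Finally, your telescoping controls only the diagonal terms $[c_{i,t}(x_{i,t})]_+$, whereas the network CCV involves $[c_{j,t}(x_{i,t})]_+$ for all pairs; the transfer via Lipschitzness of $c_{j,t}$ and the disagreement bound must be made explicit rather than dismissed as lower order.
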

The proof is given in Appendix B.
\begin{remark}\label{rem1}
From \eqref{theorem1-eq2} and \eqref{theorem1-eq3}, we know that Algorithm~1 achieves sublinear dynamic network regret and network CCV if the path--length ${P_T}$ also increases in a sublinear manner. Compared to the algorithms with one-point sampling in \cite{Flaxman2005, Saha2011, Chen2019, Yuan2022}, our Algorithm~1 has an advantage that does not use the total number of iterations~$T$ as shown in (18). Moreover, our Algorithm~1 is distributed, while the algorithms in \cite{Flaxman2005, Saha2011, Chen2019} are centralized. More importantly, our Algorithm~1 uses one-point bandit feedback for local constraint functions, while \cite{Flaxman2005, Saha2011} do not consider inequality constraints, and the algorithms in \cite{Chen2019, Yuan2022} use full-information feedback for local constraint functions.
\end{remark}
\begin{remark}\label{rem2}
Intuitively, our Algorithm~1 can be viewed as a one-point bandit feedback version of Algorithm~2 in \cite{Yi2023}. However, the proof of our Theorem~\ref{thm1} has nontrivial differences compared to the proof of Theorem~3 for Algorithm~2 with two-point sampling in \cite{Yi2023} due to the following reasons.
Firstly, different from the uniformly bounded two-point stochastic gradient approximation, the one-point stochastic gradient approximation is inversely proportional to the exploration parameter $\delta _t$ as in Lemma~2.
This subtle yet critical difference poses a significant challenge since the exploration parameter ${{\delta _t}}$ is time-varying.
It is worth mentioning that the method in \cite{Yi2023} cannot be directly applied to derive the result of Lemma~5 in this paper. To overcome this dilemma, we reanalyse all formulas related to the time-varying exploration parameter~${{\delta _t}}$ to establish network regret and CCV bounds. That also causes the designed parameters of our Algorithm~1 to be different from those of Algorithm~2 in \cite{Yi2023}. Secondly, the updating rules of the local dual decision variables of our Algorithm~1 is different from that of Algorithm~2 in \cite{Yi2023}, i.e., our Algorithm~1 does not use ${\big( {\partial {{[ {{c_{i,t}}( {{e_{i,t}}} )} ]}_ + }} \big)^T}( {{e_{i,t + 1}} - {e_{i,t}}} )$. Therefore, we also need to reanalyse the results on the local dual variables, see Lemma~4.
Additionally, note that different from the algorithm with one-point sampling in \cite{Yuan2022} which uses one-point sampling for the time-varying local loss functions but full-information feedback for the fixed local constraint functions, our Algorithm~1 use one-point sampling for both the time-varying local loss and constraint functions.
Furthermore, the update rules of these two algorithms are substantially different.
\end{remark}

Next, we establish static network regret and CCV bounds
by replacing the benchmark ${y_{[ T ]}}$ with \ $\hat x_{[ T ]}^*$ as follows.

\begin{corollary}\label{cor1}
Let Assumptions~1 and 2 hold. For all $i \in [ n ]$, let $\{ {{x_{i,t}}} \}$ be the sequence induced by Algorithm~1 with (18), where the constants ${g _1} = g  + 3/4$, ${g _2} = g $, ${g _3} = 1/4$, and $g  \in ( {0, 1/4} )$ can be arbitrarily chosen. Then, for any the total number of iterations $T \in {\mathbb{N}_ + }$,\par\nobreak\vspace{-10pt}
\begin{small}
\begin{flalign}
&\mathbf{E}[{\rm{Net}\mbox{-}\rm{Reg}}( {\{ {{x_{i,t}}} \},\hat{x}_{[ T ]}^*})] = \mathcal{O}( {{T^{3/4 + g}}} ), \label{corollary1-eq1}\\
&{\mathbf{E}}[ {{\rm{Net} \mbox{-} \rm{CCV}}( {\{ {{x_{i,t}}} \}} )} ]
= \mathcal{O}({T^{1 - {g}/2}}).\label{corollary1-eq2}
\end{flalign}
\end{small}
\end{corollary}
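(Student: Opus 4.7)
The plan is to derive Corollary~\ref{cor1} as an immediate specialization of Theorem~\ref{thm1}: substitute the static benchmark $\hat x_{[T]}^*$, verify that the proposed parameter choice is admissible, and then balance the three exponents inside the maximum in \eqref{theorem1-eq2}. First, because $\hat x_{[T]}^* = (\hat x^*,\ldots,\hat x^*)$ is a constant sequence, its path--length satisfies $P_T = \sum_{t=1}^{T-1}\|\hat x^* - \hat x^*\| = 0$, so the additive dynamic term $T^{g_1}P_T$ in \eqref{theorem1-eq2} drops out entirely. Moreover, by the definition of $\hat x^*$ as a minimizer of $\sum_{t=1}^T l_t(x)$ subject to $c_t(x)\le\mathbf{0}_m$ for $t\in[T]$, the constant sequence $\hat x_{[T]}^*$ belongs to the admissible benchmark set $\mathcal{X}_T$, so Theorem~\ref{thm1} may be applied with $y_{[T]} = \hat x_{[T]}^*$.

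Next, I would check that the assignment $g_1 = g + 3/4$, $g_2 = g$, $g_3 = 1/4$ with $g\in(0,1/4)$ satisfies the three parameter constraints of Theorem~\ref{thm1}. Since $g\in(0,1/4)$, we have $g_1\in(3/4,1)\subset(0,1)$; the requirement $g_2 < g_1/4$ reduces to $g < 3/16 + g/4$, i.e., $g < 1/4$; the lower bound $g_3 > g_2$ is $1/4 > g$; and the upper bound $g_3 < (g_1-2g_2)/2$ becomes $1/4 < (3/4-g)/2$, which again is $g < 1/4$. All three inequalities coincide with the standing hypothesis $g \in (0,1/4)$, so the parameter schedule \eqref{theorem1-eq1} induced by this choice is legitimate.

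Finally, I would substitute these exponents into the three competing terms inside the maximum in \eqref{theorem1-eq2}: $g_1 = 3/4 + g$; $\,1 - g_1 + 2g_2 + 2g_3 = 1 - (3/4 + g) + 2g + 1/2 = 3/4 + g$; and $1 + g_2 - g_3 = 1 + g - 1/4 = 3/4 + g$. All three tie at $3/4 + g$, which immediately yields \eqref{corollary1-eq1}, while \eqref{corollary1-eq2} is read off from \eqref{theorem1-eq3} via $T^{1-g_2/2} = T^{1-g/2}$. There is essentially no new obstacle beyond Theorem~\ref{thm1}; the only design insight is that the triple $(3/4+g,\,g,\,1/4)$ is the unique admissible choice that equalizes the three competing regret exponents, which is the natural way to minimize the worst-case regret exponent over the feasible parameter region of Theorem~\ref{thm1}.
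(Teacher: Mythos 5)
Your proposal is correct and matches the paper's (implicit) route exactly: Corollary~1 is obtained from Theorem~1 by taking the constant benchmark $\hat x_{[T]}^*$ (so $P_T=0$), verifying that $(g_1,g_2,g_3)=(3/4+g,\,g,\,1/4)$ satisfies all three admissibility constraints precisely when $g\in(0,1/4)$, and observing that the three exponents in \eqref{theorem1-eq2} all equal $3/4+g$ while \eqref{theorem1-eq3} gives $T^{1-g/2}$. Your closing observation that this choice is forced by equalizing the three exponents (yielding $g_3=1/4$ and $g_1=3/4+g_2$) is a correct and worthwhile justification of the parameter selection.
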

\begin{remark}\label{rem3}
From \eqref{corollary1-eq1} and \eqref{corollary1-eq2}, we know that Algorithm~1 achieves sublinear static network regret and CCV. As shown in TABLE~1, an $\mathcal{O}({T^{3/4}})$ static network regret and the $\mathcal{O}({T^{5/8}})$ network CCV are established by the algorithm with one-point sampling in \cite{Yuan2022} when we choose $g = 3/4$, which are smaller than the bounds in \eqref{corollary1-eq1} and \eqref{corollary1-eq2}. That is reasonable since the algorithm in \cite{Yuan2022} uses full-information feedback for local constraint functions while our Algorithm~1 only uses one-point sampling for local constraint functions. In addition, in the absence of inequality constraints, our Algorithm 1 achieves an $\mathcal{O}({T^{3/4}})$ static network regret bound, which is same as that established by the algorithms with one-point samping in \cite{Flaxman2005, Chen2019, Yuan2022}. However, the algorithms in \cite{Flaxman2005, Chen2019, Yuan2022} additionally use the total number of iterations~$T$, and the algorithms in \cite{Flaxman2005, Chen2019} are centralized. Although the bound is larger than that established by the algorithm with two-point samping in \cite{Yi2023}, our Algorithm~1 requires less sampling, computational, and memory requirements.

\end{remark}
For strongly convex local loss functions, reduced dynamic network regret and network CCV bounds are established in the following theorem.
\begin{theorem}\label{thm2}
Let Assumptions~1 and 2 hold. For all $i \in [ n ]$, let $\{ {{x_{i,t}}} \}$ be the sequence induced by Algorithm~1 with \eqref{theorem1-eq1}, where the constants ${g _1} \in ( {0, 1} )$, ${g _2} \in ( {0,{g _1}/4} )$ and ${g _3} \in ( {{g _2},({g _1} - 2{g _2})/2} )$. Then, for any benchmark ${y_{[ T ]}} \in {\mathcal{X}_T}$ and the total number of iterations $T \in {\mathbb{N}_ + }$,\par\nobreak\vspace{-10pt}
\begin{small}
\begin{flalign}
\nonumber
&\mathbf{E}[ {{\rm{Net}\mbox{-}\rm{Reg}}( {\{ {{x_{i,t}}} \},{y_{[ T ]}}} )} ] \\
&= {\cal O}( {{T^{\max \{ {1 - {g _1} + 2{g _2} + 2{g _3},1 + {g _2} - {g _3}} \}}} + {T^{{g _1}}}{P_T}} ), \label{theorem2-eq1}\\
&{\mathbf{E}}[ {{\rm{Net} \mbox{-} \rm{CCV}}( {\{ {{x_{i,t}}} \}} )} ]
= {\cal O}( {{T^{1 - {g _2}/2}}} ). \label{theorem2-eq2}
\end{flalign}
\end{small}
\end{theorem}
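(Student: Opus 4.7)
The plan is to mirror the proof of Theorem~\ref{thm1} but exploit strong convexity to eliminate the $\mathcal{O}(T^{g_1})$ term that arises in the convex case. Since the CCV bound does not depend on curvature properties of the loss (it is driven entirely by the dual update \eqref{Algorithm1-eq5} and the constraint function bounds), inequality \eqref{theorem2-eq2} should follow \emph{verbatim} from the CCV argument already used for Theorem~\ref{thm1}. Hence the work is concentrated on improving the regret inequality \eqref{theorem1-eq2} by replacing the convex per-iteration lower bound with the strongly convex one.

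First I would verify that the smoothed loss $\hat{l}_{i,t}$ inherits $\mu$-strong convexity from $l_{i,t}$; this follows because an expectation of a strongly convex function over a symmetric kernel preserves the modulus. Then, from the primal-dual recursion for $\|e_{i,t+1}-y_t\|^2$, together with the unbiased subgradient estimator property of $\hat\partial l_{i,t}$ and $\hat\partial[c_{i,t}]_+$ (taken in conditional expectation on $u_{i,t}$), I would derive the per-iteration inequality
\begin{small}
\begin{flalign*}
\mathbf{E}[\hat l_{i,t}(e_{i,t}) - \hat l_{i,t}(y_t)]
&\le \tfrac{1}{2\alpha_t}\mathbf{E}\|e_{i,t}-y_t\|^2 - \tfrac{1}{2\alpha_t}\mathbf{E}\|e_{i,t+1}-y_t\|^2 \\
&\quad + \tfrac{\alpha_t}{2}\mathbf{E}\|\hat\omega_{i,t+1}\|^2 + (\text{consensus / dual terms}),
\end{flalign*}
\end{small}
and then use $\hat l_{i,t}(e_{i,t})-\hat l_{i,t}(y_t) \ge \langle \partial \hat l_{i,t}(e_{i,t}),e_{i,t}-y_t\rangle + \tfrac{\mu}{2}\|e_{i,t}-y_t\|^2$ (combined with a suitable manipulation involving the standard reformulation that pairs the strong-convexity quadratic with the telescoping $\|e_{i,t}-y_t\|^2$ term). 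The crucial move is that the quadratic $\tfrac{\mu}{2}\mathbf{E}\|e_{i,t}-y_t\|^2$ absorbs the jump $(\tfrac{1}{2\alpha_{t+1}}-\tfrac{1}{2\alpha_t})\mathbf{E}\|e_{i,t+1}-y_t\|^2 \lesssim t^{g_1-1}R(\mathbb{X})^2$ for all sufficiently large $t$, so after a finite transient the telescoping contribution collapses to a constant rather than the $\mathcal{O}(T^{g_1}R(\mathbb{X})^2)$ bound used in Theorem~\ref{thm1}.

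Summing over $t$ and $i$, the stochastic error term $\sum_t \alpha_t\,\mathbf{E}\|\hat\omega_{i,t+1}\|^2$ contributes $\mathcal{O}(T^{1-g_1+2g_2+2g_3})$ (using $\|\hat\omega_{i,t+1}\|\lesssim \delta_t^{-1}(1+\|q_{i,t}\|) \lesssim t^{g_3+g_2}$, exactly as established in the supporting lemmas invoked for Theorem~\ref{thm1}); the consensus, smoothing, and dual cross terms yield $\mathcal{O}(T^{1+g_2-g_3})$; and the dynamic benchmark perturbation $\sum_t \tfrac{1}{\alpha_t}\|y_{t+1}-y_t\|\cdot \mathrm{diam}(\mathbb{X})$ gives $\mathcal{O}(T^{g_1}P_T)$. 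Combining these yields \eqref{theorem2-eq1}. Finally, the approximation gap between $\hat l_{i,t}$ and $l_{i,t}$, and between $e_{i,t}$ and $x_{i,t}=e_{i,t}+\delta_t u_{i,t}$, contribute at most $\mathcal{O}(\sum_t \delta_t) = \mathcal{O}(T^{1-g_3})$, which is dominated by the terms already present.

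The main obstacle will be the cancellation of the telescoping term $\tfrac{1}{2\alpha_t}\mathbf{E}\|e_{i,t}-y_t\|^2$ against the strong-convexity quadratic when the benchmark $y_t$ drifts. In the static case this is standard, but for dynamic regret one must carefully re-index (splitting $\|e_{i,t+1}-y_{t+1}\|^2 \le (1+\rho)\|e_{i,t+1}-y_t\|^2 + (1+\rho^{-1})\|y_{t+1}-y_t\|^2$ with a small $\rho$) so that the path-length only enters once, through the $T^{g_1}P_T$ term, and so that the coefficient of $\|e_{i,t}-y_t\|^2$ after telescoping remains non-positive for all large $t$; this dictates that $\alpha_1$ be chosen small enough relative to $\mu$, which is consistent with the constant $r(\mathbb{X})^2/(20p^2F_1^2)$ in \eqref{theorem1-eq1}.
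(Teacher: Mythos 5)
Your proposal is correct and follows essentially the same route as the paper: the smoothed losses $\hat l_{i,t}$ inherit $\mu$-strong convexity, the quadratic $-\frac{\mu}{2}\|\hat y_t - e_{i,t}\|^2$ absorbs the growth $\frac{1}{\alpha_{t+1}}-\frac{1}{\alpha_t}=\mathcal{O}(t^{g_1-1})$ after a finite transient (the paper's $\varpi_5$), collapsing the $T^{g_1}$ telescoping contribution to a constant while the $T^{g_1}P_T$ and CCV terms carry over unchanged from Theorem~1. The only cosmetic difference is that the paper handles the benchmark drift via the additive bound $\|\hat y_{t+1}-e\|^2-\|\hat y_t-e\|^2\le 4R(\mathbb{X})\|y_{t+1}-y_t\|+4R(\mathbb{X})^2(\xi_t-\xi_{t+1})$ and aligns the strong-convexity quadratic with the telescoped term through the double stochasticity of $W_t$, rather than your $(1+\rho)$ splitting, but this does not change the argument or the resulting bounds.
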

The proof is given in Appendix C.
\begin{remark}\label{rem4}
From \eqref{theorem2-eq1} and \eqref{theorem2-eq2}, we know that sublinear dynamic network regret and network CCV bounds are established if ${P_T}$ also increases sublinearly. Moreover, when ${g _1} > 1 + {g _2} - {g _3} \ge 1 - {g_1} + 2{g_2} + 2{g_3}$, compared to the dynamic network regret bound $\mathcal{O}( {{T^{{g _1}}} + {T^{{g _1}}}{P_T}} )$ established in Theorem~1, the dynamic network regret bound $\mathcal{O}( {{T^{1 + {g _2} - {g _3}}} + {T^{{g _1}}}{P_T}})$ in Theorem~2 is smaller.
\end{remark}

Next, compared with Corollary~1, we establish reduced static network regret bound
by replacing the benchmark ${y_{[ T ]}}$ in Theorem~2 with \ $\hat x_{[ T ]}^*$ in the following corollary.
\begin{corollary}\label{cor2}
Let Assumptions~1 and 2 hold. For all $i \in [ n ]$, let $\{ {{x_{i,t}}} \}$ be the sequence induced by Algorithm~1 with\par\nobreak\vspace{-10pt}
\begin{small}
\begin{flalign}
\nonumber
&{\alpha _t} = \frac{{r{{( \mathbb{X} )}^2}}}{{20{p^2}F_1^2{{( {t + 1} )}}}}, {\beta _t} = \frac{2}{{{t^{{g _1}}}}}, {\gamma _t} = \frac{1}{{{t^{1 - {g _1}}}}},\\
&{\xi _t} = \frac{1}{{{{( {t + 1} )}^{{g _2}}}}}, {\delta _t} = \frac{{r( \mathbb{X} )}}{{{{( {t + 1} )}^{{g _2}}}}},t \in {\mathbb{N}_ + }, \label{corollary2-eq1}
\end{flalign}
\end{small}%
where the constants ${g _1} = g $, ${g _2} = ( {1 - g } )/3$ and $g  \in ( {0,1/4} )$. Then, for the total number of iterations $T \in {\mathbb{N}_ + }$,\par\nobreak\vspace{-10pt}
\begin{small}
\begin{flalign}
&\mathbf{E}[{\rm{Net}\mbox{-}\rm{Reg}}( {\{ {{x_{i,t}}} \},\hat{x}_{[ T ]}^*} )] = \mathcal{O}( {{T^{2/3 + 4g /3}}} ), \label{corollary2-eq2}\\
&{\mathbf{E}}[ {{\rm{Net} \mbox{-} \rm{CCV}}( {\{ {{x_{i,t}}} \}} )} ]
= {\cal O}( {{T^{1 - {g}/2}}} ). \label{corollary2-eq3}
\end{flalign}
\end{small}
\end{corollary}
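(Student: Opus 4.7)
The plan is to specialize Theorem~\ref{thm2} to the constant benchmark $y_{[T]} = \hat{x}_{[T]}^* = (\hat{x}^*,\ldots,\hat{x}^*)$. Because this sequence is constant, its path--length vanishes, $P_T = \sum_{t=1}^{T-1}\|\hat{x}^* - \hat{x}^*\| = 0$, so the additive term $T^{g_1} P_T$ in \eqref{theorem2-eq1} drops out and only $\mathcal{O}(T^{\max\{1 - g_1 + 2g_2 + 2g_3,\, 1 + g_2 - g_3\}})$ survives. The task then reduces to translating Corollary~\ref{cor2}'s parameter choice \eqref{corollary2-eq1} into Theorem~\ref{thm2}'s parameterization \eqref{theorem1-eq1} and simplifying the resulting exponents.

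Comparing the two schedules, I would identify Theorem~\ref{thm2}'s $g_1$ with the exponent $1$ now appearing in $\alpha_t = r(\mathbb{X})^2/(20 p^2 F_1^2 (t+1))$, Theorem~\ref{thm2}'s $g_2$ with Corollary~\ref{cor2}'s $g_1 = g$ (both governing $\beta_t$ and $\gamma_t$), and Theorem~\ref{thm2}'s $g_3$ with Corollary~\ref{cor2}'s $g_2 = (1-g)/3$ (both governing $\xi_t$ and $\delta_t$). Under this mapping I would verify feasibility: $g_2 < g_1/4$ becomes $g < 1/4$; the condition $g_3 \in (g_2, (g_1-2g_2)/2)$ unfolds into $(1-g)/3 > g$ and $(1-g)/3 < (1-2g)/2$, both of which are again equivalent to $g < 1/4$. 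Substituting into the two candidates for the exponent yields $1 - 1 + 2g + 2(1-g)/3 = 2/3 + 4g/3$ and $1 + g - (1-g)/3 = 2/3 + 4g/3$; the two coincide, producing \eqref{corollary2-eq2}. The CCV bound \eqref{corollary2-eq3} follows immediately from \eqref{theorem2-eq2} since $1 - g_2/2 = 1 - g/2$ in the translated notation.

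The main obstacle I foresee is that Theorem~\ref{thm2} states its bound for $g_1 \in (0,1)$, whereas the present application sits on the boundary $g_1 = 1$. I would therefore inspect the proof of Theorem~\ref{thm2} to confirm that every estimate remains finite at this endpoint --- strong convexity supplies a $\mu$--factor that absorbs the aggressive $1/(t+1)$ step size without introducing logarithmic blow--up --- or, if needed, carry the bound across by a short continuity argument as $g_1 \to 1^-$. Once this endpoint issue is settled, the remainder is a direct algebraic verification and the statement of Corollary~\ref{cor2} follows.
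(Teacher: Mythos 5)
Your proposal follows the same route the paper (implicitly) takes: specialize Theorem~\ref{thm2} to the constant benchmark so that $P_T=0$, map the corollary's schedule onto the theorem's parameterization as $(g_1,g_2,g_3)\mapsto(1,\,g,\,(1-g)/3)$, check the feasibility inequalities, and observe that both candidate exponents collapse to $2/3+4g/3$ while the CCV exponent becomes $1-g/2$. All of that algebra is correct and matches what the paper intends (cf.\ Remark~5, which explicitly flags that the $\alpha_t\propto 1/(t+1)$ choice lies outside Theorem~\ref{thm2}).

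The one substantive caveat concerns how you handle the endpoint $g_1=1$. Of the two resolutions you offer, the continuity argument as $g_1\to 1^-$ is a dead end: the explicit constant in \eqref{theorem2-proof-regret} contains several factors of $1/(1-g_1)$ (coming from bounding $\sum_{t\le T}\alpha_t$ via \eqref{theorem1-proof-eq20}), and the strong-convexity threshold $\varpi_5=\lceil(20p^2F_1^2/(r(\mathbb{X})^2\mu))^{1/(1-g_1)}\rceil-1$ also diverges as $g_1\to1^-$ whenever $\mu<20p^2F_1^2/r(\mathbb{X})^2$, so the bound does not pass to the limit. The only workable option is your first one: rerun the proof of Theorem~\ref{thm2} with $\alpha_t\propto 1/(t+1)$ inserted directly, replacing every occurrence of $\sum_{t\le T}\alpha_t=\mathcal{O}(T^{1-g_1}/(1-g_1))$ by $\mathcal{O}(\log T)$ via \eqref{theorem1-proof-eq21} and reverifying the telescoping step \eqref{theorem2-proof-eq3}--\eqref{theorem2-proof-eq4}. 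Doing so is exactly what the paper's own (unwritten) argument requires, and it is where the remaining work lies; note in particular that the term bounded in \eqref{theorem1-proof-eq19}, of the form $(T+1)^{2g_2+2g_3}\sum_t\alpha_t$, then lands at $\mathcal{O}(T^{2/3+4g/3}\log T)$ under the paper's own crude estimates, so absorbing that logarithm (e.g.\ by bounding $\sum_k\lambda^k(k+t+3)^{g_2+g_3}=\mathcal{O}((t+3)^{g_2+g_3})$ instead of pulling out $(T+1)^{g_2+g_3}$) is a step you would still need to supply.
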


\begin{remark}\label{rem5}
Note that different from Theorem~2, in Corollary~2 we can choose ${\alpha _t} = r{( \mathbb{X} )^2}/20{p^2}F_1^2( {t + 1} )$ for establishing static network regret bound.
To the best of our knowledge, Corollary~2 is among the first to establish such static network regret and CCV bounds as in \eqref{corollary2-eq2} and \eqref{corollary2-eq3} for constrained DBCO with one-point sampling.
That is different from those established by the algorithm with one-point sampling in \cite{Yuan2022} as presented in TABLE~1 due to the trade-off parameter $g$. Moreover, unlike the algorithm in \cite{Yuan2022}, our Algorithm~1 does not need to know the strongly convex parameter $\mu$.
Note that ${T^{2/3 + 4g/3}} < {T^{3/4 + g}}$ when $g  \in ( {0,1/4} )$, i.e., the static network regret bound established in Corollary~2 is smaller than that established in Corollary~1. In the absence of inequality constraints, our Algorithm~1 establishes an $\mathcal{O}( {{T^{2/3}}} )$ static network regret bound, which is the same as that established by the centralized algorithm with one-point sampling in \cite{Saha2011}. However, the algorithm in \cite{Saha2011} needs to know the total number of iterations $T$ in advance, and does not consider inequality constraints. The static network regret bound \eqref{corollary2-eq2} is larger than that established by the distributed algorithm with two-point sampling in \cite{Yi2023}. That is reasonable since our Algorithm~1 uses less information. The network CCV bound is larger than that established by the distributed algorithm in \cite{Yuan2022}. That is also reasonable because the algorithm in \cite{Yuan2022} uses full-information feedback for local constraint functions, while our Algorithm~1 only uses one-point sampling.
\end{remark}

\section{Numerical Simulations}
To evaluate the performance of Algorithm~1, this section considers a distributed online ridge regression problem with time-varying linear inequality constraints. At iteration $t$, the local loss function is ${l_{i,t}}( x ) = \frac{1}{2}{( {a_{i,t}^Tx - {m_{i,t}}} )^2} + \lambda {\| x \|^2}$ and the local constraint function is ${c_{i,t}}\left( x \right) = {B_{i,t}}x - {b_{i,t}}$, where ${a_{i,t}} \in {\mathbb{R}^{16}}$ is the feature information and its elements are randomly generated from the uniform distribution in $[ { - 5,5} ]$,
${m_{i,t}} \in \mathbb{R}$ is the label information and satisfies ${m_{i,t}} = a_{i,t}^T{x_{i,1}} + \frac{{{\upsilon _{i,t}}}}{{4t}}$ with ${{\upsilon _{i,t}}}$ being randomly generated from the uniform distribution in $[ { 0,1} ]$,
the parameter $\lambda  = 5 \times {10^{ - 6}}$, and the elements of ${B_{i,t}} \in {\mathbb{R}^{{2} \times {16}}}$ and ${b_{i,t}} \in {\mathbb{R}^{{2}}}$ are randomly generated from the uniform distribution in $[ {0,2} ]$ and $[ {0,1} ]$, respectively.
The network with $100$ agents is modeled by a random undirected connected graph. Specifically, connections
between agents are random and the probability of two agents being connected is $0.1$.
To guarantee that Assumption~2 holds, we add edges $( {i,i + 1} )$ for $i \in [ {n - 1} ]$, and let ${[ {{W_t}} ]_{ij}} = 1/n$ if $( {j,i} ) \in \mathcal{E}_t$ and ${[ {{W_t}} ]_{ii}} = 1 - \sum\nolimits_{j = 1}^n {{{[ {{W_t}} ]}_{ij}}}$ with $j \ne i$. Moreover, we set $\mathbb{X} = {[ { - 2,2} ]^{16}}$.

We compare our Algorithm~1 with the distributed bandit online algorithm in \cite{Yuan2022} where one-point sampling is used for the local loss function and full-information feedback is used for the local constraint function, and the distributed bandit online algorithm in \cite{Yi2023} where two-point sampling is used for the local loss and constraint functions.
Note that our Algorithm~1 uses one-point sampling for the local loss and constraint functions, which requires less information compare to the algorithms in \cite{Yuan2022, Yi2023}.
To demonstrate this point, the evolutions of the total number of samples of our Algorithm~1 and the algorithm in \cite{Yi2023} are illustrated in Fig.~1.
Since the algorithm in \cite{Yuan2022} uses full-information feedback for the local constraint function, we do not count its total number of samples for the sake of fairness. 
Fig.~1 demonstrates that the total number of samples of our Algorithm~1 is significantly less than that of the algorithm in \cite{Yi2023}.
Fig.~2 and Fig.~3 illustrate the evolutions of the average cumulative loss $\frac{1}{n}\sum\nolimits_{i = 1}^n {\sum\nolimits_{t = 1}^T {{l_t}( {{x_{i,t}}} )} } /T$ and the average CCV $\frac{1}{n}\sum\nolimits_{i = 1}^n {\sum\nolimits_{t = 1}^T {\| {{{[ {{c_t}( {{x_{i,t}}} )} ]}_ + }} \|} } /T$, respectively.
Fig.~2 demonstrates that the average cumulative loss of our Algorithm~1 is almost the same as that of the algorithm in \cite{Yuan2022}, and slightly larger than that of the algorithm in \cite{Yi2023}.
This is reasonable because the algorithm in \cite{Yuan2022} uses one-point sampling for the local loss function, which is also used by our Algorithm~1, while the algorithm in \cite{Yi2023} uses two-point sampling for the local loss function.
Fig.~3 demonstrates that the average CCV of our Algorithm~1 is larger than that of the algorithms in \cite{Yuan2022, Yi2023}.
This is also reasonable because the algorithm in \cite{Yuan2022} uses full-information feedback for the local constraint function, and the algorithm in \cite{Yi2023} uses two-point sampling, while our Algorithm~1 uses one-point sampling.
These results are in consistent with the theoretical results established in Corollary~1.

\begin{figure}[!ht]
 \centering
  \includegraphics[width=7.8cm]{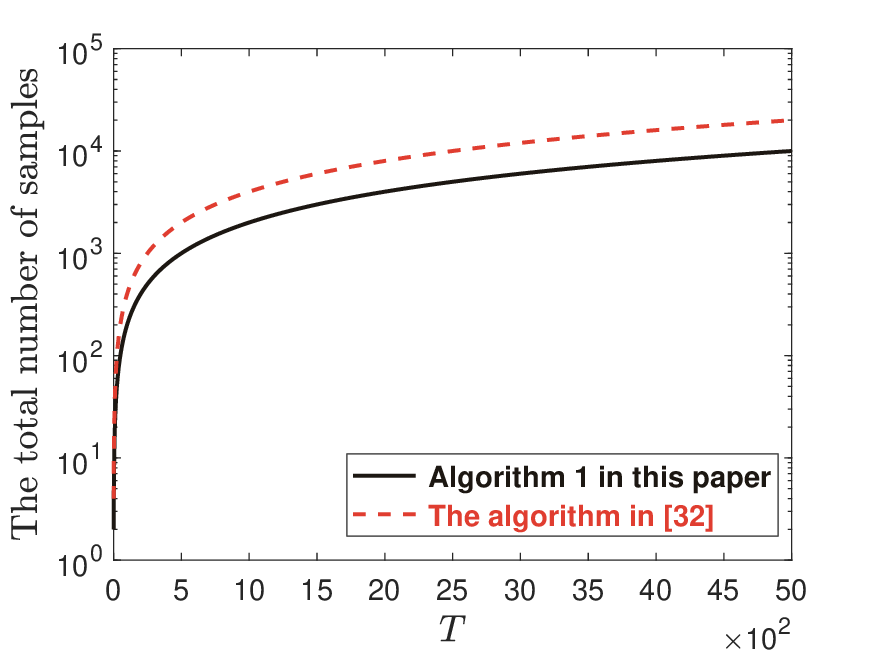}
  \caption{Evolutions of the total number of samples.}
\end{figure}

\begin{figure}[!ht]
 \centering
  \includegraphics[width=7.8cm]{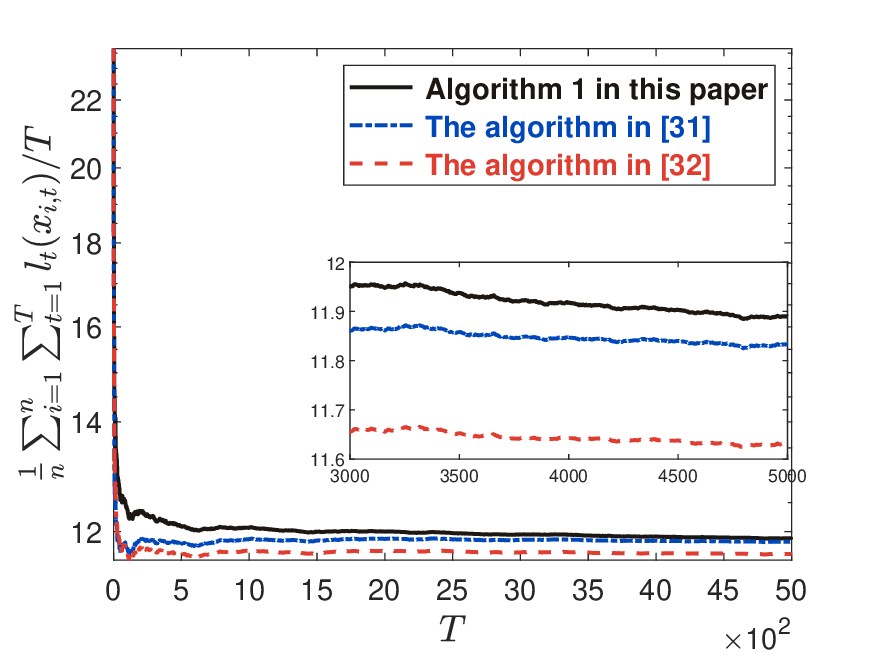}
  \caption{Evolutions of $\frac{1}{n}\sum\nolimits_{i = 1}^n {\sum\nolimits_{t = 1}^T {{l_t}( {{x_{i,t}}} )} } /T$.}
\end{figure}

\begin{figure}[!ht]
 \centering
  \includegraphics[width=7.8cm]{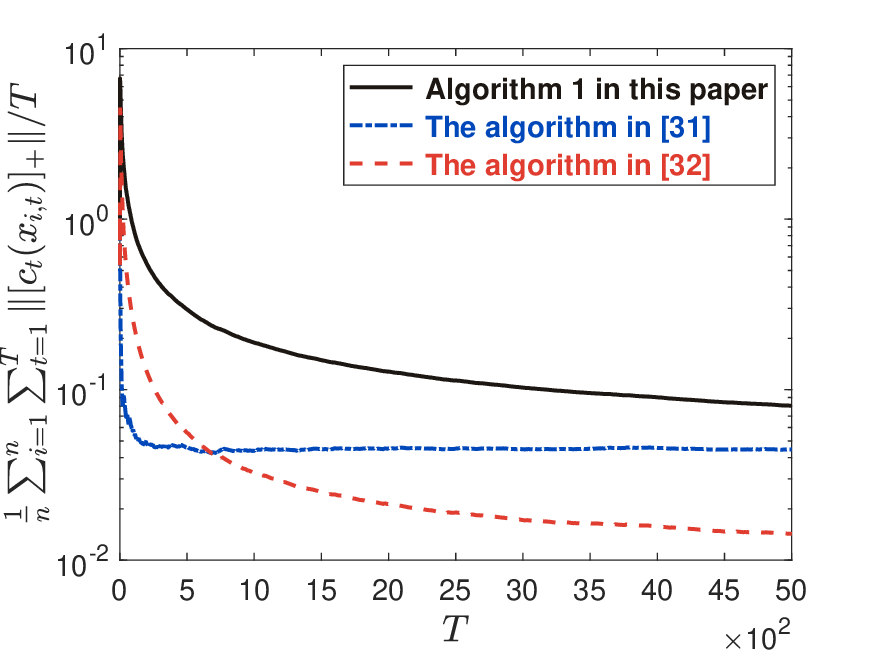}
  \caption{Evolutions of $\frac{1}{n}\sum\nolimits_{i = 1}^n {\sum\nolimits_{t = 1}^T {\| {{{[ {{c_t}( {{x_{i,t}}} )} ]}_ + }} \|} } /T$.}
\end{figure}

\section{Conclusions}
This paper considered the DBCO problem with time-varying constraints. We proposed a distributed bandit online primal--dual algorithm with one-point sampling where one-point sampling for local loss and constraint functions was used. We established sublinear network regret and CCV bounds for both convex and strongly convex local loss functions. In the future, we will investigate distributed bandit online projection algorithms with compressed communication to reduce communication overhead.

\section*{Acknowledgements}
The first author would like to thank Prof. Ming Cao for the useful discussion, who is affiliated with the Engineering and Technology Institute Groningen, Faculty of Science and Engineering, University of Groningen, 9747 AG Groningen, The Netherlands.

\appendix

\vspace{2mm}

\subsection{Useful Lemmas}


\begin{lemma}\label{lem1}
(Lemma 3 in \cite{Yi2021b}) For convex set $\mathbb{K} \in {\mathbb{R}^p}$, $\tilde{q}$, $\tilde{p}$, and $\tilde{w}$ $ \in {\mathbb{R}^p}$, the results as follows hold.

\noindent
(a) When $\tilde{q} \le \tilde{p}$, one has\par\nobreak\vspace{-10pt}
\begin{small}
\begin{flalign}
\| {{{[ \tilde{q} ]}_ + }} \| \le \| \tilde{p} \|{{\; \mathrm{and} \;}}{[ \tilde{q} ]_ + } \le {[ \tilde{p} ]_ + }. \label{lemma1-eq1}
\end{flalign}
\end{small}%

\noindent
(b) When ${{\theta _1} } = \mathcal{P}{}_\mathbb{K}( {\tilde{w} - \tilde{q}} )$, one has\par\nobreak\vspace{-10pt}
\begin{small}
\begin{flalign}
\nonumber
&\;\;\;\;\; 2\langle {{\theta _1} - y_1,\tilde{q}} \rangle \\
&\le {\| {\theta _2 - \tilde{w}} \|^2} - {\| {\theta _2 - {\theta _1}} \|^2} - {\| {{\theta _1} - \tilde{w}} \|^2},\forall \theta _2 \in \mathbb{K}. \label{lemma1-eq2}
\end{flalign}
\end{small}%
\end{lemma}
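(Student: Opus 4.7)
The plan is to handle part (a) by componentwise reasoning and part (b) by invoking the standard variational characterization of the Euclidean projection. Both are classical observations, so the sketch need not be long; I will simply organize the steps cleanly since the rest of the paper cites this lemma repeatedly.

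For part (a), I would first note that $[\cdot]_+$ acts componentwise, so it suffices to argue on each coordinate. Fix an index $i$. If $\tilde{q}_i \le 0$, then $[\tilde{q}_i]_+ = 0 \le [\tilde{p}_i]_+$ trivially. If $\tilde{q}_i > 0$, then $\tilde{p}_i \ge \tilde{q}_i > 0$, so $[\tilde{p}_i]_+ = \tilde{p}_i \ge \tilde{q}_i = [\tilde{q}_i]_+$. This yields the componentwise inequality $[\tilde{q}]_+ \le [\tilde{p}]_+$. Combining this with the obvious bound $[\tilde{p}_i]_+ \le |\tilde{p}_i|$ and taking Euclidean norms gives $\|[\tilde{q}]_+\| \le \|[\tilde{p}]_+\| \le \|\tilde{p}\|$.

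For part (b), I would start from the first-order optimality condition for the projection $\theta_1 = \mathcal{P}_{\mathbb{K}}(\tilde{w}-\tilde{q})$, namely the variational inequality
\begin{flalign*}
\langle (\tilde{w} - \tilde{q}) - \theta_1,\; \theta_2 - \theta_1\rangle \le 0, \quad \forall \theta_2 \in \mathbb{K}.
\end{flalign*}
Rearranging gives $\langle \tilde{q},\, \theta_1 - \theta_2\rangle \le \langle \tilde{w} - \theta_1,\, \theta_1 - \theta_2\rangle$. To convert the right-hand side into the difference of squared norms appearing in the claim, I would apply the polarization identity $2\langle a,b\rangle = \|a+b\|^2 - \|a\|^2 - \|b\|^2$ with $a=\tilde{w}-\theta_1$ and $b=\theta_1-\theta_2$, so $a+b = \tilde{w}-\theta_2$, yielding
\begin{flalign*}
2\langle \tilde{w}-\theta_1,\, \theta_1-\theta_2\rangle = \|\theta_2 - \tilde{w}\|^2 - \|\theta_1 - \tilde{w}\|^2 - \|\theta_2 - \theta_1\|^2.
\end{flalign*}
Multiplying the rearranged variational inequality by two and substituting delivers the stated bound (reading $y_1$ in the statement as $\theta_2$, since only a single auxiliary point is quantified).

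There is really no main obstacle here; each half reduces in one step to a textbook fact (monotonicity of the positive part and the three-point projection inequality). The only care needed is to make sure the sign conventions in the variational inequality are aligned with the ordering $\theta_1 - \theta_2$ versus $\theta_2 - \theta_1$ so that the final inequality comes out in the form stated, and to flag that the claim is quoted verbatim from \cite{Yi2021b}, whose proof can be cited rather than reproduced.
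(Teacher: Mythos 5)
Your proof is correct. The paper gives no argument for this lemma at all---it is imported verbatim as Lemma~3 of \cite{Yi2021b}---so the only comparison available is that your two steps (componentwise monotonicity of $[\cdot]_+$ for part (a), and the projection variational inequality $\langle (\tilde{w}-\tilde{q})-\theta_1,\,\theta_2-\theta_1\rangle\le 0$ combined with the polarization identity for part (b)) are exactly the standard derivations the cited reference relies on, and both check out. You are also right to read the $y_1$ in \eqref{lemma1-eq2} as a typo for $\theta_2$; with that substitution your final three-point inequality coincides with the stated bound.
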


\begin{lemma}\label{lem2}
Let Assumption 1 hold. For any $i \in [ n ]$, $t \in {\mathbb{N}_ + }$, $x \in ( {1 - {\xi _t}} )\mathbb{X}$, $q \in \mathbb{R}_ + ^{{m_i}}$, the functions ${\hat l_{i,t}}( x )$ and ${[ {{{\hat c}_{i,t}}( x )} ]_ + }$ are convex and satisfy\par\nobreak\vspace{-10pt}
\begin{small}
\begin{subequations}
\begin{flalign}
&\partial {\hat l_{i,t}}( x ) = {{\rm{\mathbf{E}}}_{{\mathfrak{U}_t}}}[ {\hat \partial {l_{i,t}}( x )} ], \label{lemma2-eq1a}\\
&{l_{i,t}}( x ) \le {\hat l_{i,t}}( x ) \le {l_{i,t}}( x ) + {F_2}{\delta _t}, \label{lemma2-eq1b}\\
&\| {\hat \partial {l_{i,t}}( x )} \| \le \frac{{p{F_1}}}{{{\delta _t}}}, \label{lemma2-eq1c}\\
&\partial {[ {{{\hat c}_{i,t}}( x )} ]_ + } = {{\rm{\mathbf{E}}}_{{\mathfrak{U}_t}}}\big[ {\hat \partial {{[ {{c_{i,t}}( x )} ]}_ + }} \big], \label{lemma2-eq1d}\\
\nonumber
&{q^T}{[ {{c_{i,t}}( x )} ]_ + } \le {q^T}{[ {{{\hat c}_{i,t}}( x )} ]_ + } \\
&\le {q^T}{[ {{c_{i,t}}( x )} ]_ + } + {F_2}{\delta _t}\| q \|, \label{lemma2-eq1e}\\
&\| {\hat \partial {{[ {{c_{i,t}}( x )} ]}_ + }} \| \le \frac{{p{F_1}}}{{{\delta _t}}}, \label{lemma2-eq1f}\\
&\| {{{[ {{{\hat c}_{i,t}}( x )} ]}_ + }} \| \le {F_1}. \label{lemma2-eq1g}
\end{flalign}
\end{subequations}
\end{small}%
\end{lemma}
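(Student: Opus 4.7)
The plan is to establish all seven properties by adapting the classical one–point gradient estimator framework of Flaxman--Kalai--McMahan, applied both to the loss $l_{i,t}$ and to its ``clipped'' constraint counterpart $[c_{i,t}]_+$. First, I would note that convexity of $\hat{l}_{i,t}$ and $[\hat{c}_{i,t}]_+$ is free: the $\delta_t$--smoothings are pointwise expectations of convex functions (using that $[\cdot]_+$ is convex and monotone, so $[c_{i,t}]_+$ is convex), and convexity is preserved under expectation. The domain condition $x \in (1-\xi_t)\mathbb{X}$ together with $\delta_t \le r(\mathbb{X})\xi_t$ and Assumption~1(a) guarantees that $x+\delta_t v \in \mathbb{X}$ for all $v \in \mathbb{B}^p$, so the smoothings are well defined.

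For (a) and (d), the unbiasedness identities reduce to the Stokes/divergence identity: for any integrable $f:\mathbb{R}^p\to\mathbb{R}$, the ball-smoothing $\hat{f}(x)=\mathbf{E}_{v\in\mathbb{B}^p}[f(x+\delta_t v)]$ satisfies $\nabla \hat{f}(x)=(p/\delta_t)\mathbf{E}_{u\in\mathbb{S}^p}[f(x+\delta_t u)\,u]$. Applying this to $f=l_{i,t}$ yields (a); applying it componentwise to $f=[c_{i,t}(\cdot)]_j{}_+$ and then stacking gives the matrix identity (d) via the Kronecker-product form in \eqref{algorithm-eq2}. Under Assumption~1(c), both $l_{i,t}$ and $[c_{i,t}]_+$ are $F_2$--Lipschitz (since $[\cdot]_+$ is $1$-Lipschitz), which is all the regularity that the Stokes identity requires.

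For (b), the lower bound $l_{i,t}(x)\le \hat{l}_{i,t}(x)$ follows from Jensen's inequality together with $\mathbf{E}_{v\in\mathbb{B}^p}[v]=\mathbf{0}$; the upper bound uses $F_2$--Lipschitz continuity: $l_{i,t}(x+\delta_t v)\le l_{i,t}(x)+F_2\delta_t\|v\|\le l_{i,t}(x)+F_2\delta_t$. Part (e) is strictly analogous after pairing with $q\ge \mathbf{0}$: Jensen applied to the convex map $[c_{i,t}]_+$ produces the lower bound, while Lipschitz continuity of $[c_{i,t}]_+$ together with Cauchy--Schwarz and $q\ge \mathbf{0}$ produces the upper bound $q^T[\hat{c}_{i,t}(x)]_+\le q^T[c_{i,t}(x)]_+ + F_2\delta_t\|q\|$. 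For (c) and (f), taking norms directly in the definitions \eqref{algorithm-eq1}--\eqref{algorithm-eq2} and using $\|u_{i,t}\|=1$, $|l_{i,t}|\le F_1$, and $\|[c_{i,t}]_+\|\le\|c_{i,t}\|\le F_1$ (Assumption~1(b)) yields the $pF_1/\delta_t$ bound. Finally, (g) follows from Jensen: $\|[\hat{c}_{i,t}(x)]_+\|\le \mathbf{E}_v\|[c_{i,t}(x+\delta_t v)]_+\|\le \mathbf{E}_v\|c_{i,t}(x+\delta_t v)\|\le F_1$.

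The only genuinely delicate point is the Stokes identity underlying parts (a) and (d): $l_{i,t}$ and $[c_{i,t}]_+$ are only assumed convex and Lipschitz, not smooth, so one must justify differentiability of the smoothings and validity of the divergence-theorem computation. I would handle this by a standard mollification or by noting that ball-smoothing of a Lipschitz function is continuously differentiable with the required gradient representation (this is the precise content of the Flaxman--Kalai--McMahan lemma, and is already invoked in \cite{Yi2021b}, whose Lemma~3 is cited here as Lemma~\ref{lem1}). Every other property is a short computation from Assumption~1 and Jensen's inequality.
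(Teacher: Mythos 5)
Your proof is correct, but it is a self-contained derivation of facts that the paper simply obtains by citation: the paper's entire proof of this lemma is to invoke (39a)--(39g) of Lemma~6 in \cite{Yi2021b} (with the bound on $|[c_{i,t}(x)]_j|$ there replaced by \eqref{ass-eq2b}), whereas you reconstruct the content of that cited lemma from first principles. The underlying mathematics is the same in both cases --- the Flaxman--Kalai--McMahan ball-smoothing identity for (a) and (d), Jensen plus $F_2$-Lipschitz continuity (from Assumption~1(c)) for (b) and (e), direct norm bounds on the estimators for (c) and (f), and Jensen again for (g) --- so your route buys transparency and independence from \cite{Yi2021b} at the cost of length, while the paper's route is a one-line reduction. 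Two details in your write-up deserve credit: you correctly isolate the only genuinely delicate step, namely justifying differentiability of the smoothing and the divergence-theorem computation for merely convex Lipschitz functions, and you correctly note that the pairing with $q\ge\mathbf{0}_{m_i}$ is what makes Jensen applicable to the vector-valued clipped constraint in (e) (the paper likewise flags that (e) needs $q\in\mathbb{R}_+^{m_i}$ on top of (39e) of \cite{Yi2021b}). I see no gap.
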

\begin{proof}
\eqref{lemma2-eq1a}--\eqref{lemma2-eq1d} can be obtained from (39a)--(39d) of Lemma 6 in \cite{Yi2021b}, respectively. From (39e) of Lemma 6 in \cite{Yi2021b} and $q \in \mathbb{R}_ + ^{{m_i}}$, we have \eqref{lemma2-eq1e}. Replacing $| {{{[ {{c_{i,t}}( x )} ]}_j}} | \le {F_{{g_i}}}$ of Assumption~2 in \cite{Yi2021b} with \eqref{ass-eq2b}, similar to obtaining (39f) and (39g) of Lemma 6 in \cite{Yi2021b}, we have \eqref{lemma2-eq1f} and \eqref{lemma2-eq1g}.
\end{proof}

\subsection{The Proof of Theorem 1}
Denote ${\varpi _1} = \tau \sum\nolimits_{j = 1}^n {\| {{e_{j,1}}} \|} $, ${\varpi _2} = {\varpi _1}/\lambda ( {1 - \lambda } )$, ${\varpi _3} = 2{F_2} + {n^2}{F_2}{\tau ^2}/2{( {1 - \lambda } )^2}$, ${{\hat y}_t} = ( {1 - {\xi _t}} ){y_t}$, ${{\bar e}_t} = \frac{1}{n}\sum\nolimits_{j = 1}^n {{e_{j,t}}} $, $\psi _t^1 = R( \mathbb{X} ){\xi _t} + {\delta _t}$, $\psi _t^2 = 1/{\gamma _t} - 1/{\gamma _{t + 1}} + {\beta _{t + 1}}$, ${\psi _{i,t}} = {q_{i,t}} - {\mu _i}$, ${\Delta _{i,t}}( {{\mu_i}} ) = ( {1/2{\gamma _t}})\big( {{{\| \psi _{i,t} \|}^2} - ( {1 - {\beta _t}{\gamma _t}} ){{\| \psi _{i,t - 1} \|}^2}} \big)$ with ${{\mu _i}}$ being an arbitrary vector in $\mathbb{R}_ + ^{{m_i}}$, ${\varepsilon _{i,t}} = {e_{i,t}} - {e_{i,t + 1}}$, $\varepsilon _{i,t}^e = {e_{i,t + 1}} - {z_{i,t + 1}}$, $\varepsilon _{i,t}^1 = q_{i,t}^T\big( {{{[ {{c_{i,t}}( {{y_t}} )} ]}_ + } - {{[ {{c_{i,t}}( {{x_{i,t}}} )} ]}_ + }} \big)$, and $\varepsilon _{i,t}^2 = {\| {{{\hat y}_t} - {z_{i,t + 1}}} \|^2} - {\| {{{\hat y}_{t + 1}} - {z_{i,t + 2}}} \|^2} + {\| {{{\hat y}_{t + 1}} - {e_{i,t + 1}}} \|^2} - {\| {{{\hat y}_t} - {e_{i,t + 1}}} \|^2}$. Let ${\mathfrak{U}_t}$ be the $\sigma$-algebra induced by the independent and identically distributed variables ${u_{1,t}}, \cdot  \cdot  \cdot ,{u_{n,t}}$.

In the following, we give several preliminary lemmas.
The disagreement between the agents is first quantified.
\begin{lemma}\label{lem4}
(Lemma 4 in \cite{Yi2023}) Let Assumption~2 hold. For all $i \in [ n ]$ and $t \in {\mathbb{N}_ + }$, ${e_{i,t}}$ induced by Algorithm 1 satisfy\par\nobreak\vspace{-10pt}
\begin{small}
\begin{flalign}
\nonumber
\| {{e_{i,t}} - {{\bar e}_t}} \| &\le {\varpi _1}{\lambda ^{t - 2}} + \frac{1}{n}\sum\limits_{j = 1}^n {\| {\varepsilon _{j,t - 1}^e} \|}  + \| {\varepsilon _{i,t - 1}^e} \| \\
&\;\;\;\;\; + \tau \sum\limits_{k = 1}^{t - 2} {{\lambda ^{t - k - 2}}} \sum\limits_{j = 1}^n {\| {\varepsilon _{j,k}^e} \|}. \label{lemma3-eq1}
\end{flalign}
\end{small}%
where $\lambda  = {( {1 - \omega /4{n^2}} )^{1/B}} \in ( {0,1} )$ and $\tau  = {( {1 - \omega /4{n^2}} )^{ - 2}} > 1$.
\end{lemma}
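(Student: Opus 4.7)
The plan is to unroll the primal update recursion for $e_{i,t}$ in terms of products of the mixing matrices and the projection residuals, use double stochasticity to obtain a matching expansion for $\bar{e}_t$, and then invoke a geometric consensus estimate bounding the entrywise deviation of a transition-matrix product from $1/n$. Concretely, I would first rewrite equations \eqref{Algorithm1-eq1} and \eqref{Algorithm1-eq3} as $e_{i,t+1} = \sum_{j=1}^n [W_t]_{ij} e_{j,t} + \varepsilon_{i,t}^e$, where $\varepsilon_{i,t}^e := e_{i,t+1} - z_{i,t+1}$ isolates the projected primal step as a perturbation of a pure consensus update. Iterating this recursion backward from $t$ down to $1$, with $\Phi(a,b) := W_a W_{a-1}\cdots W_b$ for $a \ge b$ and $\Phi(a,a+1) := I$, yields
\[
e_{i,t} = \sum_{j=1}^n [\Phi(t-1,1)]_{ij} e_{j,1} + \sum_{k=1}^{t-2}\sum_{j=1}^n [\Phi(t-1,k+1)]_{ij}\varepsilon_{j,k}^e + \varepsilon_{i,t-1}^e.
\]

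Next, using that each $W_t$ is doubly stochastic by Assumption~2(a), averaging the recursion over $i$ collapses every weighted sum into an unweighted one and gives $\bar{e}_{t+1} = \bar{e}_t + \tfrac{1}{n}\sum_j \varepsilon_{j,t}^e$; telescoping then delivers $\bar e_t = \bar e_1 + \tfrac{1}{n}\sum_{k=1}^{t-1}\sum_{j=1}^n \varepsilon_{j,k}^e$. Subtracting this from the expression for $e_{i,t}$, I would peel off the last-step terms $\varepsilon_{i,t-1}^e - \tfrac{1}{n}\sum_j \varepsilon_{j,t-1}^e$ and cast every remaining contribution into the form $\bigl([\Phi(t-1,\cdot)]_{ij} - 1/n\bigr)(\cdot)$, so that the entire deviation is expressed through the entrywise discrepancies of backward products of $W_t$ from the uniform average.

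Finally, I would apply the triangle inequality together with the standard geometric bound $|[\Phi(a,b)]_{ij} - 1/n| \le \tau \lambda^{a-b}$ with $\lambda = (1-\omega/4n^2)^{1/B}$ and $\tau = (1-\omega/4n^2)^{-2}$, a classical Nedic--Ozdaglar-type consequence of Assumption~2(b)--(c) for uniformly jointly strongly connected doubly stochastic chains. The initial-state contribution then produces $\tau \lambda^{t-2}\sum_j \|e_{j,1}\| = \varpi_1 \lambda^{t-2}$; the middle double sum produces $\tau \sum_{k=1}^{t-2}\lambda^{t-k-2}\sum_j \|\varepsilon_{j,k}^e\|$; and the isolated last-step contribution produces $\|\varepsilon_{i,t-1}^e\| + \tfrac{1}{n}\sum_j \|\varepsilon_{j,t-1}^e\|$, matching the asserted inequality. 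The main obstacle is really just careful index bookkeeping so that the exponent of $\lambda$ is correctly aligned with the second argument of $\Phi$ at each time $k$; the geometric consensus estimate itself is applied as a black box and is the only substantive analytic input.
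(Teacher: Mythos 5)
Your proposal is correct, and there is nothing in the paper to compare it against at the level of detail: the paper does not prove this lemma but imports it verbatim as Lemma~4 of \cite{Yi2023}. Your argument --- rewriting the update as a perturbed consensus recursion $e_{i,t+1}=\sum_j [W_t]_{ij}e_{j,t}+\varepsilon_{i,t}^e$, unrolling, subtracting the telescoped average $\bar e_t=\bar e_1+\tfrac{1}{n}\sum_{k=1}^{t-1}\sum_j\varepsilon_{j,k}^e$, and invoking the geometric bound $|[\Phi(a,b)]_{ij}-1/n|\le\tau\lambda^{a-b}$ --- is exactly the standard route to this estimate, and your index bookkeeping (exponent $t-2$ for the initial term, $t-k-2$ for the $k$-th perturbation, and the separately peeled last-step terms) reproduces the stated constants, including $\varpi_1=\tau\sum_{j=1}^n\|e_{j,1}\|$.
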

The evolution of local dual variables is then characterized.
\begin{lemma}\label{lem5}
Let Assumptions 1 hold and ${\gamma _t}{\beta _t} \le 1$, $t \in {\mathbb{N}_ + }$. For all $i \in [ n ]$ and $t \in {\mathbb{N}_ + }$, ${q_{i,t}}$ induced by Algorithm 1 satisfy\par\nobreak\vspace{-10pt}
\begin{small}
\begin{flalign}
{\Delta _{i,t}}( {{\mu _i}} ) \le 2F_1^2{\gamma _t} + \psi _{i,t - 1}^T{[ {{c_{i,t - 1}}( {{x_{i,t - 1}}} )} ]_ + } + \frac{1}{2}{\beta _t}{\| {{\mu _i}} \|^2}. \label{lemma4-eq1}
\end{flalign}
\end{small}
\end{lemma}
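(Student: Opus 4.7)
The plan is to bound $\|\psi_{i,t}\|^{2}$ directly from the dual update~\eqref{Algorithm1-eq5}, rearrange into the form of $\Delta_{i,t}(\mu_i)$, and divide by $2\gamma_t$. The starting point is the non-expansiveness of the projection onto $\mathbb{R}_+^{m_i}$: since $\mu_i\in\mathbb{R}_+^{m_i}$ satisfies $\mu_i=[\mu_i]_+$, we get
\begin{flalign*}
\|\psi_{i,t}\|^{2}=\|q_{i,t}-\mu_i\|^{2}\le\bigl\|(1-\beta_t\gamma_t)q_{i,t-1}+\gamma_t[c_{i,t-1}(x_{i,t-1})]_+-\mu_i\bigr\|^{2}.
\end{flalign*}

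Next I would expand the right-hand side into three pieces: the ``dual quadratic'' $\|(1-\beta_t\gamma_t)q_{i,t-1}-\mu_i\|^{2}$, the cross term $2\gamma_t\langle(1-\beta_t\gamma_t)q_{i,t-1}-\mu_i,[c_{i,t-1}(x_{i,t-1})]_+\rangle$, and the quadratic $\gamma_t^{2}\|[c_{i,t-1}(x_{i,t-1})]_+\|^{2}$. For the first piece, use the algebraic identity $(1-\beta_t\gamma_t)q_{i,t-1}-\mu_i=(1-\beta_t\gamma_t)\psi_{i,t-1}-\beta_t\gamma_t\mu_i$, expand, and apply $-2\langle a,b\rangle\le\|a\|^{2}+\|b\|^{2}$ with the balanced weights $a=\sqrt{\beta_t\gamma_t(1-\beta_t\gamma_t)}\,\psi_{i,t-1}$ and $b=\sqrt{\beta_t\gamma_t(1-\beta_t\gamma_t)}\,\mu_i$; the hypothesis $\beta_t\gamma_t\le 1$ keeps the radicand non-negative and the coefficients telescope neatly into $(1-\beta_t\gamma_t)\|\psi_{i,t-1}\|^{2}+\beta_t\gamma_t\|\mu_i\|^{2}$. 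For the third piece, Assumption~1(b) via \eqref{ass-eq2b} yields $\|[c_{i,t-1}(x_{i,t-1})]_+\|\le F_1$, so this term is at most $\gamma_t^{2}F_1^{2}\le 4\gamma_t^{2}F_1^{2}$.

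For the cross term, I would split $(1-\beta_t\gamma_t)q_{i,t-1}-\mu_i=\psi_{i,t-1}-\beta_t\gamma_t q_{i,t-1}$, so that it equals $2\gamma_t\psi_{i,t-1}^{T}[c_{i,t-1}(x_{i,t-1})]_+-2\beta_t\gamma_t^{2}\langle q_{i,t-1},[c_{i,t-1}(x_{i,t-1})]_+\rangle$. The second summand is non-positive because both $q_{i,t-1}$ (inductively, starting from $q_{i,1}=\mathbf{0}$ and using that $[\cdot]_+$ preserves non-negativity) and $[c_{i,t-1}(x_{i,t-1})]_+$ lie in $\mathbb{R}_+^{m_i}$, so it can simply be dropped.

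Combining the three bounds gives
\begin{flalign*}
\|\psi_{i,t}\|^{2}\le(1-\beta_t\gamma_t)\|\psi_{i,t-1}\|^{2}+\beta_t\gamma_t\|\mu_i\|^{2}+2\gamma_t\psi_{i,t-1}^{T}[c_{i,t-1}(x_{i,t-1})]_++4\gamma_t^{2}F_1^{2},
\end{flalign*}
and dividing by $2\gamma_t$ yields exactly \eqref{lemma4-eq1}. The main obstacle I foresee is the bookkeeping in the dual-quadratic step: one must recover the coefficient $(1-\beta_t\gamma_t)$ rather than $(1-\beta_t\gamma_t)^{2}$ in front of $\|\psi_{i,t-1}\|^{2}$, and simultaneously keep the $\|\mu_i\|^{2}$ coefficient at exactly $\beta_t\gamma_t$; this forces the specific weighted Young inequality described above and is precisely where the hypothesis $\beta_t\gamma_t\le 1$ enters.
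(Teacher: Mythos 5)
Your proof is correct, and it reaches \eqref{lemma4-eq1} by a genuinely different decomposition than the paper's. The paper groups the pre-projection point as $q_{i,t-1}-\mu_i+\gamma_t\big([c_{i,t-1}(x_{i,t-1})]_+-\beta_tq_{i,t-1}\big)$, so the squared increment $\gamma_t^2\|[c_{i,t-1}(x_{i,t-1})]_+-\beta_tq_{i,t-1}\|^2$ mixes the constraint value with the dual iterate; to control it the paper first proves, by a separate induction, the uniform bound $\|\beta_tq_{i,t}\|\le F_1$, and it handles the remaining term $-2\beta_t\gamma_t\psi_{i,t-1}^Tq_{i,t-1}$ via the polarization identity, arriving at $\beta_t\gamma_t(\|\mu_i\|^2-\|\psi_{i,t-1}\|^2)$. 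You instead isolate $\|(1-\beta_t\gamma_t)q_{i,t-1}-\mu_i\|^2$ and absorb it into $(1-\beta_t\gamma_t)\|\psi_{i,t-1}\|^2+\beta_t\gamma_t\|\mu_i\|^2$ with a weighted Young inequality (where $\beta_t\gamma_t\le 1$ enters, exactly as it enters the paper's argument through the sign of $1-\beta_t\gamma_t$), keep the constraint quadratic as the clean $\gamma_t^2\|[c_{i,t-1}(x_{i,t-1})]_+\|^2\le\gamma_t^2F_1^2$, and discard the cross term $-2\beta_t\gamma_t^2\langle q_{i,t-1},[c_{i,t-1}(x_{i,t-1})]_+\rangle$ by nonnegativity of both factors. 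The net effect is that your route needs only $q_{i,t-1}\in\mathbb{R}_+^{m_i}$ (immediate from the outer projection in \eqref{Algorithm1-eq5}) rather than the inductive bound $\|\beta_tq_{i,t}\|\le F_1$, and it even yields a slightly sharper constant ($\gamma_t^2F_1^2$ in place of $4\gamma_t^2F_1^2$) before you deliberately loosen it to match the stated bound. One caveat: the paper's induction \eqref{lemma4-proof-eq1} is not wasted effort there — it is reused later in the regret analysis (e.g., in \eqref{theorem1-proof-eq13}) — so if your proof replaced the paper's, that auxiliary fact would still have to be established elsewhere.
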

\begin{proof}
Based on mathematical induction, we first prove\par\nobreak\vspace{-10pt}
\begin{small}
\begin{flalign}
\| {{\beta _t}{q_{i,t}}} \| \le {F_1}. \label{lemma4-proof-eq1}
\end{flalign}
\end{small}%

It follows from ${q_{i,1}} = {\mathbf{0}_{{m_i}}}$, $\forall i \in [ n ]$ that $\| {{\beta _1}{q_{i,1}}} \| \le {F_1}$, $\forall i \in [ n ]$.
Let us assume that $\| {{\beta _t}{q_{i,t}}} \| \le {F_1}$ for $i \in [ n ]$ hold. We need to show that $\| {{\beta _{t + 1}}{q_{i,t + 1}}} \| \le {F_1}$ for $i \in [ n ]$ remain true.

From \eqref{ass-eq2b}, we have\par\nobreak\vspace{-10pt}
\begin{small}
\begin{flalign}
\| {{{[ {{c_{i,t}}( {{x_{i,t}}} )} ]}_ + }} \| \le {F_1}. \label{lemma4-proof-eq2}
\end{flalign}
\end{small}%
From \eqref{ass-eq2b}, \eqref{Algorithm1-eq5}, \eqref{lemma1-eq1} and $\| {{\beta _t}{q_{i,t}}} \| \le {F_1}$, for $ i \in [ n ]$, we have\par\nobreak\vspace{-10pt}
\begin{small}
\begin{flalign}
\nonumber
\| {{q_{i,t + 1}}} \| &= \big\| {{{\big[ {( {1 - {\beta _{t + 1}}{\gamma _{t + 1}}} ){q_{i,t}} + {\gamma _{t + 1}}{{[ {{c_{i,t}}( {{x_{i,t}}} )} ]}_ + }} \big]}_ + }} \big\|\\
\nonumber
&  \le \| {( {1 - {\beta _{t + 1}}{\gamma _{t + 1}}} ){q_{i,t}} + {\gamma _{t + 1}}{F_1}} \|\\
\nonumber
&  \le ( {1 - {\gamma _{t + 1}}{\beta _{t + 1}}} )\frac{{{F_1}}}{{{\beta _t}}} + {\gamma _{t + 1}}{F_1}\\
&  \le ( {1 - {\gamma _{t + 1}}{\beta _{t + 1}}} )\frac{{{F_1}}}{{{\beta _{t + 1}}}} + {\gamma _{t + 1}}{F_1}
= \frac{{{F_1}}}{{{\beta _{t + 1}}}}. \label{lemma4-proof-eq3}
\end{flalign}
\end{small}%
Hence, \eqref{lemma4-proof-eq1} follows.

For any ${\mu _i} \in \mathbb{R}_ + ^{{m_i}}$, we have \par\nobreak\vspace{-10pt}
\begin{small}
\begin{flalign}
\nonumber
{\| \psi _{i,t} \|^2}
\nonumber
&   = {\big\| {{{\big[ {( {1 - {\beta _t}{\gamma _t}} ){q_{i,t - 1}} + {\gamma _t}{{[ {{c_{i,t - 1}}( {{x_{i,t - 1}}} )} ]}_ + }} \big]}_ + } - {{[ {{\mu _i}} ]}_ + }} \big\|^2}\\
\nonumber
&\le {\| {( {1 - {\beta _t}{\gamma _t}} ){q_{i,t - 1}} + {\gamma _t}{{[ {{c_{i,t - 1}}( {{x_{i,t - 1}}} )} ]}_ + } - {\mu _i}} \|^2}\\
\nonumber
&  = {\big\| {{q_{i,t - 1}} - {\mu _i} + {\gamma _t}\big( {{{[ {{c_{i,t - 1}}( {{x_{i,t - 1}}} )} ]}_ + } - {\beta _t}{q_{i,t - 1}}} \big)} \big\|^2}\\
\nonumber
&  = {\| \psi _{i,t - 1} \|^2} + \gamma _t^2{\| {{{[ {{c_{i,t - 1}}( {{x_{i,t - 1}}} )} ]}_ + } - {\beta _t}{q_{i,t - 1}}} \|^2}\\
\nonumber
&\;\;\;\;\;  + 2{\gamma _t}{\psi _{i,t - 1}^T}{[ {{c_{i,t - 1}}( {{x_{i,t - 1}}} )} ]_ + } - 2{\beta _t}{\gamma _t}{\psi _{i,t - 1}^T}{q_{i,t - 1}} \\
\nonumber
&  \le {\| \psi _{i,t - 1} \|^2} + 2\gamma _t^2{F_1}  + 2{\gamma _t}{\psi _{i,t - 1}^T}{[ {{c_{i,t - 1}}( {{x_{i,t - 1}}} )} ]_ + }\\
&\;\;\;\;\;  + {\beta _t}{\gamma _t}( {{{\| {{\mu _i}} \|}^2} - {{\| \psi _{i,t - 1} \|}^2}} ), \label{lemma4-proof-eq4}
\end{flalign}
\end{small}%
where the first inequality holds since the projection ${[  \cdot  ]_ + }$ is nonexpansive and \eqref{Algorithm1-eq5} holds, and the second inequality holds due to \eqref{lemma4-proof-eq1} and \eqref{lemma4-proof-eq2}. Hence, \eqref{lemma4-eq1} follows.
\end{proof}

We next analyze network regret at one iteration in the following lemma.
\begin{lemma}\label{lem7}
Let Assumptions 1 and 3 hold. For all $i \in [ n ]$ and arbitrary sequence $\{ {y_t} \in \mathbb{X}\} $, $\{ {{x_{i,t}}} \}$ induced by Algorithm~1 satisfy\par\nobreak\vspace{-10pt}
\begin{small}
\begin{flalign}
\nonumber
&\;\;\;\;\; \frac{1}{n}\sum\limits_{i = 1}^n {{l_t}( {{x_{i,t}}} )}  - {l_t}( {{y_t}} ) \\
\nonumber
& \le \frac{1}{n}\sum\limits_{i = 1}^n {\varepsilon _{i,t}^1}  + \frac{1}{{2n{\alpha _{t + 1}}}}\sum\limits_{i = 1}^n {{\mathbf{E}_{{\mathfrak{U}_t}}}[ {\varepsilon _{i,t}^2} ]} - \frac{1}{n}\sum\limits_{i = 1}^n {\frac{{{\mathbf{E}_{{\mathfrak{U}_t}}}[ {{{\| {\varepsilon _{i,t}^e} \|}^2}} ]}}{{2{\alpha _{t + 1}}}}} \\
\nonumber
&\;\;\;\;\;  + \frac{1}{n}\sum\limits_{i = 1}^n {( {2{F_2}\| {{e_{i,t}} - {{\bar e}_t}} \| + \frac{p{F_1}}{{{\delta _t}}}{\mathbf{E}_{{\mathfrak{U}_t}}}[ {\| \varepsilon _{i,t} \|} ]} )} \\
\nonumber
&\;\;\;\;\;  + \frac{1}{n}\sum\limits_{i = 1}^n {{\mathbf{E}_{{\mathfrak{U}_t}}}\big[ {\langle {\hat \partial {{[ {{c_{i,t}}( {{e_{i,t}}} )} ]}_ + }{q_{i,t}},\varepsilon _{i,t}} \rangle } \big]} \\
&\;\;\;\;\;  + \frac{1}{n}\sum\limits_{i = 1}^n {{F_2} ( {\psi _t^1 + {\delta _t}} ) \| {{q_{i,t}}} \|}  + {F_2}( {\psi _t^1 + 5{\delta _t}} ). \label{lemma5-eq1}
\end{flalign}
\end{small}%
\end{lemma}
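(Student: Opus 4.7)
The plan is to bound the per-iteration network regret $\frac{1}{n}\sum_i l_t(x_{i,t}) - l_t(y_t)$ by routing through intermediate points: the smoothed local losses $\hat{l}_{j,t}$, the shrunk benchmark $\hat{y}_t = (1-\xi_t)y_t$, the local anchors $e_{j,t}$, and the averaged anchor $\bar{e}_t$. Each swap contributes either a Lipschitz error of size $F_2\delta_t$ or $F_2 R(\mathbb{X})\xi_t$, a network-disagreement error, or a primal--dual inner product that eventually becomes the projection and constraint terms in \eqref{lemma5-eq1}.

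First, since $l_t = \frac{1}{n}\sum_j l_{j,t}$, write $\frac{1}{n}\sum_i l_t(x_{i,t}) - l_t(y_t) = \frac{1}{n^2}\sum_{i,j}[l_{j,t}(x_{i,t}) - l_{j,t}(e_{j,t})] + \frac{1}{n}\sum_j[l_{j,t}(e_{j,t}) - l_{j,t}(y_t)]$. The first sum is bounded via $\|\partial l_{j,t}\| \le F_2$ after splitting $\|x_{i,t} - e_{j,t}\| \le \delta_t + \|e_{i,t} - \bar{e}_t\| + \|\bar{e}_t - e_{j,t}\|$, producing the $\frac{2F_2}{n}\sum_i\|e_{i,t} - \bar{e}_t\|$ term together with a $F_2\delta_t$ constant. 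For the second sum, replace $l_{j,t}(y_t)$ by $l_{j,t}(\hat{y}_t)$ at cost $F_2 R(\mathbb{X})\xi_t$, then use Lemma~2 to swap $l_{j,t}$ for its smoothed counterpart: $l_{j,t}(e_{j,t}) - l_{j,t}(\hat{y}_t) \le \hat{l}_{j,t}(e_{j,t}) - \hat{l}_{j,t}(\hat{y}_t) + F_2\delta_t$. Convexity of $\hat{l}_{j,t}$ together with the unbiasedness $\partial \hat{l}_{j,t}(e_{j,t}) = \mathbf{E}_{\mathfrak{U}_t}[\hat\partial l_{j,t}(e_{j,t})]$ (valid because $e_{j,t}$ and $\hat{y}_t$ are $\mathfrak{U}_t$-independent) upper bounds this by $\mathbf{E}_{\mathfrak{U}_t}\langle \hat\partial l_{j,t}(e_{j,t}), e_{j,t} - \hat{y}_t\rangle$.

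Next, decompose $e_{j,t} - \hat{y}_t = \varepsilon_{j,t} + (e_{j,t+1} - \hat{y}_t)$. The piece $\langle \hat\partial l_{j,t}(e_{j,t}), \varepsilon_{j,t}\rangle$ is handled by Cauchy--Schwarz with the key one-point bound $\|\hat\partial l_{j,t}\| \le pF_1/\delta_t$ from Lemma~2(c), yielding the $\frac{pF_1}{\delta_t}\mathbf{E}_{\mathfrak{U}_t}[\|\varepsilon_{j,t}\|]$ contribution. For the other piece, substitute $\hat\partial l_{j,t}(e_{j,t}) = \hat{\omega}_{j,t+1} - \hat\partial[c_{j,t}(e_{j,t})]_+ q_{j,t}$ from \eqref{Algorithm1-eq2}. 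The resulting projection term $\langle \hat{\omega}_{j,t+1}, e_{j,t+1} - \hat{y}_t\rangle$ is bounded by Lemma~1(b) applied to the update \eqref{Algorithm1-eq3} (noting $\hat{y}_t \in (1-\xi_t)\mathbb{X} \subseteq (1-\xi_{t+1})\mathbb{X}$ since $\xi_t$ is non-increasing), giving $\frac{1}{2\alpha_{t+1}}[\|\hat{y}_t - z_{j,t+1}\|^2 - \|\hat{y}_t - e_{j,t+1}\|^2 - \|\varepsilon_{j,t}^e\|^2]$.

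The central obstacle is converting $\|\hat{y}_t - z_{j,t+1}\|^2 - \|\hat{y}_t - e_{j,t+1}\|^2$ into the telescope-ready quantity $\varepsilon_{j,t}^2$: by its definition, the difference equals $\varepsilon_{j,t}^2 + (\|\hat{y}_{t+1} - z_{j,t+2}\|^2 - \|\hat{y}_{t+1} - e_{j,t+1}\|^2)$. Averaging over agents and using Jensen's inequality on $\|\cdot\|^2$ together with the doubly-stochastic property of $W_{t+1}$ gives $\frac{1}{n}\sum_j \|\hat{y}_{t+1} - z_{j,t+2}\|^2 \le \frac{1}{n}\sum_k \|\hat{y}_{t+1} - e_{k,t+1}\|^2$, so the extraneous pair is nonpositive in the average and can be dropped. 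Finally, for the constraint inner product $-\langle \hat\partial[c_{j,t}(e_{j,t})]_+ q_{j,t}, e_{j,t+1} - \hat{y}_t\rangle$, split $e_{j,t+1} - \hat{y}_t = -\varepsilon_{j,t} + (e_{j,t} - \hat{y}_t)$; the $\varepsilon_{j,t}$ piece matches the $\langle \hat\partial[c_{i,t}(e_{i,t})]_+ q_{i,t}, \varepsilon_{i,t}\rangle$ term; for the other piece take $\mathbf{E}_{\mathfrak{U}_t}$ to produce $\partial[\hat{c}_{j,t}(e_{j,t})]_+ q_{j,t}$, apply convexity of $q_{j,t}^T[\hat{c}_{j,t}(\cdot)]_+$ (valid as $q_{j,t}\ge 0$), then use Lemma~2(e) followed by Lipschitz bounds to swap $\hat{c}_{j,t} \to c_{j,t}$, $\hat{y}_t \to y_t$, and $e_{j,t} \to x_{j,t}$. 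These swaps produce the $\varepsilon_{j,t}^1$ term and the penalty $F_2(\psi_t^1+\delta_t)\|q_{j,t}\|$. Summing all Lipschitz and smoothing constants collected along the way consolidates into $F_2(\psi_t^1+5\delta_t)$, completing the bound. The hardest step is precisely this telescoping reduction via the doubly stochastic average, which is where the one-point estimator's time-varying $\delta_t$ does not interfere with the structural per-iteration inequality.
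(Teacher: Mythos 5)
Your proposal is correct and follows essentially the same route as the paper's proof: the same chain of swaps through $\hat l_{i,t}$, $\hat y_t$, $e_{i,t}$, and $\bar e_t$, the same use of Lemma~1(b) on the projection step, the same insertion of $\|\hat y_{t+1}-z_{i,t+2}\|^2$ to form $\varepsilon_{i,t}^2$ with the extraneous pair killed by Jensen and double stochasticity of $W_{t+1}$, and the same splitting of the dual inner product into the $\varepsilon_{i,t}^1$ and $F_2(\psi_t^1+\delta_t)\|q_{i,t}\|$ pieces. The only deviations are cosmetic (you route the consensus error directly through $e_{j,t}$ rather than $x_{j,t}$, which in fact yields a slightly smaller additive constant than $F_2(\psi_t^1+5\delta_t)$, still consistent with the stated bound).
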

\begin{proof}
For all $i \in [ n ]$, $t \in {\mathbb{N}_ + }$, $x,y \in \mathbb{X}$, it follows from the third item in Assumption 1 that \par\nobreak\vspace{-10pt}
\begin{small}
\begin{subequations}
\begin{flalign}
| {{l_{i,t}}( x ) - {l_{i,t}}( y )} | &\le {F_2}\| {x - y} \|, \label{lemma5-proof-eq1a}\\
\| {{c_{i,t}}( x ) - {c_{i,t}}( y )} \| &\le {F_2}\| {x - y} \|. \label{lemma5-proof-eq1b}
\end{flalign}
\end{subequations}
\end{small}%

From ${q_{i,t}} \ge {\mathbf{0}_{{m_i}}}$, ${e_{i,t}},{{\hat y}_t} \in ( {1 - {\xi _t}} )\mathbb{X} \subseteq \mathbb{X}$, \eqref{ass-eq1}, \eqref{lemma2-eq1b}, \eqref{lemma2-eq1e}, \eqref{lemma5-proof-eq1a} and \eqref{lemma5-proof-eq1b}, we have\par\nobreak\vspace{-10pt}
\begin{small}
\begin{flalign}
&\;\;\;\;\; | {{l_{i,t}}( {{x_{i,t}}} ) - {l_{i,t}}( {{e_{i,t}}} )} | \le {F_2}\| {{x_{i,t}} - {e_{i,t}}} \| \le {F_2}{\delta _t}, \label{lemma5-proof-eq2}\\
&\;\;\;\;\; {l_{i,t}}( {{e_{i,t}}} ) \le {{\hat l}_{i,t}}( {{e_{i,t}}} ), \label{lemma5-proof-eq3}\\
&\;\;\;\;\; {{\hat l}_{i,t}}( {{{\hat y}_t}} ) - {l_{i,t}}( {{y_t}} )\le F_2\psi _t^1, \label{lemma5-proof-eq4}\\
&\;\;\;\;\; q_{i,t}^T{[ {{c_{i,t}}( {{e_{i,t}}} )} ]_ + } \le q_{i,t}^T{[ {{{\hat c}_{i,t}}( {{e_{i,t}}} )} ]_ + }, \label{lemma5-proof-eq5}\\
&\;\;\;\;\; \| {{c_{i,t}}( {{x_{i,t}}} ) - {c_{i,t}}( {{e_{i,t}}} )} \| \le {F_2}\| {{x_{i,t}} - {e_{i,t}}} \| \le {F_2}{\delta _t}, \label{lemma5-proof-eq6}\\
&\;\;\;\;\; q_{i,t}^T{[ {{{\hat c}_{i,t}}( {{{\hat y}_t}} )} ]_ + } \le q_{i,t}^T{[ {{c_{i,t}}( {{y_t}} )} ]_ + } + {F_2}\psi _t^1\| {{q_{i,t}}} \|. \label{lemma5-proof-eq7}
\end{flalign}
\end{small}%
In particular, we have \eqref{lemma5-proof-eq4} since
\begin{small}
\begin{flalign}
\nonumber
&\;\;\;\;\;{{\hat l}_{i,t}}( {{{\hat y}_t}} ) - {l_{i,t}}( {{y_t}} )\\
\nonumber
& = {l_{i,t}}( {{{\hat y}_t}} ) - {l_{i,t}}( {{y_t}} ) + {{\hat l}_{i,t}}( {{{\hat y}_t}} ) - {l_{i,t}}( {{{\hat y}_t}} )\\
\nonumber
& \le {F_2}\| {{{\hat y}_t} - {y_t}} \| + {{\hat l}_{i,t}}( {{{\hat y}_t}} ) - {l_{i,t}}( {{{\hat y}_t}} )\\
\nonumber
& \le {F_2}R( \mathbb{X} ){\xi _t} + {F_2}{\delta _t}  = {F_2}\psi _t^1,
\end{flalign}
\end{small}%
where the first inequality holds due to \eqref{lemma5-proof-eq1a}, and the last inequality holds due to \eqref{ass-eq1} and \eqref{lemma2-eq1b}.

We have \eqref{lemma5-proof-eq7} since
\begin{small}
\begin{flalign}
\nonumber
&\;\;\;\;\;q_{i,t}^T{[ {{{\hat c}_{i,t}}( {{{\hat y}_t}} )} ]_ + } \\
\nonumber
& \le q_{i,t}^T{[ {{c_{i,t}}( {{{\hat y}_t}} )} ]_ + } + {F_2}{\delta _t}\| {{q_{i,t}}} \|\\
\nonumber
& = q_{i,t}^T\big( {{{[ {{c_{i,t}}( {{y_t}} )} ]}_ + } + {{[ {{c_{i,t}}( {{{\hat y}_t}} )} ]}_ + } - {{[ {{c_{i,t}}( {{y_t}} )} ]}_ + }} \big) + {F_2}{\delta _t}\| {{q_{i,t}}} \|\\
\nonumber
&  \le q_{i,t}^T{[ {{c_{i,t}}( {{y_t}} )} ]_ + } + q_{i,t}^T\big( {{c_{i,t}}( {{{\hat y}_t}} ) - {c_{i,t}}( {{y_t}} )} \big) + {F_2}{\delta _t}\| {{q_{i,t}}} \| \\
\nonumber
& \le q_{i,t}^T{[ {{c_{i,t}}( {{y_t}} )} ]_ + } + {F_2}\| {{{\hat y}_t} - {y_t}} \|\| {{q_{i,t}}} \| + {F_2}{\delta _t}\| {{q_{i,t}}} \|\\
\nonumber
& \le q_{i,t}^T{[ {{c_{i,t}}( {{y_t}} )} ]_ + } + {F_2}( {R( \mathbb{X} ){\xi _t} + {\delta _t}} )\| {{q_{i,t}}} \|  \\
\nonumber
& = q_{i,t}^T{[ {{c_{i,t}}( {{y_t}} )} ]_ + } + {F_2}\psi _t^1\| {{q_{i,t}}} \|,
\end{flalign}
\end{small}%
where the first inequality holds due to \eqref{lemma2-eq1e}, the second inequality holds since the projection ${\left[  \cdot  \right]_ + }$ is nonexpansive, the third inequality holds due to \eqref{lemma5-proof-eq1b}, and the last inequality holds due to \eqref{ass-eq1}.

We have\par\nobreak\vspace{-10pt}
\begin{small}
\begin{flalign}
\nonumber
&\;\;\;\;\; \frac{1}{n}\sum\limits_{i = 1}^n {{l_t}( {{x_{i,t}}} )}  = \frac{1}{n}\sum\limits_{i = 1}^n {\Big( {\frac{1}{n}\sum\limits_{j = 1}^n {{l_{j,t}}( {{x_{i,t}}} )} } \Big)}\\
\nonumber
& = \frac{1}{n}\sum\limits_{i = 1}^n {{l_{i,t}}( {{x_{i,t}}} )}  + \frac{1}{{{n^2}}}\sum\limits_{i = 1}^n {\sum\limits_{j = 1}^n {\big( {{l_{j,t}}( {{x_{i,t}}} ) - {l_{j,t}}( {{x_{j,t}}} )} \big)} } \\
\nonumber
& \le \frac{1}{n}\sum\limits_{i = 1}^n {{l_{i,t}}( {{x_{i,t}}} )}  + \frac{1}{{{n^2}}}\sum\limits_{i = 1}^n {\sum\limits_{j = 1}^n {{F_2}\| {{x_{i,t}} - {x_{j,t}}} \|} } \\
\nonumber
& = \frac{1}{n}\sum\limits_{i = 1}^n {{l_{i,t}}( {{x_{i,t}}} )} \\
\nonumber
&\;\;\;\;\;
 + \frac{1}{{{n^2}}}\sum\limits_{i = 1}^n {\sum\limits_{j = 1}^n {{F_2}\| {{x_{i,t}} - {x_{j,t}} - \frac{1}{n}\sum\limits_{s = 1}^n {{x_{s,t}}} + \frac{1}{n}\sum\limits_{s = 1}^n {{x_{s,t}}}} \|} } \\
\nonumber
& \le \frac{1}{n}\sum\limits_{i = 1}^n {{l_{i,t}}( {{x_{i,t}}} )}  + \frac{{2{F_2}}}{n}\sum\limits_{i = 1}^n {\| {{x_{i,t}} - \frac{1}{n}\sum\limits_{j = 1}^n {{x_{j,t}}} } \|} \\
\nonumber
& \le \frac{1}{n}\sum\limits_{i = 1}^n {{{\hat l}_{i,t}}( {{e_{i,t}}} )}  \\
\nonumber
&\;\;\;\;\; + \frac{{2{F_2}}}{n}\sum\limits_{i = 1}^n {\| {{e_{i,t}} - {{\bar e}_t} + \frac{{{\delta _t}}}{n}\sum\limits_{j = 1}^n {( {{u_{i,t}} - {u_{j,t}}} )} } \|}  + {F_2}{\delta _t} \\
& \le \frac{1}{n}\sum\limits_{i = 1}^n {{{\hat l}_{i,t}}( {{e_{i,t}}} )}  + \frac{{2{F_2}}}{n}\sum\limits_{i = 1}^n {\| {{e_{i,t}} - {{\bar e}_t}} \|}  + 5{F_2}{\delta _t}, \label{lemma5-proof-eq8}
\end{flalign}
\end{small}%
where the first inequality holds due to \eqref{lemma5-proof-eq1a}, the third inequality holds due to \eqref{lemma2-eq1b}, ${e_{i,t}} \in ( {1 - {\xi _t}} )\mathbb{X}$, \eqref{Algorithm1-eq4}, and \eqref{lemma5-proof-eq2}, and the last inequality holds due to $\| {{u_{i,t}}} \| = 1$.

From ${e_{i,t}}$, ${\hat y_t} \in ( {1 - {\xi _t}} )\mathbb{X}$, we have\par\nobreak\vspace{-10pt}
\begin{small}
\begin{flalign}
\nonumber
&\;\;\;\;\; {{\hat l}_{i,t}}( {{e_{i,t}}} ) - {{\hat l}_{i,t}}( {{{\hat y}_t}} ) \\
\nonumber
&\le \langle {\partial {{\hat l}_{i,t}}( {{e_{i,t}}} ),{e_{i,t}} - {{\hat y}_t}} \rangle \\
\nonumber
&= \langle {{{\rm{\mathbf{E}}}_{{\mathfrak{U}_t}}}[ {\hat \partial {l_{i,t}}( {{e_{i,t}}} )} ],{e_{i,t}} - {{\hat y}_t}} \rangle \\
\nonumber
& = {{\rm{\mathbf{E}}}_{{\mathfrak{U}_t}}}[ {\langle {\hat \partial {l_{i,t}}( {{e_{i,t}}} ),{e_{i,t}} - {{\hat y}_t}} \rangle } ]\\
\nonumber
&  = {{\rm{\mathbf{E}}}_{{\mathfrak{U}_t}}}[ {\langle {\hat \partial {l_{i,t}}( {{e_{i,t}}} ),{\varepsilon _{i,t}}} \rangle } + \langle {\hat \partial {l_{i,t}}( {{e_{i,t}}} ),{e_{i,t + 1}} - {{\hat y}_t}} \rangle  ]\\
& \le {{\rm{\mathbf{E}}}_{{\mathfrak{U}_t}}}[ {\frac{{p{F_1}}}{{{\delta _t}}}\| \varepsilon _{i,t} \| + \langle {\hat \partial {l_{i,t}}( {{e_{i,t}}} ),{e_{i,t + 1}} - {{\hat y}_t}} \rangle } ], \label{lemma5-proof-eq9}
\end{flalign}
\end{small}%
where the first inequality holds since the function ${\hat l_{i,t}}( x )$ is convex for any $x \in ( {1 - {\xi _t}} )\mathbb{X}$, the first equality holds due to \eqref{lemma2-eq1a}, the second equality holds since $e_{i,t}$ and ${{\hat y}_t}$ are independent of ${\mathfrak{U}_t}$, and the last inequality holds due to \eqref{lemma2-eq1c}.

From \eqref{Algorithm1-eq2}, we have\par\nobreak\vspace{-10pt}
\begin{small}
\begin{flalign}
\nonumber
&\;\;\;\;\; \langle {\hat \partial {l_{i,t}}( {{e_{i,t}}} ),{e_{i,t + 1}} - {{\hat y}_t}} \rangle \\
\nonumber
& = \langle {{{\hat \omega }_{i,t + 1}},{e_{i,t + 1}} - {{\hat y}_t}} \rangle  + \langle {\hat \partial {{[ {{c_{i,t}}( {{e_{i,t}}} )} ]}_ + }{q_{i,t}},{{\hat y}_t} - {e_{i,t + 1}}} \rangle \\
\nonumber
& = \langle {\hat \partial {{[ {{c_{i,t}}( {{e_{i,t}}} )} ]}_ + }{q_{i,t}},{{\hat y}_t} - {e_{i,t}}} \rangle \\
&\;\;\;\;\; + \langle {\hat \partial {{[ {{c_{i,t}}( {{e_{i,t}}} )} ]}_ + }{q_{i,t}},\varepsilon _{i,t}} \rangle + \langle {{{\hat \omega }_{i,t + 1}},{e_{i,t + 1}} - {{\hat y}_t}} \rangle. \label{lemma5-proof-eq10}
\end{flalign}
\end{small}%
From ${e_{i,t}}$, ${\hat y_t} \in ( {1 - {\xi _t}} )\mathbb{X}$, we have\par\nobreak\vspace{-10pt}
\begin{small}
\begin{flalign}
\nonumber
&\;\;\;\;\; {{\rm{\mathbf{E}}}_{{\mathfrak{U}_t}}}\big[ {\langle {\hat \partial {{[ {{c_{i,t}}( {{e_{i,t}}} )} ]}_ + }{q_{i,t}},{{\hat y}_t} - {e_{i,t}}} \rangle } \big] \\
\nonumber
& = \big\langle {{{\rm{\mathbf{E}}}_{{\mathfrak{U}_t}}}\big[ {\hat \partial {{[ {{c_{i,t}}( {{e_{i,t}}} )} ]}_ + }} \big]{q_{i,t}},{{\hat y}_t} - {e_{i,t}}} \big\rangle \\
\nonumber
&= \langle {\partial {{[ {{{\hat c}_{i,t}}( {{e_{i,t}}} )} ]}_ + }{q_{i,t}},{{\hat y}_t} - {e_{i,t}}} \rangle \\
\nonumber
& \le q_{i,t}^T{[ {{{\hat c}_{i,t}}( {{{\hat y}_t}} )} ]_ + } - q_{i,t}^T{[ {{{\hat c}_{i,t}}( {{e_{i,t}}} )} ]_ + }\\
\nonumber
& \le q_{i,t}^T{[ {{c_{i,t}}( {{y_t}} )} ]_ + } - q_{i,t}^T{[ {{c_{i,t}}( {{e_{i,t}}} )} ]_ + } + {F_2}\psi _t^1\| {{q_{i,t}}} \| 
\end{flalign}
\end{small}%
\begin{small}
\begin{flalign}
\nonumber
& = q_{i,t}^T\big( {{{[ {{c_{i,t}}( {{y_t}} )} ]}_ + } - {{[ {{c_{i,t}}( {{e_{i,t}}} )} ]}_ + } + {{[ {{c_{i,t}}( {{x_{i,t}}} )} ]}_ + } - {{[ {{c_{i,t}}( {{x_{i,t}}} )} ]}_ + }} \big)\\
\nonumber
&\;\;\;\;\; + {F_2}\psi _t^1\| {{q_{i,t}}} \| \\
\nonumber
& \le \varepsilon _{i,t}^1 + q_{i,t}^T\big( {{c_{i,t}}( {{x_{i,t}}} ) - {c_{i,t}}( {{e_{i,t}}} )} \big) + {F_2}\psi _t^1\| {{q_{i,t}}} \|\\
& \le \varepsilon _{i,t}^1 + {F_2}( {\psi _t^1 + {\delta _t}} )\| {{q_{i,t}}} \|. \label{lemma5-proof-eq11}
\end{flalign}
\end{small}%
where the first equality holds since ${e_{i,t}}$, ${q_{i,t}}$, and ${\hat y_t}$ are independent of ${\mathfrak{U}_t}$, the second equality due to \eqref{lemma2-eq1d}, the first inequality holds since ${q_{i,t}} \ge {\mathbf{0}_{{m_i}}}$ and ${[ {{{\hat c}_{i,t}}( x )} ]_ + }$ is convex on $( {1 - {\xi _t}} )\mathbb{X}$, the second inequality holds due to \eqref{lemma5-proof-eq5} and \eqref{lemma5-proof-eq7}, the third inequality holds since the projection ${\left[  \cdot  \right]_ + }$ is nonexpansive, and the last inequality holds due to \eqref{lemma5-proof-eq6}.

Applying \eqref{lemma1-eq2} to the update \eqref{Algorithm1-eq3}, we get\par\nobreak\vspace{-10pt}
\begin{small}
\begin{flalign}
\nonumber
&\;\;\;\;\; \langle {{{\hat \omega }_{i,t + 1}},{e_{i,t + 1}} - {{\hat y}_t}} \rangle \\
\nonumber
& \le \frac{1}{{2{\alpha _{t + 1}}}}( {{{\| {{{\hat y}_t} - {z_{i,t + 1}}} \|}^2} - {{\| {{{\hat y}_t} - {e_{i,t + 1}}} \|}^2} - {{\| {\varepsilon _{i,t}^e} \|}^2}} ) \\
\nonumber
&  = \frac{1}{{2{\alpha _{t + 1}}}}( {{{\| {{{\hat y}_t} - {z_{i,t + 1}}} \|}^2} - {{\| {{{\hat y}_{t + 1}} - {z_{i,t + 2}}} \|}^2} - } {\| {\varepsilon _{i,t}^e} \|^2} \\
\nonumber
&\;\;\;\;\;  + {{\| {{{\hat y}_{t + 1}} - {z_{i,t + 2}}} \|}^2} - {{\| {{{\hat y}_t} - {e_{i,t + 1}}} \|}^2} ) \\
\nonumber
&  = \frac{1}{{2{\alpha _{t + 1}}}}( {{{\| {{{\hat y}_t} - {z_{i,t + 1}}} \|}^2} - {{\| {{{\hat y}_{t + 1}} - {z_{i,t + 2}}} \|}^2} - } {\| {\varepsilon _{i,t}^e} \|^2} \\
\nonumber
&\;\;\;\;\;  + {{\| {{{\hat y}_{t + 1}} - \sum\limits_{j = 1}^n {{{[ {{W_{t + 1}}} ]}_{ij}}{e_{j,t + 1}}} } \|^2}} - {{\| {{{\hat y}_t} - {e_{i,t + 1}}} \|}^2} ) \\
\nonumber
& \le \frac{1}{{2{\alpha _{t + 1}}}}( {{\| {{{\hat y}_t} - {z_{i,t + 1}}} \|}^2} - {{\| {{{\hat y}_{t + 1}} - {z_{i,t + 2}}} \|}^2} -  \| {\varepsilon _{i,t}^e} \|^2 \\
&\;\;\;\;\;  + \sum\limits_{j = 1}^n {{{[ {{W_{t + 1}}} ]}_{ij}}{{\| {{{\hat y}_{t + 1}} - {e_{j,t + 1}}} \|}^2}}  - {{\| {{{\hat y}_t} - {e_{i,t + 1}}} \|}^2} ), \label{lemma5-proof-eq12}
\end{flalign}
\end{small}%
where the second equality due to \eqref{Algorithm1-eq1}, and the last inequality holds since ${\left\|  \cdot  \right\|^2}$ is convex and ${W_{t + 1}}$ is doubly stochastic. 

From \eqref{lemma5-proof-eq4}, \eqref{lemma5-proof-eq8}--\eqref{lemma5-proof-eq12}, and noting that ${W_{t + 1}}$ is doubly stochastic, we have \eqref{lemma5-eq1}.
\end{proof}

Finally, we are ready to prove Theorem 1.

\begin{proof}
($\mathbf{i}$) Let us first establish the dynamic network regret bound.

It follows from \eqref{ass-eq1} and ${\hat y_t} = ( {1 - {\xi _t}} ){y_t}$ that\par\nobreak\vspace{-10pt}
\begin{small}
\begin{flalign}
\nonumber
&\;\;\;\;\; {\| {{{\hat y}_{t + 1}} - {e_{j,t + 1}}} \|^2} - {\| {{{\hat y}_t} - {e_{j,t + 1}}} \|^2} \\
\nonumber
& \le \| {{{\hat y}_{t + 1}} - {{\hat y}_t}} \|\| {{{\hat y}_{t + 1}} + {{\hat y}_t} - 2{e_{j,t + 1}}} \| \\
\nonumber
&  = 4R( \mathbb{X} )\| {( {1 - {\xi _{t + 1}}} )( {{y_{t + 1}} - {y_t}} ) + ( {{\xi _t} - {\xi _{t + 1}}} ){y_t}} \| \\
& \le 4R( \mathbb{X} )\| {{y_{t + 1}} - {y_t}} \| + 4R{( \mathbb{X} )^2}( {{\xi _t} - {\xi _{t + 1}}} ). \label{theorem1-proof-eq1}
\end{flalign}
\end{small}%
From \eqref{lemma2-eq1f}, we have\par\nobreak\vspace{-10pt}
\begin{small}
\begin{flalign}
\nonumber
\langle {\hat \partial {{[ {{c_{i,t}}( {{e_{i,t}}} )} ]}_ + }{q_{i,t}}, \varepsilon _{i,t}} \rangle
& \le \big\| {\hat \partial {{[ {{g_{i,t}}( {{e_{i,t}}} )} ]}_ + }} \big\|\big\| {{q_{i,t}}} \big\|\big\| \varepsilon _{i,t} \big\| \\
& \le \frac{{p{F_1}}}{{{\delta _t}}}\| {{q_{i,t}}} \| \| \varepsilon _{i,t} \|. \label{theorem1-proof-eq2}
\end{flalign}
\end{small}%
From \eqref{lemma4-eq1}, \eqref{lemma5-eq1}, \eqref{theorem1-proof-eq1}, \eqref{theorem1-proof-eq2}, and ${c_{i,t}}( {{y_t}} ) \le {\mathbf{0}_{{m_i}}}$, we have\par\nobreak\vspace{-10pt}
\begin{small}
\begin{flalign}
\nonumber
&\;\;\;\;\; \frac{1}{n}\sum\limits_{i = 1}^n {\big( {{\Delta _{i,t + 1}}( {{\mu _i}} ) + \mu _i^T{{[ {{c_{i,t}}( {{x_{i,t}}} )} ]}_ + } - \frac{1}{2}{\beta _{t + 1}}{{\| {{\mu _i}} \|}^2}} \big)} \\
\nonumber
&\;\;\;\;\; + \frac{1}{n}\sum\limits_{i = 1}^n {{l_t}( {{x_{i,t}}} )}  - {l_t}( {{y_t}} )\\
\nonumber
& \le 2F_1^2{\gamma _{t + 1}} + \frac{1}{n}\sum\limits_{i = 1}^n {{{\tilde \Delta }_{i,t + 1}}} 
\end{flalign}
\end{small}%
\begin{small}
\begin{flalign}
\nonumber
& + \frac{{2R( \mathbb{X} )}}{{{\alpha _{t + 1}}}}{\mathbf{E}_{{\mathfrak{U}_t}}}[ {\| {{y_{t + 1}} - {y_t}} \| + R( \mathbb{X} )( {{\xi _t} - {\xi _{t + 1}}} )} ]\\
& + \frac{1}{{2n{\alpha _{t + 1}}}}\sum\limits_{i = 1}^n {{\mathbf{E}_{{\mathfrak{U}_t}}}[ {{{\| {{{\hat y}_t} - {z_{i,t+1}}} \|}^2} - {{\| {{{\hat y}_{t + 1}} - {z_{i,t + 2}}} \|}^2}} ]}, \label{theorem1-proof-eq3}
\end{flalign}
\end{small}%
where\par\nobreak\vspace{-10pt}
\begin{small}
\begin{flalign}
\nonumber
&\;\;\;\;\;{{\tilde \Delta }_{i,t + 1}} \\
\nonumber
&= 2{F_2}\| {{e_{i,t}} - {{\bar e}_t}} \| + \frac{p{F_1}}{{{\delta _t}}}{\mathbf{E}_{{\mathfrak{U}_t}}}[ {\| \varepsilon _{i,t} \|} ] + \frac{p{F_1}}{{{\delta _t}}}{\mathbf{E}_{{\mathfrak{U}_t}}}[ {\| {{q_{i,t}}} \|\| \varepsilon _{i,t} \|} ]  \\
\nonumber
&\;\;\;\;\; + {F_2} ({\psi _t^1 + {\delta _t}}) \| {{q_{i,t}}} \|
  + {F_2}( \psi _t^1 + 5{\delta _t} ) - \frac{{{\mathbf{E}_{{\mathfrak{U}_t}}}[ {{{\| {\varepsilon _{i,t}^e} \|}^2}} ]}}{{2{\alpha _{t + 1}}}}.
\end{flalign}
\end{small}%

From $\{ {{\alpha _t}} \}$ is non-increasing and \eqref{ass-eq1}, we have\par\nobreak\vspace{-10pt}
\begin{small}
\begin{flalign}
\nonumber
&\;\;\;\;\; \sum\limits_{t = 1}^T {\frac{1}{{{\alpha _{t + 1}}}}( {{{\| {{{\hat y}_t} - {z_{i,t + 1}}} \|}^2} - {{\| {{{\hat y}_{t + 1}} - {z_{i,t + 2}}} \|}^2}} )} \\
\nonumber
&  = \sum\limits_{t = 1}^T {\big( {\frac{1}{{{\alpha _t}}}{{\| {{{\hat y}_t} - {z_{i,t + 1}}} \|}^2} - \frac{1}{{{\alpha _{t + 1}}}}{{\| {{{\hat y}_{t + 1}} - {z_{i,t + 2}}} \|}^2}} \big.} \\
\nonumber
&\;\;\;\;\;  + \big. {( {\frac{1}{{{\alpha _{t + 1}}}} - \frac{1}{{{\alpha _t}}}} ){{\| {{{\hat y}_t} - {z_{i,t + 1}}} \|}^2}} \big) \\
\nonumber
& \le \frac{1}{{{\alpha _1}}}{\left\| {{{\hat y}_1} - {z_{i,2}}} \right\|^2} - \frac{1}{{{\alpha _{T + 1}}}}{\left\| {{{\hat y}_{T + 1}} - {z_{i,T + 2}}} \right\|^2}\\
&  + \big( {\frac{1}{{{\alpha _{T + 1}}}} - \frac{1}{{{\alpha _1}}}} \big)4R{\left( \mathbb{X} \right)^2} 
\le \frac{{4R{{( \mathbb{X} )}^2}}}{{{\alpha _{T + 1}}}}. \label{theorem1-proof-eq4}
\end{flalign}
\end{small}%

Using \eqref{theorem1-proof-eq4}, selecting ${\mu _i} = {\mathbf{0}_{{m_i}}}$, and letting ${y_{T + 1}} = {y_T}$, summing \eqref{theorem1-proof-eq3} over $t \in [ T ]$, we have\par\nobreak\vspace{-10pt}
\begin{small}
\begin{flalign}
\nonumber
&\;\;\;\;\; \mathbf{E}[ {\rm{Net}\mbox{-}\rm{Reg}}( {\{ {{x_{i,t}}} \},{y_{[ T ]}}} ) + \frac{1}{n}\sum\limits_{t = 1}^T {\sum\limits_{i = 1}^n {{\Delta _{i,t + 1}}( {{\mathbf{0}_{{m_i}}}} )} }] \\
\nonumber
& \le \mathbf{E}[  2F_1^2\sum\limits_{t = 1}^T {{\gamma _{t + 1}}}   + \frac{1}{n}\sum\limits_{t = 1}^T {\sum\limits_{i = 1}^n {{{\tilde \Delta }_{i,t + 1}}} } \\
&\;\;\;\;\; + \frac{{2R{{( \mathbb{X} )}^2}}}{{{\alpha _{T + 1}}}} + \frac{{2R( \mathbb{X} )}}{{{\alpha _T}}}{P_T}
+  \sum\limits_{t = 1}^T {\frac{{2R{{( \mathbb{X} )}^2}( {{\xi _t} - {\xi _{t + 1}}} )}}{{{\alpha _{t + 1}}}}} ]. \label{theorem1-proof-eq5}
\end{flalign}
\end{small}%

Next, the lower bound for $\sum\nolimits_{t = 1}^T {\sum\nolimits_{i = 1}^n {\mathbf{E}[ {{\Delta _{i,t + 1}}( {{\mathbf{0}_{{m_i}}}} )} ]} } $ and the upper bound for $\sum\nolimits_{t = 1}^T {\sum\nolimits_{i = 1}^n {\mathbf{E}[ {{{\tilde \Delta }_{i,t + 1}}} ]} }$ are established to get \eqref{theorem1-eq2}, respectively.

($\mathbf{i}\mbox{-}\mathbf{1}$)
For any $T \in {\mathbb{N}_ + }$,  we have\par\nobreak\vspace{-10pt}
\begin{small}
\begin{flalign}
\sum\limits_{t = 1}^T {{\Delta _{i,t + 1}}( {{\mu _i}} )}
\nonumber
&= \frac{1}{2}\sum\limits_{t = 1}^T {\big( {\frac{{{{\| {\psi _{i,t+1}} \|}^2}}}{{{\gamma _{t + 1}}}} - \frac{{{{\| {\psi _{i,t}} \|}^2}}}{{{\gamma _t}}}} \big.} \\
\nonumber
&\;\;\;\;\; \big. { + \big( {\frac{1}{{{\gamma _t}}} - \frac{1}{{{\gamma _{t + 1}}}} + {\beta _{t + 1}}} \big){{\| {\psi _{i,t}} \|}^2}} \big)\\
&= \frac{{{{\| \psi _{i,T+1} \|}^2}}}{{2{\gamma _{T + 1}}}} - \frac{{{{\| {{\mu _i}} \|}^2}}}{{2{\gamma _1}}} + \sum\limits_{t = 1}^T {\frac{{\psi _t^2{{\| {{\psi _{i,t}}} \|}^2}}}{2}}. \label{theorem1-proof-eq6}
\end{flalign}
\end{small}%

Let ${\mu _i} = {\mathbf{0}_{{m_i}}}$. From \eqref{theorem1-proof-eq6}, we have\par\nobreak\vspace{-10pt}
\begin{small}
\begin{flalign}
\sum\limits_{t = 1}^T {\sum\limits_{i = 1}^n {\mathbf{E}[ {{\Delta _{i,t + 1}}( {{{\mathbf{0}}_{{m_i}}}} )} ]} }
\ge \frac{1}{{2}}\sum\limits_{t = 1}^T {\sum\limits_{i = 1}^n {\psi _t^2\mathbf{E}[ {{{\| {{q_{i,t}}} \|}^2}} ]} }. \label{theorem1-proof-eq7}
\end{flalign}
\end{small}%

($\mathbf{i}\mbox{-}\mathbf{2}$)
Let $\{ {{\kappa _t},t \in [ T ]} \}$ be user-defined sequence. We have\par\nobreak\vspace{-10pt}
\begin{small}
\begin{flalign}
\hspace{-8pt}
\sum\limits_{t = 1}^T {\sum\limits_{k = 1}^{t - 2} {\frac{{{\lambda ^{t - k - 2}}}}{{{\kappa _t}}}\sum\limits_{j = 1}^n {\| {\varepsilon _{j,k}^e} \|} } }  = \sum\limits_{t = 1}^{T - 2} {\sum\limits_{j = 1}^n {\| {\varepsilon _{j,t}^e} \|\sum\limits_{k = 0}^{T - t - 2} {\frac{{{\lambda ^k}}}{{{\kappa _{k + t + 2}}}}} } }. \label{theorem1-proof-eq8}
\end{flalign}
\end{small}%
From \eqref{lemma3-eq1}, \eqref{theorem1-proof-eq8} and $a > 0$, setting ${\kappa _t} = 1$ for $t \in [ T ]$ gives\par\nobreak\vspace{-10pt}
\begin{small}
\begin{flalign}
\nonumber
&\;\;\;\;\; \sum\limits_{t = 1}^T {\sum\limits_{i = 1}^n {2\| {{e_{i,t}} - {{\bar e}_t}} \|} } \\
\nonumber
& \le \sum\limits_{t = 1}^T {\sum\limits_{i = 1}^n {2{\varpi _1} {\lambda ^{t - 2}} } }  + \frac{1}{n}\sum\limits_{t = 2}^T {\sum\limits_{i = 1}^n {\sum\limits_{j = 1}^n {2\| {\varepsilon _{j,t - 1}^e} \|} } } \\
\nonumber
&  + \sum\limits_{t = 2}^T {\sum\limits_{i = 1}^n {2\left\| {\varepsilon _{i,t - 1}^e} \right\|} }  + \sum\limits_{t = 1}^{T - 2} {\sum\limits_{i = 1}^n {\sum\limits_{j = 1}^n {\frac{{2\tau }}{{1 - \lambda }}\left\| {\varepsilon _{j,t}^e} \right\|} } } \\
\nonumber
& \le 2n{\varpi _2} + \sum\limits_{t = 2}^T {\sum\limits_{i = 1}^n {4\| {\varepsilon _{i,t - 1}^e} \|} }  + \sum\limits_{t = 1}^{T - 2} {\sum\limits_{i = 1}^n {\frac{{2n\tau }}{{1 - \lambda }}\| {\varepsilon _{i,t}^e} \|} } \\
\nonumber
& \le 2n{\varpi _2}  + \sum\limits_{t = 2}^T {\sum\limits_{i = 1}^n {4\| {\varepsilon _{i,t - 1}^e} \|} }  + \sum\limits_{t = 2}^T {\sum\limits_{i = 1}^n {\frac{{2n\tau }}{{1 - \lambda }}\| {\varepsilon _{i,t - 1}^e} \|} } \\
\nonumber
& \le 2n{\varpi _2}  + \sum\limits_{t = 2}^T {\sum\limits_{i = 1}^n {( {4a{F_2}{\alpha _t} + \frac{1}{{a{F_2}{\alpha _t}}}{{\| {\varepsilon _{i,t - 1}^e} \|}^2}} )} } \\
\nonumber
&\;\;\;\;\;  + \sum\limits_{t = 2}^T {\sum\limits_{i = 1}^n {( {\frac{{a{n^2}{\tau ^2}{F_2}{\alpha _t}}}{{{{( {1 - \lambda } )}^2}}} + \frac{1}{{a{F_2}{\alpha _t}}}{{\| {\varepsilon _{i,t - 1}^e} \|}^2}} )} } \\
& \le 2n{\varpi _2} + \sum\limits_{t = 2}^T {\sum\limits_{i = 1}^n {( {2a{\varpi _3}{\alpha _t} + \frac{2}{{a{F_2}{\alpha _t}}}{{\| {\varepsilon _{i,t - 1}^e} \|}^2}} )} }.  \label{theorem1-proof-eq9}
\end{flalign}
\end{small}%
From \eqref{lemma3-eq1}, \eqref{theorem1-proof-eq8} and $a > 0$, setting ${\kappa _t} = {\delta _t}$ for $t \in [ T ]$ yields\par\nobreak\vspace{-10pt}
\begin{small}
\begin{flalign}
\nonumber
&\;\;\;\;\; \sum\limits_{t = 1}^T {\sum\limits_{i = 1}^n {\frac{p}{{{\delta _t}}}\| {{e_{i,t}} - {{\bar e}_t}} \|} } \\
\nonumber
& \le \sum\limits_{t = 1}^T {\sum\limits_{i = 1}^n {\frac{{p{\varpi _1} {\lambda ^{t - 2}}}}{{{\delta _t}}} } }  + \frac{1}{n}\sum\limits_{t = 2}^T {\sum\limits_{i = 1}^n {\sum\limits_{j = 1}^n {\frac{p}{{{\delta _t}}}\| {\varepsilon _{j,t - 1}^e} \|} } } \\
\nonumber
&\;\;\;\;\;  + \sum\limits_{t = 2}^T {\sum\limits_{i = 1}^n {\frac{p}{{{\delta _t}}}} } \| {\varepsilon _{i,t - 1}^e} \| + \sum\limits_{t = 1}^{T - 2} {\sum\limits_{i = 1}^n {\sum\limits_{j = 1}^n {p\tau \| {\varepsilon _{j,t}^e} \|\sum\limits_{k = 0}^{T - t - 2} {\frac{{{\lambda ^k}}}{{{\delta _{k + t + 2}}}}} } } } \\
\nonumber
&  = \sum\limits_{t = 1}^T {\frac{{np{\varpi _1} {\lambda ^{t - 2}}}}{{{\delta _t}}} }  + \sum\limits_{t = 2}^T {\sum\limits_{i = 1}^n {\frac{{2p}}{{{\delta _t}}}\| {\varepsilon _{i,t - 1}^e} \|} } \\
\nonumber
&\;\;\;\;\;  + \sum\limits_{t = 1}^{T - 2} {\sum\limits_{i = 1}^n {np\tau \| {\varepsilon _{i,t}^e} \|\sum\limits_{k = 0}^{T - t - 2} {\frac{{{\lambda ^k}}}{{{\delta _{k + t + 2}}}}} } } \\
\nonumber
& \le \sum\limits_{t = 1}^T {\frac{{np{\varpi _1}{\lambda ^{t - 2}}}}{{{\delta _t}}}} + \sum\limits_{t = 1}^{T - 2} {\sum\limits_{i = 1}^n {np\tau \| {\varepsilon _{i,t}^e} \|\sum\limits_{t = 0}^{T - t - 2} {\frac{{{\lambda ^k}}}{{{\delta _{k + t + 2}}}}} } } \\
&\;\;\;\;\;  + \sum\limits_{t = 2}^T {\sum\limits_{i = 1}^n {( {\frac{{a{p^2}{F_1}{\alpha _t}}}{{\delta _t^2}} + \frac{1}{{a{F_1}{\alpha _t}}}{{\| {\varepsilon _{i,t - 1}^e} \|}^2}} )} }. \label{theorem1-proof-eq10}
\end{flalign}
\end{small}%
From \eqref{lemma3-eq1}, \eqref{theorem1-proof-eq8} and $a > 0$, setting ${\kappa _t} = {\delta _t}{\beta _t}$ for $t \in [ T ]$ gives\par\nobreak\vspace{-10pt}
\begin{small}
\begin{flalign}
\nonumber
&\;\;\;\;\; \sum\limits_{t = 1}^T {\sum\limits_{i = 1}^n {\frac{p{F_1}}{{{\delta _t}{\beta _t}}}\| {{e_{i,t}} - {{\bar e}_t}} \|} } \\
\nonumber
& \le \sum\limits_{t = 1}^T {\sum\limits_{i = 1}^n {\frac{{p{F_1}{\varpi _1} {\lambda ^{t - 2}}}}{{{\delta _t}{\beta _t}}} } } + \frac{1}{n}\sum\limits_{t = 2}^T {\sum\limits_{i = 1}^n {\sum\limits_{j = 1}^n {\frac{p{F_1}}{{{\delta _t}{\beta _t}}}\| {\varepsilon _{j,t - 1}^e} \|} } }   \\
\nonumber
&\;\;\;\;\; + \sum\limits_{t = 2}^T {\sum\limits_{i = 1}^n {\frac{p{F_1}}{{{\delta _t}{\beta _t}}}} } \| {\varepsilon _{i,t - 1}^e} \| \\
\nonumber
&\;\;\;\;\;  + \sum\limits_{t = 1}^{T - 2} {\sum\limits_{i = 1}^n {\sum\limits_{j = 1}^n {p\tau {F_1}\| {\varepsilon _{j,t}^e} \|\sum\limits_{s = 0}^{T - t - 2} {\frac{{{\lambda ^s}}}{{{\delta _{s + t + 2}}{\beta _{s + t + 2}}}}} } } } 
\end{flalign}
\end{small}%
\begin{small}
\begin{flalign}
\nonumber
& \le \sum\limits_{t = 1}^T {\frac{{np{F_1}{\varpi _1}{\lambda ^{t - 2}}}}{{{\delta _t}{\beta _t}}}}
  + \sum\limits_{t = 1}^{T - 2} {\sum\limits_{i = 1}^n {np\tau {F_1}\| {\varepsilon _{i,t}^e} \|\sum\limits_{k = 0}^{T - t - 2} {\frac{{{\lambda ^k}}}{{{\delta _{k + t + 2}}{\beta _{k + t + 2}}}}} } } \\
\nonumber
&\;\;\;\;\;  + \sum\limits_{t = 2}^T {\sum\limits_{i = 1}^n {\frac{{2p{F_1}}}{{{\delta _t}{\beta _t}}}\| {\varepsilon _{i,t - 1}^e} \|} } \\
\nonumber
& \le \sum\limits_{t = 1}^T {\frac{{np{F_1}{\varpi _1}{\lambda ^{t - 2}}}}{{{\delta _t}{\beta _t}}}}
+ \sum\limits_{t = 1}^{T - 2} {\sum\limits_{i = 1}^n {np\tau {F_1}\| {\varepsilon _{i,t}^e} \|\sum\limits_{k = 0}^{T - t - 2} {\frac{{{\lambda ^k}}}{{{\delta _{k + t + 2}}{\beta _{k + t + 2}}}}} } } \\
&\;\;\;\;\;  + \sum\limits_{t = 2}^T {\sum\limits_{i = 1}^n {( {\frac{{a{p^2}F_1^3{\alpha _t}}}{{\delta _t^2\beta _t^2}} + \frac{1}{{a{F_1}{\alpha _t}}}{{\| {\varepsilon _{i,t - 1}^e} \|}^2}} )} }. \label{theorem1-proof-eq11}
\end{flalign}
\end{small}%
From $a > 0$, we have\par\nobreak\vspace{-10pt}
\begin{small}
\begin{flalign}
\nonumber
&\;\;\;\;\;\frac{p}{{{\delta _t}}}\| \varepsilon _{i,t} \| \\
\nonumber
& \le \frac{p}{{{\delta _t}}}\| {{e_{i,t}} - {z_{i,t + 1}}} \| + \frac{p}{{{\delta _t}}}\| {{z_{i,t + 1}} - {e_{i,t + 1}}} \|\\
& \le \frac{p}{{{\delta _t}}}\| {{e_{i,t}} - {z_{i,t + 1}}} \| + \frac{{a{p^2}{F_1}{\alpha _{t + 1}}}}{{4\delta _t^2}} + \frac{1}{{a{F_1}{\alpha _{t + 1}}}}{\| {\varepsilon _{i,t}^e} \|^2}. \label{theorem1-proof-eq12}
\end{flalign}
\end{small}%
From $a > 0$ and \eqref{lemma4-proof-eq1}, we have
\par\nobreak\vspace{-10pt}
\begin{small}
\begin{flalign}
\nonumber
&\;\;\;\;\; \frac{p}{{{\delta _t}}}\| {{q_{i,t}}} \|\| \varepsilon _{i,t} \| \\
\nonumber
&  \le \frac{p}{{{\delta _t}}}\| {{q_{i,t}}} \|\| {{e_{i,t}} - {z_{i,t + 1}}} \| + \frac{p}{{{\delta _t}}}\| {{q_{i,t}}} \|\| {{z_{i,t + 1}} - {e_{i,t + 1}}} \| \\
\nonumber
&  \le \frac{p}{{{\delta _t}}}\| {{q_{i,t}}} \|\| {{e_{i,t}} - {z_{i,t + 1}}} \| \\
\nonumber
&\;\;\;\;\;   + \frac{p}{{{\delta _t}}}\| {{u_i}} \|\| {{z_{i,t + 1}} - {e_{i,t + 1}}} \| + \frac{p}{{{\delta _t}}}\| {{q_{i,t}} - {u_i}} \|\| {{z_{i,t + 1}} - {e_{i,t + 1}}} \| \\
\nonumber
&  \le \frac{p{F_1}}{{{\delta _t}{\beta _t}}}\| {{e_{i,t}} - {z_{i,t + 1}}} \| + \frac{{a{p^2}{F_1}{\alpha _{t + 1}}}}{{2\delta _t^2}}{\| {{\mu _i}} \|^2} + \frac{1}{{2a{F_1}{\alpha _{t + 1}}}}{\| {\varepsilon _{i,t}^e} \|^2} \\
&\;\;\;\;\;   + \frac{{a{p^2}{F_1}{\alpha _{t + 1}}}}{{2\delta _t^2}}{\| \psi _{i,t} \|^2} + \frac{1}{{2a{F_1}{\alpha _{t + 1}}}}{\| {\varepsilon _{i,t}^e} \|^2}. \label{theorem1-proof-eq13}
\end{flalign}
\end{small}%
From \eqref{Algorithm1-eq1}, we have\par\nobreak\vspace{-10pt}
\begin{small}
\begin{flalign}
\nonumber
\sum\limits_{i = 1}^n {\| {{e_{i,t}} - {z_{i,t + 1}}} \|}
&\le \sum\limits_{i = 1}^n {( {\| {{e_{i,t}} - {{\bar e}_t}} \| + \| {{{\bar e}_t} - {z_{i,t + 1}}} \|} )} \\
\nonumber
&= \sum\limits_{i = 1}^n {( {\| {{e_{i,t}} - {{\bar e}_t}} \| + \| {{{\bar e}_t} - \sum\limits_{j = 1}^n {{{[ {{W_t}} ]}_{ij}}{e_{j,t}}}} \|} )} \\
\nonumber
& \le \sum\limits_{i = 1}^n {\| {{e_{i,t}} - {{\bar e}_t}} \|}  + \sum\limits_{i = 1}^n {\sum\limits_{j = 1}^n {{{[ {{W_t}} ]}_{ij}}\| {{{\bar e}_t} - {e_{j,t}}} \|} } \\
& \le 2\sum\limits_{i = 1}^n {\| {{e_{i,t}} - {{\bar e}_t}} \|}. \label{theorem1-proof-eq14}
\end{flalign}
\end{small}%
From \eqref{theorem1-proof-eq10}, \eqref{theorem1-proof-eq12}, \eqref{theorem1-proof-eq14} and $a > 0$, we have\par\nobreak\vspace{-10pt}
\begin{small}
\begin{flalign}
\nonumber
&\;\;\;\;\; \sum\limits_{t = 1}^T {\sum\limits_{i = 1}^n {\frac{p}{{{\delta _t}}}\| \varepsilon _{i,t} \|} } \\
\nonumber
&  \le \sum\limits_{t = 1}^T {\frac{{2np{\varpi _1}{\lambda ^{t - 2}}}}{{{\delta _t}}}} + \sum\limits_{t = 1}^{T - 2} {\sum\limits_{i = 1}^n {2np\tau \| {\varepsilon _{i,t}^e} \|\sum\limits_{k = 0}^{T - t - 2} {\frac{{{\lambda ^k}}}{{{\delta _{k + t + 2}}}}} } } \\
&\;\;\;\;\;   + \sum\limits_{t = 1}^T {\frac{{9an{p^2}{F_1}{\alpha _t}}}{{4\delta _t^2}}}  + \sum\limits_{t = 1}^T {\sum\limits_{i = 1}^n {\frac{3}{{a{F_1}{\alpha _{t + 1}}}}{{\| {\varepsilon _{i,t}^e} \|}^2}} }. \label{theorem1-proof-eq15}
\end{flalign}
\end{small}%
From \eqref{theorem1-proof-eq11}, \eqref{theorem1-proof-eq13}, \eqref{theorem1-proof-eq14} and $a > 0$, we have\par\nobreak\vspace{-10pt}
\begin{small}
\begin{flalign}
\nonumber
&\;\;\;\;\; \sum\limits_{t = 1}^T {\sum\limits_{i = 1}^n {\frac{p}{{{\delta _t}}}\| {{q_{i,t}}} \|\| \varepsilon _{i,t} \|} }  \le \sum\limits_{t = 1}^T {\frac{{2np{F_1}{\varpi _1}{\lambda ^{t - 2}}}}{{{\delta _t}{\beta _t}}}}\\
\nonumber
& \;\;\;\;\;+ \sum\limits_{t = 1}^T {\sum\limits_{i = 1}^n {\frac{3}{{a{F_1}{\alpha _{t + 1}}}}{{\| {\varepsilon _{i,t}^e} \|}^2}}}  
 + \sum\limits_{t = 1}^T {\frac{{2an{p^2}F_1^3{\alpha _t}}}{{\delta _t^2\beta _t^2}}} 
\end{flalign}
\end{small}%
\begin{small}
\begin{flalign}
\nonumber
&\;\;\;\;\;  + \sum\limits_{t = 1}^T {\sum\limits_{i = 1}^n {\frac{{a{p^2}{F_1}{\alpha _t}}}{{2\delta _t^2}}{{\| {{\mu_i}} \|}^2}} }  + \sum\limits_{t = 1}^T {\sum\limits_{i = 1}^n {\frac{{a{p^2}{F_1}{\alpha _t}}}{{2\delta _t^2}}{{\| \psi _{i,t} \|}^2}} } \\
&\;\;\;\;\;  + \sum\limits_{t = 1}^{T - 2} {\sum\limits_{i = 1}^n {2np\tau {F_1}\| {\varepsilon _{i,t}^e} \|\sum\limits_{k = 0}^{T - t - 2} {\frac{{{\lambda ^k}}}{{{\delta _{k + t + 2}}{\beta _{k + t + 2}}}}} } }. \label{theorem1-proof-eq16}
\end{flalign}
\end{small}%
From \eqref{lemma4-proof-eq1}, \eqref{theorem1-proof-eq9}, \eqref{theorem1-proof-eq15} and \eqref{theorem1-proof-eq16}, we have\par\nobreak\vspace{-10pt}
\begin{small}
\begin{flalign}
\nonumber
&\;\;\;\;\; \sum\limits_{t = 1}^T {\sum\limits_{i = 1}^n {\mathbf{E}[ {{{\tilde \Delta }_{i,t + 1}}} ]} } \\
\nonumber
& \le 2n{F_2}{\varpi _2} + \sum\limits_{t = 1}^T {\frac{{2np{F_1}{\varpi _1}{\lambda ^{t - 2}}}}{{{\delta _t}}}}  + \sum\limits_{t = 1}^T {\frac{{2np{F_1^2}{\varpi _1}{\lambda ^{t - 2}}}}{{{\delta _t}{\beta _t}}}} \\
\nonumber
&\;\;\;\;\;  + \sum\limits_{t = 1}^T {2an{F_2}{\varpi _3}{\alpha _t}}  + \sum\limits_{t = 1}^T {\frac{{9an{p^2}F_1^2{\alpha _t}}}{{4\delta _t^2}}} + \sum\limits_{t = 1}^T {\frac{{2an{p^2}F_1^4{\alpha _t}}}{{\delta _t^2\beta _t^2}}} \\
\nonumber
&\;\;\;\;\;  + \sum\limits_{t = 1}^T {\sum\limits_{i = 1}^n {\frac{{a{p^2}F_1^2{\alpha _t}}}{{2\delta _t^2}}{{\| {{\mu _i}} \|}^2}} }  + \sum\limits_{t = 1}^T {\sum\limits_{i = 1}^n {\frac{{a{p^2}F_1^2{\alpha _t}}}{{2\delta _t^2}}{{\| \psi _{i,t} \|}^2}} } \\
\nonumber
&\;\;\;\;\;  + \sum\limits_{t = 1}^{T - 2} {\sum\limits_{i = 1}^n {2np\tau {F_1}\| {\varepsilon _{i,t}^e} \|\sum\limits_{k = 0}^{T - t - 2} {\frac{{{\lambda ^k}}}{{{\delta _{k + t + 2}}}}} } } \\
\nonumber
&\;\;\;\;\;  + \sum\limits_{t = 1}^{T - 2} {\sum\limits_{i = 1}^n {2np\tau {F_1^2}\| {\varepsilon _{i,t}^e} \|\sum\limits_{k = 0}^{T - t - 2} {\frac{{{\lambda ^k}}}{{{\delta _{k + t + 2}}{\beta _{k + t + 2}}}}} } } \\
\nonumber
&\;\;\;\;\;  + n{F_1}{F_2}\sum\limits_{t = 1}^T \frac{{\psi _t^1 + {\delta _t}}}{{{\beta _t}}} + n{F_2}\sum\limits_{t = 1}^T {( \psi _t^1 + 5{\delta _t} )} \\
&\;\;\;\;\; + \sum\limits_{t = 1}^T {\sum\limits_{i = 1}^n {\frac{{16 - a}}{{2a{\alpha _{t + 1}}}}} } {\| {\varepsilon _{i,t}^e} \|^2}. \label{theorem1-proof-eq17}
\end{flalign}
\end{small}%

($\mathbf{i}\mbox{-}\mathbf{3}$)
Finally, we are ready to prove \eqref{theorem1-eq2}.

Combining \eqref{theorem1-proof-eq5}, \eqref{theorem1-proof-eq7}, and \eqref{theorem1-proof-eq17}, choosing ${\mu _i} = {\mathbf{0}_{{m_i}}}$, and noting that ${{\gamma _t}}$ is non-increasing, we have
\par\nobreak\vspace{-10pt}
\begin{small}
\begin{flalign}
\nonumber
&{\rm{\mathbf{E}}}[ {{\rm{Net}\mbox{-}\rm{Reg}}( {\{ {{x_{i,t}}} \},{y_{[ T ]}}} )} ]\\
\nonumber
& \le 2{F_2}{\varpi _2} + 2F_1^2\sum\limits_{t = 1}^T {{\gamma _t}}  + \sum\limits_{t = 1}^T {\frac{{2p{F_1}{\varpi _1}{\lambda ^{t - 2}}}}{{{\delta _t}}}}  + \sum\limits_{t = 1}^T {\frac{{2pF_1^2{\varpi _1}{\lambda ^{t - 2}}}}{{{\delta _t}{\beta _t}}}} \\
\nonumber
&  + \sum\limits_{t = 1}^T {2a{F_2}{\varpi_3}{\alpha _t}}  + \sum\limits_{t = 1}^T {\frac{{9a{p^2}F_1^2{\alpha _t}}}{{4\delta _t^2}}}  + \sum\limits_{t = 1}^T {\frac{{2a{p^2}F_1^4{\alpha _t}}}{{\delta _t^2\beta _t^2}}} \\
\nonumber
&  + \sum\limits_{t = 1}^{T - 2} {\sum\limits_{i = 1}^n {2p\tau {F_1}\| {\varepsilon _{i,t}^e} \|\sum\limits_{k = 0}^{T - t - 2} {\frac{{{\lambda ^k}}}{{{\delta _{k + t + 2}}}}} } } \\
\nonumber
&  + \sum\limits_{t = 1}^{T - 2} {\sum\limits_{i = 1}^n {2p\tau {F_1^2}\| {\varepsilon _{i,t}^e} \|\sum\limits_{k = 0}^{T - t - 2} {\frac{{{\lambda ^k}}}{{{\delta _{k + t + 2}}{\beta _{k + t + 2}}}}} } } \\
\nonumber
&  + {F_1}{F_2}\sum\limits_{t = 1}^T \frac{{\psi _t^1 + {\delta _t}}}{{{\beta _t}}} + {F_2}\sum\limits_{t = 1}^T {( \psi _t^1 + 5{\delta _t} )} \\
\nonumber
&  + \frac{1}{n}\sum\limits_{t = 1}^T {\sum\limits_{i = 1}^n {\frac{{16 - a}}{{2a{\alpha _{t + 1}}}}} } {\| {\varepsilon _{i,t}^e} \|^2} \\
\nonumber
& + \frac{{2R{{( \mathbb{X} )}^2}}}{{{\alpha _{T + 1}}}} + \frac{{2R{{( \mathbb{X} )}}}}{{{\alpha _T}}}{P_T} + \sum\limits_{t = 1}^T {\frac{{2R{{( \mathbb{X} )}^2}( {{\xi _t} - {\xi _{t + 1}}} )}}{{{\alpha _{t + 1}}}}}\\
& + \frac{1}{{2n}}\sum\limits_{t = 1}^T {\sum\limits_{i = 1}^n {( {\frac{{a{p^2}F_1^2{\alpha _t}}}{{\delta _t^2}} - \psi _t^2} )} } \mathbf{E}[ {{{\| {{q_{i,t}}} \|}^2}} ].
\label{theorem1-proof-eq17.5}
\end{flalign}
\end{small}%
It follows from (18) that\par\nobreak\vspace{-10pt}
\begin{small}
\begin{flalign}
\nonumber
&\;\;\;\;\; \sum\limits_{t = 1}^{T - 2} {\sum\limits_{i = 1}^n {2p\tau {F_1}\| {\varepsilon _{i,t}^e} \|\sum\limits_{k = 0}^{T - t - 2} {\frac{{{\lambda ^k}}}{{{\delta _{k + t + 2}}}}} } } \\
\nonumber
&  = \sum\limits_{t = 1}^{T - 2} {\sum\limits_{i = 1}^n {2p\tau {F_1}\| {\varepsilon _{i,t}^e} \|\sum\limits_{k = 0}^{T - t - 2} {\frac{{{{(k + t + 3)}^{{g _3}}}{\lambda ^k}}}{{r( \mathbb{X} )}}} } } \\
\nonumber
& \le \sum\limits_{t = 1}^{T - 2} {\sum\limits_{i = 1}^n {2p\tau {F_1}} } {( {T + 1} )^{{g_3}}}\| {\varepsilon _{i,t}^e} \|\sum\limits_{k = 0}^{T - t - 2} {\frac{{{\lambda ^k}}}{{r( \mathbb{X} )}}} \\
\nonumber
& \le \sum\limits_{t = 1}^{T - 2} {\sum\limits_{i = 1}^n {\frac{{2p\tau {F_1}{{( {T + 1} )}^{{g_3}}}}}{{( {1 - \lambda } )r( \mathbb{X} )}}} } \| {\varepsilon _{i,t}^e} \| \\
& \le \sum\limits_{t = 1}^{T - 2} {\sum\limits_{i = 1}^n {\big( {\frac{{an{p^2}{\tau ^2}F_1^2{{( {T + 1} )}^{2{g_3}}}{\alpha _t}}}{{{{( {1 - \lambda } )}^2}r{{( \mathbb{X} )}^2}}} + \frac{1}{{an{\alpha _t}}}{{\| {\varepsilon _{i,t}^e} \|}^2}} \big)} }, \label{theorem1-proof-eq18}
\end{flalign}
\end{small}%
and
\par\nobreak\vspace{-10pt}
\begin{small}
\begin{flalign}
\nonumber
&\;\;\;\;\; \sum\limits_{t = 1}^{T - 2} {\sum\limits_{i = 1}^n {2p\tau {F_1^2}\| {\varepsilon _{i,t}^e} \|\sum\limits_{k = 0}^{T - t - 2} {\frac{{{\lambda ^k}}}{{{\delta _{k + t + 2}}{\beta _{k + t + 2}}}}} } } \\
\nonumber
& \le \sum\limits_{t = 1}^{T - 2} {\sum\limits_{i = 1}^n {2p\tau {F_1^2}\| {\varepsilon _{i,t}^e} \|\sum\limits_{k = 0}^{T - t - 2} {\frac{{{{(k + t + 3)}^{{g _2} + {g _3}}}{\lambda ^k}}}{{2r( \mathbb{X} )}}} } } \\
\nonumber
& \le \sum\limits_{t = 1}^{T - 2} {\sum\limits_{i = 1}^n {2p\tau F_1^2} } {( {T + 1} )^{{g_2} + {g_3}}}\| {\varepsilon _{i,t}^e} \|\sum\limits_{k = 0}^{T - t - 2} {\frac{{{\lambda ^k}}}{{2r( \mathbb{X} )}}} \\
\nonumber
& \le \sum\limits_{t = 1}^{T - 2} {\sum\limits_{i = 1}^n {\frac{{p\tau F_1^2{{( {T + 1} )}^{{g_2} + {g_3}}}}}{{( {1 - \lambda } )r( \mathbb{X} )}}} } \| {\varepsilon _{i,t}^e} \| \\
& \le \sum\limits_{t = 1}^{T - 2} {\sum\limits_{i = 1}^n {\big( {\frac{{an{p^2}{\tau ^2}F_1^4{{( {T + 1} )}^{2{g_2} + 2{g_3}}}{\alpha _t}}}{{4{{( {1 - \lambda } )}^2}r{{( \mathbb{X} )}^2}}} + \frac{1}{{an{\alpha _t}}}{{\| {\varepsilon _{i,t}^e} \|}^2}} \big)} }. \label{theorem1-proof-eq19}
\end{flalign}
\end{small}%
For any $g  \in [ {0,1} )$ and $T \in {\mathbb{N}_ + }$, we have\par\nobreak\vspace{-10pt}
\begin{small}
\begin{flalign}
\sum\limits_{t = 1}^T {\frac{1}{{{t^g }}}}  \le \int_1^T {\frac{1}{{{t^g }}}} dt + 1 = \frac{{{T^{1 - g }} - g }}{{1 - g }} \le \frac{{{T^{1 - g }}}}{{1 - g }}. \label{theorem1-proof-eq20}
\end{flalign}
\end{small}%
According to Lemma~4 in \cite{Yi2021b}, for any constants $\theta  \in [ {0,1} ]$ and $g \in [ {0,1} )$, it holds that
\par\nobreak\vspace{-10pt}
\begin{small}
\begin{flalign}
{( {t + 1} )^g}( {\frac{1}{{{t^\theta }}} - \frac{1}{{{{( {t + 1} )}^\theta }}}} ) \le \frac{1}{t}, \forall t \in {\mathbb{N}_ + }. \label{theorem1-proof-eq20.1}
\end{flalign}
\end{small}%
For any $T \ge 3$, we have\par\nobreak\vspace{-10pt}
\begin{small}
\begin{flalign}
\sum\limits_{t = 1}^T {\frac{1}{t}}  \le 1 + \int_1^T {\frac{1}{t}} dt \le 2\log ( T ). \label{theorem1-proof-eq21}
\end{flalign}
\end{small}%
From \eqref{theorem1-eq1} and ${g _1} - 2{g _3} \ge {g _2}$, when $a \in ( {0,20} ]$, we have\par\nobreak\vspace{-10pt}
\begin{small}
\begin{flalign}
\nonumber
&\;\;\;\;\; \frac{{a{p^2}F_1^2{\alpha _t}}}{{\delta _t^2}} - \psi _t^2 \\
\nonumber
& = \frac{{a{p^2}F_1^2{\alpha _t}}}{{\delta _t^2}} + \frac{1}{{{\gamma _{t + 1}}}} - \frac{1}{{{\gamma _t}}} - {\beta _{t + 1}} \\
\nonumber
&  = \frac{a}{{20{{( {t + 1} )}^{{g _1} - 2{g _3}}}}} + \frac{{t + 1}}{{{{( {t + 1} )}^{{g _2}}}}} - \frac{t}{{{t^{{g _2}}}}} - \frac{2}{{{{( {t + 1} )}^{{g _2}}}}} \\
\nonumber
& \le \frac{a}{{20{{( {t + 1} )}^{{g _2}}}}} + \frac{{t + 1}}{{{{( {t + 1} )}^{{g _2}}}}} - \frac{t}{{{t^{{g _2}}}}} - \frac{2}{{{{( {t + 1} )}^{{g _2}}}}} \\
\nonumber
& \le \frac{1}{{{{( {t + 1} )}^{{g _2}}}}} + \frac{{t + 1}}{{{{( {t + 1} )}^{{g _2}}}}} - \frac{t}{{{t^{{g _2}}}}} - \frac{2}{{{{( {t + 1} )}^{{g _2}}}}} \\
& \le \frac{t}{{{{( {t + 1} )}^{{g _2}}}}} - \frac{t}{{{t^{{g _2}}}}} \le 0. \label{theorem1-proof-eq22}
\end{flalign}
\end{small}%
For any $g \in [ {0,1} ]$ and $T \in {\mathbb{N}_ + }$, we have
\begin{small}
\begin{flalign}
{( {T + 1} )^g} \le 2{T^g}. \label{theorem1-proof-eq22.1}
\end{flalign}
\end{small}%
Combining \eqref{theorem1-eq1}, \eqref{theorem1-proof-eq17.5}--\eqref{theorem1-proof-eq22.1}, and choosing $a = 20$ yields 
\par\nobreak\vspace{-10pt}
\begin{small}
\begin{flalign}
\nonumber
&{\mathbf{E}}[ {{\rm{Net}\mbox{-}\rm{Reg}}( {\{ {{x_{i,t}}} \},{y_{[T]}}} )} ]\\
\nonumber
& \le 2{F_2}{\varpi _2} + \frac{{2F_1^2{T^{{g _2}}}}}{{{g _2}}} + \frac{{4p{F_1}{\varpi _1}{T^{{g_3}}}}}{{\lambda ( {1 - \lambda } )r( \mathbb{X} )}} + \frac{{2pF_1^2{\varpi _1}{T^{{g_2} + {g_3}}}}}{{\lambda ( {1 - \lambda } )r( \mathbb{X} )}}\\
\nonumber
&  + \frac{{2{F_2}{\varpi _3}r{{( \mathbb{X} )}^2}{T^{1 - {g_1}}}}}{{{p^2}F_1^2( {1 - {g_1}} )}} + \frac{{9{T^{1 - {g_1} + {g_3}}}}}{{4( {1 - {g_1} + {g_3}} )}} + \frac{{F_1^2{T^{1 - {g_1} + 2{g_2} + 2{g_3}}}}}{{2( {1 - {g_1} + 2{g_2} + 2{g_3}} )}} \\
\nonumber
&  + \frac{{2{n^2}{\tau ^2}{T^{1 - {g_1} + 2{g_3}}}}}{{{{( {1 - \lambda } )}^2}( {1 - {g_1}} )}} + \frac{{{n^2}{\tau ^2}F_1^2{T^{1 - {g_1} + 2{g_2} + 2{g_3}}}}}{{2{{( {1 - \lambda } )}^2}( {1 - {g_1}} )}} \\
\nonumber
&  + \frac{{{F_1}{F_2}\big( {R( \mathbb{X} ) + 2r( \mathbb{X} )} \big){T^{1 + {g_2} - {g_3}}}}}{{2( {1 + {g_2} - {g_3}} )}} + \frac{{{F_2}\big( {R( \mathbb{X} ) + 6r( \mathbb{X} )} \big){T^{1 - {g_3}}}}}{{1 - {g_3}}}\\
\nonumber
&  + \frac{{320{p^2}F_1^2R{{( \mathbb{X} )}^2}{T^{{g_1}}}}}{{r{{( \mathbb{X} )}^2}}} + \frac{{80{p^2}F_1^2R{{( \mathbb{X} )}^2}{T^{{g_1}}}{P_T}}}{{r{{( \mathbb{X} )}^2}}} \\
&  + \frac{{80{p^2}F_1^2R{{( \mathbb{X} )}^2}\log ( T )}}{{r{{( \mathbb{X} )}^2}}}. \label{theorem1-proof-regret}
\end{flalign}
\end{small}%
Therefore, we know that \eqref{theorem1-eq2} holds.

($\mathbf{ii}$) Let us then establish the network CCV bound.

We have\par\nobreak\vspace{-10pt}
\begin{small}
\begin{flalign}
\nonumber
&\;\;\;\;\; \mu _i^T{[ {{c_{i,t}}( {{x_{i,t}}} )} ]_ + } \\
\nonumber
& = \mu _i^T{[ {{c_{i,t}}( {{x_{j,t}}} )} ]_ + } + \mu _i^T{[ {{c_{i,t}}( {{x_{i,t}}} )} ]_ + } - \mu _i^T{[ {{c_{i,t}}( {{x_{j,t}}} )} ]_ + } \\
\nonumber
& \ge \mu _i^T{[ {{c_{i,t}}( {{x_{j,t}}} )} ]_ + } - \| {{\mu _i}} \|\| {{{[ {{c_{i,t}}( {{x_{i,t}}} )} ]}_ + } - {{[ {{c_{i,t}}( {{x_{j,t}}} )} ]}_ + }} \| \\
\nonumber
& \ge \mu _i^T{[ {{c_{i,t}}( {{x_{j,t}}} )} ]_ + } - \| {{\mu _i}} \|\| {{c_{i,t}}( {{x_{i,t}}} ) - {c_{i,t}}( {{x_{j,t}}} )} \| \\
\nonumber
& \ge \mu _i^T{[ {{c_{i,t}}( {{x_{j,t}}} )} ]_ + } - {F_2}\| {{\mu _i}} \|\| {{x_{i,t}} - {x_{j,t}}} \| \\
\nonumber
& \ge \mu _i^T{[ {{c_{i,t}}( {{x_{j,t}}} )} ]_ + } \\
\nonumber
&\;\;\;\;\; - {F_2}\| {{\mu _i}} \|( \| {{x_{i,t}} - \frac{1}{n}\sum\limits_{m = 1}^n {{x_{m,t}}} } \|  + \| {{x_{j,t}} -  \frac{1}{n}\sum\limits_{m = 1}^n {{x_{m,t}}} } \| ) \\
\nonumber
& = \mu _i^T{[ {{c_{i,t}}( {{x_{j,t}}} )} ]_ + } - {F_2}\| {{\mu _i}} \|\| {{e_{i,t}} - {{\bar e}_t} + \frac{{{\delta _t}}}{n}\sum\limits_{m = 1}^n {( {{u_{i,t}} - {u_{m,t}}} )} } \| \\
\nonumber
&\;\;\;\;\; - {F_2}\| {{\mu _i}} \|\| {{e_{j,t}} - {{\bar e}_t} + \frac{{{\delta _t}}}{n}\sum\limits_{m = 1}^n {( {{u_{j,t}} - {u_{m,t}}} )} } \| \\
\nonumber
& \ge \mu _i^T{[ {{c_{i,t}}( {{x_{j,t}}} )} ]_ + } - {F_2}\| {{\mu _i}} \|( {\| {{e_{i,t}} - {{\bar e}_t}} \| + \| {{e_{j,t}} - {{\bar e}_t}} \|} ) \\
&\;\;\;\;\;- 4{F_2}{\delta _t}\| {{\mu _i}} \|. \label{theorem1-proof-eq23}
\end{flalign}
\end{small}%
where the second inequality since the projection ${[  \cdot  ]_ + }$ is nonexpansive, the third inequality due to \eqref{lemma5-proof-eq1b}, the last equality due to \eqref{Algorithm1-eq4}, and the last inequality due to $\| {{u_{i,t}}} \| = 1$.

Combining \eqref{theorem1-proof-eq3} and \eqref{theorem1-proof-eq23} with ${y_t} = y$, and adding up across $j \in [ n ]$ gives\par\nobreak\vspace{-10pt}
\begin{small}
\begin{flalign}
\nonumber
&\;\;\;\;\; \sum\limits_{i = 1}^n {\big( {{\Delta _{i,t + 1}}( {{\mu _i}} ) + \frac{1}{n}\sum\limits_{j = 1}^n {\mu _i^T{{[ {{c_{i,t}}( {{x_{j,t}}} )} ]}_ + }}  - \frac{1}{2}{\beta _{t + 1}}{{\| {{\mu _i}} \|}^2}} \big)} \\
\nonumber
&\;\;\;\;\; + \sum\limits_{i = 1}^n {{l_t}( {{x_{i,t}}} )}  - n{l_t}( y ) \\
\nonumber
& \le 2nF_1^2{\gamma _{t + 1}} + \sum\limits_{i = 1}^n {4n{F_2}{\delta _t}\| {{\mu _i}} \|} + \sum\limits_{i = 1}^n {{{\hat \Delta }_{i,t + 1}}( {{\mu _i}} ) + \frac{1}{n}} {\check{\Delta} _t}( {{\mu _i}} )\\
\nonumber
&\;\;\;\;\;  + \frac{1}{{2{\alpha _{t + 1}}}}\sum\limits_{i = 1}^n {{\mathbf{E}_{{\mathfrak{U}_t}}}[ {{{\| {{\hat y}_t - {z_{i,t + 1}}} \|}^2} - {{\| {{\hat y}_{t + 1} - {z_{i,t + 2}}} \|}^2}} ]} \\
&\;\;\;\;\; + \frac{{2nR{{( \mathbb{X} )}^2}}}{{{\alpha _{t + 1}}}}{\mathbf{E}_{{\mathfrak{U}_t}}}[ {( {{\xi _t} - {\xi _{t + 1}}} )} ], \label{theorem1-proof-eq24}
\end{flalign}
\end{small}%
where\par\nobreak\vspace{-10pt}
\begin{small}
\begin{flalign}
\nonumber
{\hat \Delta _{i,t + 1}}( {{\mu _i}} ) &= {F_2}\| {{\mu _i}} \|\| {{e_{i,t}} - {{\bar e}_t}} \| + {\tilde \Delta _{i,t + 1}},\\
\nonumber
{\check{\Delta} _t}( {{\mu _i}} ) &= \sum\limits_{i = 1}^n {\sum\limits_{j = 1}^n {{F_2}} } \| {{\mu _i}} \|\| {{e_{j,t}} - {{\bar e}_t}} \|.
\end{flalign}
\end{small}%

To obtain \eqref{theorem1-eq3}, next we derive the upper bounds for $\sum\nolimits_{t = 1}^T {\sum\nolimits_{i = 1}^n {\mathbf{E}[ {{{\hat \Delta }_{i,t + 1}}}( {{\mu _i}} ) ]} }$ and $\frac{1}{n}\sum\nolimits_{t = 1}^T {\mathbf{E}[ {{\check{\Delta} _t}}( {{\mu _i}} ) ]}$, respectively.

($\mathbf{ii}\mbox{-}\mathbf{1}$)
From \eqref{lemma3-eq1}, \eqref{theorem1-proof-eq8} and $a > 0$, setting ${\kappa _t} = 1$ for $t \in [ T ]$ gives\par\nobreak\vspace{-10pt}
\begin{small}
\begin{flalign}
\nonumber
&\;\;\;\;\; \sum\limits_{t = 1}^T {\sum\limits_{i = 1}^n {\| {{\mu _i}} \|\| {{e_{i,t}} - {{\bar e}_t}} \|} } \\
\nonumber
& \le \sum\limits_{t = 1}^T {\sum\limits_{i = 1}^n {{\varpi _1} {\lambda ^{t - 2}}\| {{\mu _i}} \| } }  + \frac{1}{n}\sum\limits_{t = 2}^T {\sum\limits_{i = 1}^n {\sum\limits_{j = 1}^n {\| {{\mu _i}} \|\| {\varepsilon _{j,t - 1}^e} \|} } } \\
\nonumber
&  + \sum\limits_{t = 2}^T {\sum\limits_{i = 1}^n {\| {{\mu _i}} \|\| {\varepsilon _{i,t - 1}^e} \|} }  + \sum\limits_{t = 1}^{T - 2} {\sum\limits_{i = 1}^n {\sum\limits_{j = 1}^n {\frac{\tau }{{1 - \lambda }}\| {{\mu _i}} \|\| {\varepsilon _{j,t}^e} \|} } } \\
\nonumber
& \le {\varpi _2}\sum\limits_{i = 1}^n {\| {{\mu _i}} \|} + \frac{1}{n}\sum\limits_{t = 2}^T {\sum\limits_{i = 1}^n {\sum\limits_{j = 1}^n {\| {\varepsilon _{i,t - 1}^e} \|\| {{\mu _j}} \|} } } \\
\nonumber
&\;\;\;\;\;  + \sum\limits_{t = 2}^T {\sum\limits_{i = 1}^n {\| {\varepsilon _{i,t - 1}^e} \|\| {{\mu _i}} \|} }  + \sum\limits_{t = 2}^T {\sum\limits_{i = 1}^n {\sum\limits_{j = 1}^n {\frac{\tau }{{1 - \lambda }}\| {\varepsilon _{i,t - 1}^e} \|\| {{\mu _j}} \|} } } \\
\nonumber
& \le {\varpi _2}\sum\limits_{i = 1}^n {\| {{\mu _i}} \|}\\
\nonumber
&\;\;\;\;\;  + \frac{1}{n}\sum\limits_{t = 2}^T {\sum\limits_{i = 1}^n {\sum\limits_{j = 1}^n {( {\frac{1}{{4a{F_2}{\alpha _t}}}{{\| {\varepsilon _{i,t - 1}^e} \|}^2} + a{F_2}{\alpha _t}{{\| {{\mu _j}} \|}^2}} )} } } \\
\nonumber
&\;\;\;\;\;  + \sum\limits_{t = 2}^T {\sum\limits_{i = 1}^n {( {\frac{1}{{4a{F_2}{\alpha _t}}}{{\| {\varepsilon _{i,t - 1}^e} \|}^2} + a{F_2}{\alpha _t}{{\| {{\mu _i}} \|}^2}} )} } \\
\nonumber
&\;\;\;\;\;  + \sum\limits_{t = 2}^T {\sum\limits_{i = 1}^n {\sum\limits_{j = 1}^n {( {\frac{1}{{2an{F_2}{\alpha _t}}}{{\| {\varepsilon _{i,t - 1}^e} \|}^2} + \frac{{an{\tau ^2}{F_2}{\alpha _t}}}{{2{{( {1 - \lambda } )}^2}}}{{\| {{\mu _j}} \|}^2}} )} } } \\
& \le {\varpi _2}\sum\limits_{i = 1}^n {\| {{\mu _i}} \|} + \sum\limits_{t = 2}^T {\sum\limits_{i = 1}^n {( {\frac{1}{{a{F_2}{\alpha _t}}}{{\| {\varepsilon _{i,t - 1}^e} \|}^2} + a{\varpi _3}{\alpha _t}{{\| {{\mu _i}} \|}^2}} )} }. \label{theorem1-proof-eq25}
\end{flalign}
\end{small}%
Combining \eqref{theorem1-proof-eq17} and \eqref{theorem1-proof-eq25} yields\par\nobreak\vspace{-10pt}
\begin{small}
\begin{flalign}
\nonumber
&\;\;\;\;\; \sum\limits_{t = 1}^T {\sum\limits_{i = 1}^n {{\bf{E}}[ {{{\hat \Delta }_{i,t + 1}}( {{\mu _i}} )} ]} } \\
\nonumber
& \le 2n{F_2}{\varpi _2} + {F_2}{\varpi _2}\sum\limits_{i = 1}^n {\| {{\mu _i}} \|}  + \sum\limits_{t = 1}^T {\frac{{2np{F_1}{\varpi _1}{\lambda ^{t - 2}}}}{{{\delta _t}}}} \\
\nonumber
&\;\;\;\;\;  + \sum\limits_{t = 1}^T {\frac{{2np{F_1^2}{\varpi _1}{\lambda ^{t - 2}}}}{{{\delta _t}{\beta _t}}}} + \sum\limits_{t = 1}^T {2an{F_2}{\varpi _3}{\alpha _t}}  + \sum\limits_{t = 1}^T {\frac{{9an{p^2}F_1^2{\alpha _t}}}{{4\delta _t^2}}} \\
\nonumber
&\;\;\;\;\; + \sum\limits_{t = 1}^T {\frac{{2an{p^2}F_1^4{\alpha _t}}}{{\delta _t^2\beta _t^2}}} + \sum\limits_{t = 1}^T {\sum\limits_{i = 1}^n {a{F_2}{\varpi _3}{\alpha _t}{{\| {{\mu _i}} \|}^2}} } \\
\nonumber
&\;\;\;\;\; + \sum\limits_{t = 1}^T {\sum\limits_{i = 1}^n {\frac{{a{p^2}F_1^2{\alpha _t}}}{{2\delta _t^2}}{{\| {{\mu _i}} \|}^2}} } + \sum\limits_{t = 1}^T {\sum\limits_{i = 1}^n {\frac{{a{p^2}F_1^2{\alpha _t}}}{{2\delta _t^2}}{{\| \psi _{i,t} \|}^2}} } \\
\nonumber
&\;\;\;\;\; + \sum\limits_{t = 1}^{T - 2} {\sum\limits_{i = 1}^n {2np\tau {F_1}\| {\varepsilon _{i,t}^e} \|\sum\limits_{k = 0}^{T - t - 2} {\frac{{{\lambda ^k}}}{{{\delta _{k + t + 2}}}}} } } 
\end{flalign}
\end{small}%
\begin{small}
\begin{flalign}
\nonumber
&  + \sum\limits_{t = 1}^{T - 2} {\sum\limits_{i = 1}^n {2np\tau {F_1^2}\| {\varepsilon _{i,t}^e} \|\sum\limits_{k = 0}^{T - t - 2} {\frac{{{\lambda ^k}}}{{{\delta _{k + t + 2}}{\beta _{k + t + 2}}}}} } } \\
\nonumber
& + n{F_1}{F_2}\sum\limits_{t = 1}^T \frac{{\psi _t^1 + {\delta _t}}}{{{\beta _t}}} + n{F_2}\sum\limits_{t = 1}^T {( \psi _t^1 + 5{\delta _t} )} \\
& + \sum\limits_{t = 1}^T {\sum\limits_{i = 1}^n {\frac{{18 - a}}{{2a{\alpha _{t + 1}}}}} } {\| {\varepsilon _{i,t}^e} \|^2}. \label{theorem1-proof-eq26}
\end{flalign}
\end{small}%

($\mathbf{ii}\mbox{-}\mathbf{2}$)
From \eqref{lemma3-eq1}, \eqref{theorem1-proof-eq8} and $a > 0$, for ${\mu _j} \in {\mathbb{R}^{{m_j}}}$, setting ${\kappa _t} = 1$ for $t \in [ T ]$ gives\par\nobreak\vspace{-10pt}
\begin{small}
\begin{flalign}
\nonumber
&\;\;\;\;\; \sum\limits_{t = 1}^T {\sum\limits_{i = 1}^n {\sum\limits_{j = 1}^n {\| {{\mu _i}} \|\| {{e_{j,t}} - {{\bar e}_t}} \|} } }
= \sum\limits_{t = 1}^T {\sum\limits_{i = 1}^n {\sum\limits_{j = 1}^n {\| {{e_{i,t}} - {{\bar e}_t}} \|\| {{\mu _j}} \|} } } \\
\nonumber
& \le n{\varpi _2}\sum\limits_{i = 1}^n {\| {{\mu _i}} \|}  + 2\sum\limits_{t = 2}^T {\sum\limits_{i = 1}^n {\sum\limits_{j = 1}^n {\| {\varepsilon _{i,t - 1}^e} \|} } } \| {{\mu _j}} \| \\
\nonumber
&\;\;\;\;\; + \sum\limits_{t = 1}^{T - 2} {\sum\limits_{i = 1}^n {\sum\limits_{j = 1}^n {\frac{{n\tau }}{{1 - \lambda }}\| {\varepsilon _{i,t}^e} \|\| {{\mu _j}} \|} } } \\
\nonumber
& \le n{\varpi _2}\sum\limits_{i = 1}^n {\| {{\mu _i}} \|} \\
\nonumber
&\;\;\;\;\; + 2\sum\limits_{t = 2}^T {\sum\limits_{i = 1}^n {\sum\limits_{j = 1}^n {\| {\varepsilon _{i,t - 1}^e} \|} } } \| {{\mu _j}} \| + \sum\limits_{t = 2}^T {\sum\limits_{i = 1}^n {\sum\limits_{j = 1}^n {\frac{{n\tau }}{{1 - \lambda }}\| {\varepsilon _{i,t - 1}^e} \|\| {{\mu _j}} \|} } } \\
\nonumber
& \le n{\varpi _2}\sum\limits_{i = 1}^n {\| {{\mu _i}} \|} \\
\nonumber
&\;\;\;\;\; + \sum\limits_{t = 2}^T {\sum\limits_{i = 1}^n {\sum\limits_{j = 1}^n {( {\frac{1}{{2a{F_2}{\alpha _t}}}{{\| {\varepsilon _{i,t - 1}^e} \|}^2} + 2a{F_2}{\alpha _t}{{\| {{\mu _j}} \|}^2}} )} } } \\
\nonumber
&\;\;\;\;\; + \sum\limits_{t = 2}^T {\sum\limits_{i = 1}^n {\sum\limits_{j = 1}^n {( {\frac{1}{{2a{F_2}{\alpha _t}}}{{\| {\varepsilon _{i,t - 1}^e} \|}^2} + \frac{{a{n^2}{\tau ^2}{F_2}{\alpha _t}}}{{2{{( {1 - \lambda } )}^2}}}{{\| {{\mu _j}} \|}^2}} )} } } \\
& \le n{\varpi _2}\sum\limits_{i = 1}^n {\| {{\mu _i}} \|} +\sum\limits_{t = 2}^T {\sum\limits_{i = 1}^n {( {an{\varpi _3}{\alpha _t}{{\| {{\mu _i}} \|}^2} + \frac{n}{{a{F_2}{\alpha _t}}}{{\| {\varepsilon _{i,t - 1}^e} \|}^2}} )} }. \label{theorem1-proof-eq27}
\end{flalign}
\end{small}%

From \eqref{theorem1-proof-eq27}, we have\par\nobreak\vspace{-10pt}
\begin{small}
\begin{flalign}
\nonumber
{\frac{1}{n}} \sum\limits_{t = 1}^T {{\check{\Delta} _t}( {{\mu _i}} )}
& \le {F_2}{\varpi _2}\sum\limits_{i = 1}^n {\| {{\mu _i}} \|} \\
&\;\;\;\;\; + \sum\limits_{t = 2}^T {\sum\limits_{i = 1}^n {( {a{F_2}{\varpi _3}{\alpha _t}{{\| {{\mu _i}} \|}^2} + \frac{1}{{a{\alpha _t}}}{{\| {\varepsilon _{i,t - 1}^e} \|}^2}} )} }. \label{theorem1-proof-eq28}
\end{flalign}
\end{small}%

($\mathbf{ii}\mbox{-}\mathbf{3}$)
Finally, we are ready to prove \eqref{theorem1-eq3}.

We have
\par\nobreak\vspace{-10pt}
\begin{small}
\begin{flalign}
\sum\limits_{t = 1}^T {\sum\limits_{i = 1}^n {4n{F_2}{\delta _t}\| {{\mu _i}} \|} }  \le \sum\limits_{t = 1}^T {\sum\limits_{i = 1}^n {( {4{n^2}F_2^2\delta _t + {{\| {{\mu _i}} \|}^2}\delta _t} )} }. \label{theorem1-proof-eq28.5}
\end{flalign}
\end{small}%

Define ${h_{ij}}:\mathbb{R}_ + ^{{m_i}} \to \mathbb{R}$ as\par\nobreak\vspace{-10pt}
\begin{small}
\begin{flalign}
\nonumber
{h_{ij}}( {{\mu _i}} ) &= \mu _i^T{\sum\limits_{t = 1}^T {[ {{c_{i,t}}( {{x_{j,t}}} )} ]} _ + }\\
\nonumber
& - \frac{1}{2}{\| {{\mu _i}} \|^2}\Big( {\frac{1}{{{\gamma _1}}} + \sum\limits_{t = 1}^T {( {2\delta _t + {\beta _t} + 4a{F_2}{\varpi _3}{\alpha _t} + \frac{{a{p^2}F_1^2{\alpha _t}}}{{\delta _t^2}}} )} } \Big).
\end{flalign}
\end{small}%

Then, using \eqref{theorem1-proof-eq4}, \eqref{theorem1-proof-eq6}, \eqref{theorem1-proof-eq26}, \eqref{theorem1-proof-eq28}, and \eqref{theorem1-proof-eq28.5}, noting that $\{ {{\beta _t}} \}$ and $\{ {{\gamma _t}} \}$ are non-increasing, and summing \eqref{theorem1-proof-eq24} over $t \in [ T ]$ gives\par\nobreak\vspace{-10pt}
\begin{small}
\begin{flalign}
\nonumber
&\;\;\;\;\; \frac{1}{n}\sum\limits_{i = 1}^n \sum\limits_{j = 1}^n \mathbf{E}[ {{{h_{ij}}( {{\mu _i}} )} }] 
+ n\mathbf{E}[ {{\rm{Net}\mbox{-}\rm{Reg}}( {\{ {{x_{i,t}}} \},\{ y \}} )} ]\\
\nonumber
& \le 2nF_1^2\sum\limits_{t = 1}^T {{\gamma _{t}}}  + 4n^2{F_2^2}\sum\limits_{t = 1}^T {{\delta _t}}  + 2n{F_2}{\varpi _2} + 2{F_2}{\varpi _2}\sum\limits_{i = 1}^n {\| {{\mu _i}} \|} \\
\nonumber
&\;\;\;\;\;  + \sum\limits_{t = 1}^T {\frac{{2np{F_1}{\varpi _1}{\lambda ^{t - 2}}}}{{{\delta _t}}}} + \sum\limits_{t = 1}^T {\frac{{2np{F_1^2}{\varpi _1}{\lambda ^{t - 2}}}}{{{\delta _t}{\beta _t}}}} \\
\nonumber
&\;\;\;\;\;  + \sum\limits_{t = 1}^T {2an{F_2}{\varpi _3}{\alpha _t}}  + \sum\limits_{t = 1}^T {\frac{{9an{p^2}F_1^2{\alpha _t}}}{{4\delta _t^2}}} + \sum\limits_{t = 1}^T {\frac{{2an{p^2}F_1^4{\alpha _t}}}{{\delta _t^2\beta _t^2}}}\\
\nonumber
&\;\;\;\;\;  + \frac{1}{{2}}\sum\limits_{t = 1}^T {\sum\limits_{i = 1}^n {( {\frac{{a{p^2}F_1^2{\alpha _t}}}{{\delta _t^2}} - \psi _t^2} )} }\mathbf{E}[ {{\| \psi _{i,t} \|}^2}] \\
\nonumber
&\;\;\;\;\;  + \sum\limits_{t = 1}^{T - 2} {\sum\limits_{i = 1}^n {2np\tau {F_1}\| {\varepsilon _{i,t}^e} \|\sum\limits_{k = 0}^{T - t - 2} {\frac{{{\lambda ^k}}}{{{\delta _{k + t + 2}}}}} } } \\
\nonumber
&\;\;\;\;\;  + \sum\limits_{t = 1}^{T - 2} {\sum\limits_{i = 1}^n {2np\tau {F_1^2}\| {\varepsilon _{i,t}^e} \|\sum\limits_{k = 0}^{T - t - 2} {\frac{{{\lambda ^k}}}{{{\delta _{k + t + 2}}{\beta _{k + t + 2}}}}} } } \\
\nonumber
&\;\;\;\;\;  + n{F_1}{F_2}\sum\limits_{t = 1}^T \frac{{\psi _t^1 + {\delta _t}}}{{{\beta _t}}} + n{F_2}\sum\limits_{t = 1}^T {( \psi _t^1 + 5{\delta _t} )} + \frac{{2nR{{( \mathbb{X} )}^2}}}{{{\alpha _{T + 1}}}}  \\
&\;\;\;\;\;  + \sum\limits_{t = 1}^T {\frac{{2nR{{( \mathbb{X} )}^2}}}{{{\alpha _{t + 1}}}}} {\mathbf{E}_{{\mathfrak{U}_t}}}( {{\xi _t} - {\xi _{t + 1}}} ) + \sum\limits_{t = 1}^T {\sum\limits_{i = 1}^n {\frac{{20 - a}}{{2a{\alpha _{t + 1}}}}} } {\| {\varepsilon _{i,t}^e} \|^2}. \label{theorem1-proof-eq29}
\end{flalign}
\end{small}%

Let $\Psi  = 1/{\gamma _1} + \sum\nolimits_{t = 1}^T {(2\delta _t+ {\beta _t} + 4a{F_2}{\varpi _3}{\alpha _t} + a{p^2}F_1^2{\alpha _t}/\delta _t^2)} $ and $\mu _{ij}^0 = \sum\nolimits_{t = 1}^T {{{[{c_{i,t}}({x_{j,t}})]}_ + }} /\Psi $.

We have\par\nobreak\vspace{-10pt}
\begin{small}
\begin{flalign}
{h_{ij}}( {\mu _{ij}^0} ) = \frac{{{{\| {\sum\nolimits_{t = 1}^T {{{[ {{c_{i,t}}( {{x_{j,t}}} )} ]}_ + }} } \|^2}}}}{{2\Psi}}. \label{theorem1-proof-eq30}
\end{flalign}
\end{small}%

From ${c_t}( x ) = {\rm{col}}\big( {{c_{1,t}}( x ),...,{c_{n,t}}( x )} \big)$, it holds that\par\nobreak\vspace{-10pt}
\begin{small}
\begin{flalign}
\mathbf{E}\Big[ {\sum\limits_{i = 1}^n {{\sum\limits_{j = 1}^n {\Big\| {\sum\limits_{t = 1}^T {{{[ {{c_{i,t}}( {{x_{j,t}}} )} ]}_ + }} } \Big\|^2}}} } \Big]
= \mathbf{E}\Big[ {\sum\limits_{j = 1}^n {{{\Big\| {\sum\limits_{t = 1}^T {{{[ {{c_t}( {{x_{j,t}}} )} ]}_ + }} } \Big\|^2}}} } \Big]. \label{theorem1-proof-eq31}
\end{flalign}
\end{small}%

From \eqref{ass-eq2a}, we have\par\nobreak\vspace{-10pt}
\begin{small}
\begin{flalign}
- \mathbf{E}[ {{\rm{Net}\mbox{-}\rm{Reg}}( {\{ {{x_{i,t}}} \},\{ y \}} )} ] \le 2{F_1}T. \label{theorem1-proof-eq32}
\end{flalign}
\end{small}%

From \eqref{ass-eq2b}, we have\par\nobreak\vspace{-10pt}
\begin{small}
\begin{flalign}
\| {\mu _{ij}^0} \| \le \frac{{{F_1}T}}{\Psi}. \label{theorem1-proof-eq33}
\end{flalign}
\end{small}%

We have\par\nobreak\vspace{-10pt}
\begin{small}
\begin{flalign}
\nonumber
&\;\;\;\;\; \mathbf{E}\Big[ {\sum\limits_{t = 1}^T {\| {{{[ {{c_t}( {{x_{j,t}}} )} ]}_ + }} \|} } \big] \le \mathbf{E}\big[ {\sum\limits_{t = 1}^T {{{\| {{{[ {{c_t}( {{x_{j,t}}} )} ]}_ + }} \|_1}}} } \Big]\\
& = \mathbf{E}\Big[ {{{\| {\sum\limits_{t = 1}^T {{{[ {{c_t}( {{x_{j,t}}} )} ]}_ + }} } \|_1}}} \big] \le \mathbf{E}\big[ {\sqrt m \| {\sum\limits_{t = 1}^T {{{[ {{c_t}( {{x_{j,t}}} )} ]}_ + }} } \|} \Big]. \label{theorem1-proof-eq34}
\end{flalign}
\end{small}%

From \eqref{theorem1-eq1} and \eqref{theorem1-proof-eq20}, we have\par\nobreak\vspace{-10pt}
\begin{small}
\begin{flalign}
\nonumber
& \sum\limits_{t = 1}^T {( {2\delta _t + {\beta _t} + 4a{F_2}{\varpi _3}{\alpha _t} + \frac{{a{p^2}F_1^2{\alpha _t}}}{{\delta _t^2}}} )} \le \frac{{2r( \mathbb{X} ){T^{1 - {g_3}}}}}{{1 - {g_3}}}\\
&  + \frac{{2{T^{1 - {g _2}}}}}{{1 - {g _2}}} + \frac{{a{F_2}{\varpi _3}r{{( \mathbb{X} )}^2}{T^{1 - {g_1}}}}}{{5{p^2}F_1^2( {1 - {g_1}} )}} + \frac{{a{T^{1 - {g _1} + 2{g _3}}}}}{20({1 - {g _1} + 2{g _3}})}. \label{theorem1-proof-eq35}
\end{flalign}
\end{small}%
Substituting ${\mu _i} = \mu _{ij}^0$ into \eqref{theorem1-proof-eq29}, combining \eqref{theorem1-proof-eq18}--\eqref{theorem1-proof-eq22.1}, \eqref{theorem1-proof-eq29}--\eqref{theorem1-proof-eq33} and \eqref{theorem1-proof-eq35}, and choosing $a = 20$ yields
\par\nobreak\vspace{-10pt}
\begin{small}
\begin{flalign}
\nonumber
&\mathbf{E}\Big[ {{{\Big( {\frac{1}{n}\sum\limits_{i = 1}^n {\| {\sum\limits_{t = 1}^T {{{[ {{c_t}( {{x_{i,t}}} )} ]}_ + }} } \|} } \Big)^2}}} \Big] \le \mathbf{E}\Big[ {\frac{1}{n}\sum\limits_{i = 1}^n {{{\| {\sum\limits_{t = 1}^T {{{[ {{c_t}( {{x_{i,t}}} )} ]}_ + }} } \|^2}}} } \Big]\\
\nonumber
& \le 4{F_1}{F_2}{\varpi _2}T + 2n\Big( 1 + \frac{{2r( \mathbb{X} ){T^{1 - {g_3}}}}}{{1 - {g_3}}} + \frac{{2{T^{1 - {g _2}}}}}{{1 - {g _2}}} \\
\nonumber
&  + \frac{{a{F_2}{\varpi _3}r{{( \mathbb{X} )}^2}{T^{1 - {g_1}}}}}{{5{p^2}F_1^2( {1 - {g_1}} )}} + \frac{{a{T^{1 - {g _1} + 2{g _3}}}}}{20({1 - {g _1} + 2{g _3}})} \Big) \Big( {2{F_1}T} + 2{F_2}{\varpi _2} \\
\nonumber
&  + \frac{{2F_1^2{T^{{g_2}}}}}{{{g_2}}}  + \frac{{4nF_2^2r( \mathbb{X} ){T^{1 - {g_3}}}}}{{1 - {g_3}}} 
+ \frac{{4p{F_1}{\varpi _1}{T^{{g_3}}}}}{{\lambda ( {1 - \lambda } )r( \mathbb{X} )}} + \frac{{2pF_1^2{\varpi _1}{T^{{g_2} + {g_3}}}}}{{\lambda ( {1 - \lambda } )r( \mathbb{X} )}}\\
\nonumber
&  + \frac{{2{F_2}{\varpi _3}r{{( \mathbb{X} )}^2}{T^{1 - {g_1}}}}}{{{p^2}F_1^2( {1 - {g_1}} )}} + \frac{{9{T^{1 - {g_1} + {g_3}}}}}{{4( {1 - {g_1} + {g_3}} )}} + \frac{{F_1^2{T^{1 - {g_1} + 2{g_2} + 2{g_3}}}}}{{2( {1 - {g_1} + 2{g_2} + 2{g_3}} )}}\\
\nonumber
&  + \frac{{2{n^2}{\tau ^2}{T^{1 - {g_1} + 2{g_3}}}}}{{{{( {1 - \lambda } )}^2}( {1 - {g_1}} )}} + \frac{{{n^2}{\tau ^2}F_1^2{T^{1 - {g_1} + 2{g_2} + 2{g_3}}}}}{{2{{( {1 - \lambda } )}^2}( {1 - {g_1}} )}} \\
\nonumber
&  + \frac{{{F_1}{F_2}\big( {R( \mathbb{X} ) + 2r( \mathbb{X} )} \big){T^{1 + {g_2} - {g_3}}}}}{{2( {1 + {g_2} - {g_3}} )}} + \frac{{{F_2}\big( {R( \mathbb{X} ) + 6r( \mathbb{X} )} \big){T^{1 - {g_3}}}}}{{1 - {g_3}}}\\
&  + \frac{{320{p^2}F_1^2R{{( \mathbb{X} )}^2}{T^{{g_1}}}}}{{r{{( \mathbb{X} )}^2}}} + \frac{{80{p^2}F_1^2R{{( \mathbb{X} )}^2}\log ( T )}}{{r{{( \mathbb{X} )}^2}}}\Big). \label{theorem1-proof-CCV}
\end{flalign}
\end{small}%
From \eqref{theorem1-proof-eq34} and \eqref{theorem1-proof-CCV}, we know that \eqref{theorem1-eq3} holds.

\subsection{The Proof of Theorem 2}
Denote ${\varpi _4} = {( {20{p^2}F_1^2/r{{( \mathbb{X} )}^2}\mu } )^{1/( {1 - {g _1}} )}}$ and ${\varpi _5} = \lceil {{\varpi _4} } \rceil -1 $, where $\lceil  \cdot  \rceil $ is the ceiling function.

From Lemma 3 in \cite{Yi2023}, we know that ${{\hat l}_{i,t}}$ is strongly convex the parameter $\mu $ since ${ l}_{i,t}$ is strongly convex the parameter $\mu $. 
Then, from \eqref{def-eq1}, \eqref{lemma5-proof-eq9} can be replaced by\par\nobreak\vspace{-10pt}
\begin{small}
\begin{flalign}
\nonumber
&\;\;\;\;\; {{\hat l}_{i,t}}( {{e_{i,t}}} ) - {{\hat l}_{i,t}}( {{{\hat y}_t}} )\\
& \le {\mathbf{E}_{{\mathfrak{U}_t}}}[ \frac{{p{F_1}}}{{{\delta _t}}}\| {\varepsilon _{i,t}} \| + \langle {\hat \partial {l_{i,t}}( {{e_{i,t}}} ),{e_{i,t + 1}} - {{\hat y}_t}} \rangle - \frac{\mu }{2}\| {{{\hat y}_t} - {e_{i,t}}} \|^2]. \label{theorem2-proof-eq1}
\end{flalign}
\end{small}%
Then, \eqref{theorem1-proof-eq4} can be replaced by\par\nobreak\vspace{-10pt}
\begin{small}
\begin{flalign}
\nonumber
&\frac{1}{n}\sum\limits_{i = 1}^n {\big( {\frac{1}{{{\alpha _{t + 1}}}}( {{{\| {{{\hat y}_t} - {z_{i,t + 1}}} \|}^2} - {{\| {{{\hat y}_{t + 1}} - {z_{i,t + 2}}} \|}^2}} ) - \mu {{\| {{{\hat y}_t} - {e_{i,t}}} \|}^2}} \big)}\\
\nonumber
& = \frac{1}{n}\sum\limits_{i = 1}^n \Big( {\frac{1}{{{\alpha _t}}}{{\| {{{\hat y}_t} - {z_{i,t + 1}}} \|}^2} - \frac{1}{{{\alpha _{t + 1}}}}{{\| {{{\hat y}_{t + 1}} - {z_{i,t + 2}}} \|}^2}}  \\
\nonumber
&\;\;\;\;\; + ( {\frac{1}{{{\alpha _{t + 1}}}} - \frac{1}{{{\alpha _t}}}} ){{\| {{{\hat y}_t} - \sum\limits_{j = 1}^n {{{[ {{W_t}} ]}_{ij}}{e_{j,t}}} } \|^2}} - \mu {{\| {{{\hat y}_t} - {e_{i,t}}} \|}^2} \Big)\\
\nonumber
& \le \frac{1}{n}\sum\limits_{i = 1}^n \big( {\frac{1}{{{\alpha _t}}}{{\| {{{\hat y}_t} - {z_{i,t + 1}}} \|}^2} - \frac{1}{{{\alpha _{t + 1}}}}{{\| {{{\hat y}_{t + 1}} - {z_{i,t + 2}}} \|}^2}}  \\
&\;\;\;\;\;  + ( {\frac{1}{{{\alpha _{t + 1}}}} - \frac{1}{{{\alpha _t}}} - \mu } ){{\| {{{\hat y}_t} - {e_{i,t}}} \|}^2} \big), \label{theorem2-proof-eq2}
\end{flalign}
\end{small}%
where the equality holds due to \eqref{Algorithm1-eq1}, and the inequality holds since $\sum\nolimits_{i = 1}^n {{{[ {{W_t}} ]}_{ij}}}  = \sum\nolimits_{j = 1}^n {{{[ {{W_t}} ]}_{ij}}}  = 1$ and ${\|  \cdot  \|^2}$ is convex.

From \eqref{ass-eq1}, we have
\par\nobreak\vspace{-10pt}
\begin{small}
\begin{flalign}
\nonumber
&\;\;\;\;\;\frac{1}{n}\sum\limits_{i = 1}^n {\sum\limits_{t = 1}^T {( {\frac{1}{{{\alpha _t}}}{{\| {{{\hat y}_t} - {z_{i,t + 1}}} \|}^2} - \frac{1}{{{\alpha _{t + 1}}}}{{\| {{{\hat y}_{t + 1}} - {z_{i,t + 2}}} \|}^2}} )} }\\
&  \le \frac{1}{n}\sum\limits_{i = 1}^n {\frac{1}{{{\alpha _1}}}} {\| {{{\hat y}_1} - {z_{i,2}}} \|^2} \le \frac{{4R{{( \mathbb{X} )}^2}}}{{{\alpha _1}}}. \label{theorem2-proof-eq2.1}
\end{flalign}
\end{small}%
When $t \ge {\varpi _5}$, it holds that\par\nobreak\vspace{-10pt}
\begin{small}
\begin{flalign}
\nonumber
&\;\;\;\;\; \frac{1}{{{\alpha _{t + 1}}}} - \frac{1}{{{\alpha _t}}} - \mu \\
\nonumber
& = \frac{{20{p^2}F_1^2( {t + 2} )}}{r{{( \mathbb{X} )}^2}{{{( {t + 2} )}^{1 - {g _1}}}}} - \frac{{20{p^2}F_1^2( {t + 1} )}}{r{{( \mathbb{X} )}^2}{{{( {t + 1} )}^{1 - {g _1}}}}} - \mu \\
& \le \frac{{20{p^2}F_1^2}}{{r{{( \mathbb{X} )}^2}{({t+1})^{1 - {g _1}}}}} - \mu  \le 0. \label{theorem2-proof-eq3}
\end{flalign}
\end{small}%
From \eqref{ass-eq1} and \eqref{theorem1-eq1}, we have\par\nobreak\vspace{-10pt}
\begin{small}
\begin{flalign}
\nonumber
&\;\;\;\;\;\sum\limits_{t = 1}^{{\varpi _5 - 1}} {( {\frac{1}{{{\alpha _{t + 1}}}} - \frac{1}{{{\alpha _t}}} - \mu } )} {\| {{\hat y}_t - {e_{i,t}}} \|^2} \\
&\le 4( {{\varpi _5} - 1} ){[ {\frac{{20{p^2}F_1^2}}{{r{{( \mathbb{X} )}^2}}} - \mu } ]_ + }R{( \mathbb{X} )^2}. \label{theorem2-proof-eq4}
\end{flalign}
\end{small}%

Similar to the way to get \eqref{theorem1-proof-regret} and \eqref{theorem1-proof-CCV}, from \eqref{theorem1-eq1}, and \eqref{theorem2-proof-eq2}--\eqref{theorem2-proof-eq4}, we have 
\par\nobreak\vspace{-10pt}
\begin{small}
\begin{flalign}
\nonumber
&{\mathbf{E}}[ {{\rm{Net}\mbox{-}\rm{Reg}}( {\{ {{x_{i,t}}} \},{y_{[T]}}} )} ]\\
\nonumber
& \le 2{F_2}{\varpi _2} + \frac{{2F_1^2{T^{{g _2}}}}}{{{g _2}}} + \frac{{4p{F_1}{\varpi _1}{T^{{g_3}}}}}{{\lambda ( {1 - \lambda } )r( \mathbb{X} )}} + \frac{{2pF_1^2{\varpi _1}{T^{{g_2} + {g_3}}}}}{{\lambda ( {1 - \lambda } )r( \mathbb{X} )}}\\
\nonumber
&  + \frac{{2{F_2}{\varpi _3}r{{( \mathbb{X} )}^2}{T^{1 - {g_1}}}}}{{{p^2}F_1^2( {1 - {g_1}} )}} + \frac{{9{T^{1 - {g_1} + {g_3}}}}}{{4( {1 - {g_1} + {g_3}} )}} + \frac{{F_1^2{T^{1 - {g_1} + 2{g_2} + 2{g_3}}}}}{{2( {1 - {g_1} + 2{g_2} + 2{g_3}} )}} \\
\nonumber
&  + \frac{{2{n^2}{\tau ^2}{T^{1 - {g_1} + 2{g_3}}}}}{{{{( {1 - \lambda } )}^2}( {1 - {g_1}} )}} + \frac{{{n^2}{\tau ^2}F_1^2{T^{1 - {g_1} + 2{g_2} + 2{g_3}}}}}{{2{{( {1 - \lambda } )}^2}( {1 - {g_1}} )}} \\
\nonumber
&  + \frac{{{F_1}{F_2}\big( {R( \mathbb{X} ) + 2r( \mathbb{X} )} \big){T^{1 + {g_2} - {g_3}}}}}{{2( {1 + {g_2} - {g_3}} )}} + \frac{{{F_2}\big( {R( \mathbb{X} ) + 6r( \mathbb{X} )} \big){T^{1 - {g_3}}}}}{{1 - {g_3}}}\\
\nonumber
&  + \frac{{160{p^2}F_1^2R{{( \mathbb{X} )}^2}}}{{r{{( \mathbb{X} )}^2}}} + 4( {{\varpi _5} - 1} ){[ {\frac{{20{p^2}F_1^2}}{{r{{( \mathbb{X} )}^2}}} - \mu } ]_ + }R{( \mathbb{X} )^2}\\
&  + \frac{{80{p^2}F_1^2R{{( \mathbb{X} )}^2}{T^{{g_1}}}{P_T}}}{{r{{( \mathbb{X} )}^2}}}  + \frac{{80{p^2}F_1^2R{{( \mathbb{X} )}^2}\log ( T )}}{{r{{( \mathbb{X} )}^2}}}. \label{theorem2-proof-regret}
\end{flalign}
\end{small}%
and
\par\nobreak\vspace{-10pt}
\begin{small}
\begin{flalign}
\nonumber
&\mathbf{E}\Big[ {{{\Big( {\frac{1}{n}\sum\limits_{i = 1}^n {\| {\sum\limits_{t = 1}^T {{{[ {{c_t}( {{x_{i,t}}} )} ]}_ + }} } \|} } \Big)^2}}} \Big] \\
\nonumber
& \le 4{F_1}{F_2}{\varpi _2}T + 2n\Big( 1 + \frac{{2r( \mathbb{X} ){T^{1 - {g_3}}}}}{{1 - {g_3}}} + \frac{{2{T^{1 - {g _2}}}}}{{1 - {g _2}}} \\
\nonumber
&  + \frac{{a{F_2}{\varpi _3}r{{( \mathbb{X} )}^2}{T^{1 - {g_1}}}}}{{5{p^2}F_1^2( {1 - {g_1}} )}} + \frac{{a{T^{1 - {g _1} + 2{g _3}}}}}{20({1 - {g _1} + 2{g _3}})} \Big) \Big( {2{F_1}T} + 2{F_2}{\varpi _2} \\
\nonumber
&  + \frac{{2F_1^2{T^{{g_2}}}}}{{{g_2}}}  + \frac{{4nF_2^2r( \mathbb{X} ){T^{1 - {g_3}}}}}{{1 - {g_3}}} 
+ \frac{{4p{F_1}{\varpi _1}{T^{{g_3}}}}}{{\lambda ( {1 - \lambda } )r( \mathbb{X} )}} + \frac{{2pF_1^2{\varpi _1}{T^{{g_2} + {g_3}}}}}{{\lambda ( {1 - \lambda } )r( \mathbb{X} )}}\\
\nonumber
&  + \frac{{2{F_2}{\varpi _3}r{{( \mathbb{X} )}^2}{T^{1 - {g_1}}}}}{{{p^2}F_1^2( {1 - {g_1}} )}} + \frac{{9{T^{1 - {g_1} + {g_3}}}}}{{4( {1 - {g_1} + {g_3}} )}} + \frac{{F_1^2{T^{1 - {g_1} + 2{g_2} + 2{g_3}}}}}{{2( {1 - {g_1} + 2{g_2} + 2{g_3}} )}}\\
\nonumber
&  + \frac{{2{n^2}{\tau ^2}{T^{1 - {g_1} + 2{g_3}}}}}{{{{( {1 - \lambda } )}^2}( {1 - {g_1}} )}} + \frac{{{n^2}{\tau ^2}F_1^2{T^{1 - {g_1} + 2{g_2} + 2{g_3}}}}}{{2{{( {1 - \lambda } )}^2}( {1 - {g_1}} )}} \\
\nonumber
&  + \frac{{{F_1}{F_2}\big( {R( \mathbb{X} ) + 2r( \mathbb{X} )} \big){T^{1 + {g_2} - {g_3}}}}}{{2( {1 + {g_2} - {g_3}} )}} + \frac{{{F_2}\big( {R( \mathbb{X} ) + 6r( \mathbb{X} )} \big){T^{1 - {g_3}}}}}{{1 - {g_3}}}\\
\nonumber
&  + \frac{{160{p^2}F_1^2R{{( \mathbb{X} )}^2}}}{{r{{( \mathbb{X} )}^2}}} + 4( {{\varpi _5} - 1} ){[ {\frac{{20{p^2}F_1^2}}{{r{{( \mathbb{X} )}^2}}} - \mu } ]_ + }R{( \mathbb{X} )^2} \\
&  + \frac{{80{p^2}F_1^2R{{( \mathbb{X} )}^2}\log ( T )}}{{r{{( \mathbb{X} )}^2}}}\Big). \label{theorem2-proof-CCV}
\end{flalign}
\end{small}%
From \eqref{theorem2-proof-regret} and \eqref{theorem2-proof-CCV}, we know that \eqref{theorem2-eq1} and \eqref{theorem2-eq2} hold.
\end{proof}

\bibliographystyle{IEEEtran}
\bibliography{reference_online}

\end{document}